\renewcommand{\algocf@captiontext}[2]{#1\algocf@typo. \AlCapFnt{}#2} 
\def\@algocf@capt@plain{top}
\renewcommand{\algocf@makecaption}[2]{%
  \addtolength{\hsize}{\algomargin}%
  \sbox\@tempboxa{\algocf@captiontext{#1}{#2}}%
  \ifdim\wd\@tempboxa >\hsize
    \hskip .5\algomargin%
    \parbox[t]{\hsize}{\algocf@captiontext{#1}{#2}}
  \else%
    \global\@minipagefalse%
    \hbox to\hsize{\box\@tempboxa}
  \fi%
  \addtolength{\hsize}{-\algomargin}%
}
\def\df{\textsc{df}}
\def\snr{\textsc{snr}}
\def\mse{\textsc{mse}}
\def\RHS{\textsc{rhs}}
\def\LHS{\textsc{lhs}}
\newcommand{\bs}[1]{\boldsymbol{#1}}
\newcommand{\wt}[1]{\widetilde{#1}}
\newcommand{\wh}[1]{\widehat{#1}}
\newcommand{\e}{\varepsilon}
\newcommand{\K}{\tilde{K}}
\newcommand{\Om}{\Omega}
\newcommand{\up}{\Omega}
\newcommand{\hd}{\wh{\Delta}}
\newcommand{\de}{\delta}
\newtheorem{lemma}{Lemma}
\newtheorem{theorem}{Theorem}
\newtheorem{proposition}{Proposition}
\newtheorem{definition}{Definition}
\newtheorem{corollary}{Corollary}
\begin{document}

%
%
\title{Nonparametric regression with adaptive truncation via a convex hierarchical penalty}

\author[1]{Asad Haris}
\author[2]{Ali Shojaie}
\author[3]{Noah Simon}
\affil[1,2,3]{Department of Biostatistics, University of Washington-Seattle}

\maketitle

\begin{abstract}
	We consider the problem of nonparametric regression with a potentially large number of covariates. We propose a convex, penalized estimation framework that is particularly well-suited for high-dimensional sparse additive models and combines appealing features of finite basis representation and smoothing penalties. In the case of additive models, a finite basis representation provides a parsimonious representation for fitted functions but is not adaptive when component functions posses different levels of complexity. In contrast, a smoothing spline type penalty on the component functions is adaptive but does not provide a parsimonious representation. Our proposal simultaneously achieves parsimony and adaptivity in a computationally efficient way. We demonstrate these properties through empirical studies and show that our estimator converges at the minimax rate for functions within a hierarchical class. We further establish minimax rates for a large class of sparse additive models. We also develop an efficient algorithm that scales similarly to the lasso with the number of covariates and sample size.
\end{abstract}

\textbf{Keywords --}	 {\small Additive model;  High-dimensional data; Minimax estimation; Nonparametric regression; Sparsity.}


\section{Introduction and motivation}
\label{sec:introduction}
Consider univariate nonparametric function estimation from observations $\{(x_i,y_i) \in \mathbb{R}^2: i = 1,\ldots, n\}$. Assume that $y_i = f(x_i) + \e_i$ ($i=1,\ldots, n$), where $\e_i$ are independent, mean zero, sub-Gaussian random variables. There are many proposals for estimating $f$, including local polynomials~\citep{stone1977consistent}, kernel smoothing~\citep{nadaraya1964estimating,watson1964smooth}, and splines~\citep{wahba1990spline}. To begin, we focus on basis expansion estimators, also known as projection estimators~\citep{chentsov1962evaluation}, which are widely used, and arguably the simplest. 

Let ${y} = (y_1,\ldots,y_n)^{\top}\in \mathbb{R}^n$ and ${x} = (x_1,\ldots, x_n)^{\top} \in \mathbb{R}^n$ be the response and covariate vectors. 
For ${v}\in \mathbb{R}^n$, let $\|{v}\|_n^2 = n^{-1}\sum_{i=1}^{n}v_i^2$ be a modified $\ell_2$-norm, referred to as the {empirical norm}.
Projection estimators are solutions to linear regression problems based on a set of basis functions $(\psi_k)_{k=1}^{\infty}$, with a truncation level $K$. 
More specifically, let ${\Psi}_K\in \mathbb{R}^{n\times K}$ be the $n \times K$ matrix with entries ${\Psi}_{K(i,k)}= \psi_k(x_i)$ ($k = 1,\ldots, K; i = 1, \ldots, n$). 
The basis expansion estimate of $f$ is given by $\wh{f} = \sum_{k = 1}^K\wh{\beta}^{\text{proj}}_k \psi_k$, where 
\begin{equation}
\label{eqn::projest}
\wh{{\beta}}^{\text{proj}} = \underset{{\beta} \in \mathbb{R}^K}{\operatorname{argmin }}\   \frac{1}{2}\left\|{y} - {\Psi}_{K}{\beta}\right\|_n^2. 
\end{equation}

To asymptotically balance bias and variance, $K \equiv K_n$ is allowed to vary with $n$. Unfortunately, choosing the    {truncation} level $K$ can be difficult in practice; it depends on the variance of $\e_i$, properties of $f$ such as smoothness, and the choice of basis functions. Usually, $K$ is chosen via split-sample validation. The challenge of tuning $K$ becomes more evident in multivariate problems, which we describe next, and is one of our main motivations.  

Multivariate additive models \citep{hastie2009elements} easily follow from projection estimators, where each ${x}_i = \left(x_{i1},\ldots,x_{ip}\right)^{\top}$ is now a $p$-vector, and the true underlying model is believed to be of the form
$y_i = \sum_{j=1}^p f_j\left(x_{ij}\right) +\e_i\ (i=1,\ldots,n).$
The components $f_j$ of this model can be estimated by using a basis expansion for each $j$ and solving
\begin{equation}\label{eqn::add}
\wh{{\beta}}^{\text{A-proj}}_1,\ldots,\wh{{\beta}}^{\text{A-proj}}_p  = \underset{ {\beta}_j \in \mathbb{R}^{K_j} }{\operatorname{argmin}}\  \frac{1}{2}\Big\|y
- \sum_{j=1}^p{\Psi}^{j}_{K_j}\beta_j\Big\|_n^2, 
\end{equation}
where ${\Psi}^{j}_{K_j}$ are $K_j$ basis functions for feature $j$ and $f_j$ is estimated as $\wh{f}_j = \sum_{k=1}^{K_j} \wh{\beta}^{\text{A-proj} }_{jk} \psi_k$. 

%

In practice, the same truncation level is used for each feature, $K_j \equiv K$, to reduce the number of tuning parameters. When $f_j$ have widely different complexities, this leads to poor estimates. This limitation, which is illustrated by the simulation in Figure~\ref{fig::smallSim}, becomes more hindering in higher dimensions, as $p$ increases. 
In high-dimensional problems, when $p \gg n$, it is often assumed that $f_j \equiv 0$ for many components. A popular choice is then to add a sparsity-inducing penalty to the basis expansion framework \citep{ravikumar2009sparse} and solve
\begin{equation}\label{eqn::spam}
\wh{\beta}^{\text{SPAM}}_1,\ldots,\wh{\beta}^{\text{SPAM}}_p  = \underset{ \beta_j \in \mathbb{R}^{K} }{\operatorname{argmin}}\  \frac{1}{2}\Big\|y - \sum_{j=1}^p  {\Psi}^{j}_{K}\beta_j\Big\|_n^2 + \lambda \sum_{j=1}^p \Big\| {\Psi}^{j}_{K} \beta_j\Big\|_n.
\end{equation}

In this manuscript, we propose a penalized estimation framework that penalizes function complexity, simultaneously selects the truncation level, and can be used to fit both univariate and multivariate additive models with or without sparsity. We present an extension for fully nonparametric multivariate settings, as well as a relaxed version, similar to the relaxed lasso \citep{meinshausen2007relaxed}, which reduces the bias. While our univariate proposal is similar to \eqref{eqn::projest}, our additive proposal data-adaptively selects the truncation level for each feature, $f_j$. This improves the prediction accuracy and provides  parsimonious estimates of ${f}_j$. We illustrate these advantages in a small simulation study with data $y_i = f_1(x_{i1})+f_2(x_{i2})+\e_i\ (i=1,\ldots,n)$ using $f_1,f_2$ shown in Figure~\ref{fig::smallSim}. We fit \eqref{eqn::add} using $K_j\equiv K$ selected to optimize mean square error, and compare it to our relaxed proposal. Tuning parameters for our method also minimizes the mean square error. The results in Figure~\ref{fig::smallSim} clearly demonstrate the superior performance of our method, which has lower mean square error while maintaining parsimony. In particular, we estimate the linear term, $f_1$, by a linear function, whereas~\eqref{eqn::add} uses an order 9 polynomial.

\begin{figure}
	\centering
	\includegraphics[scale = 0.4]{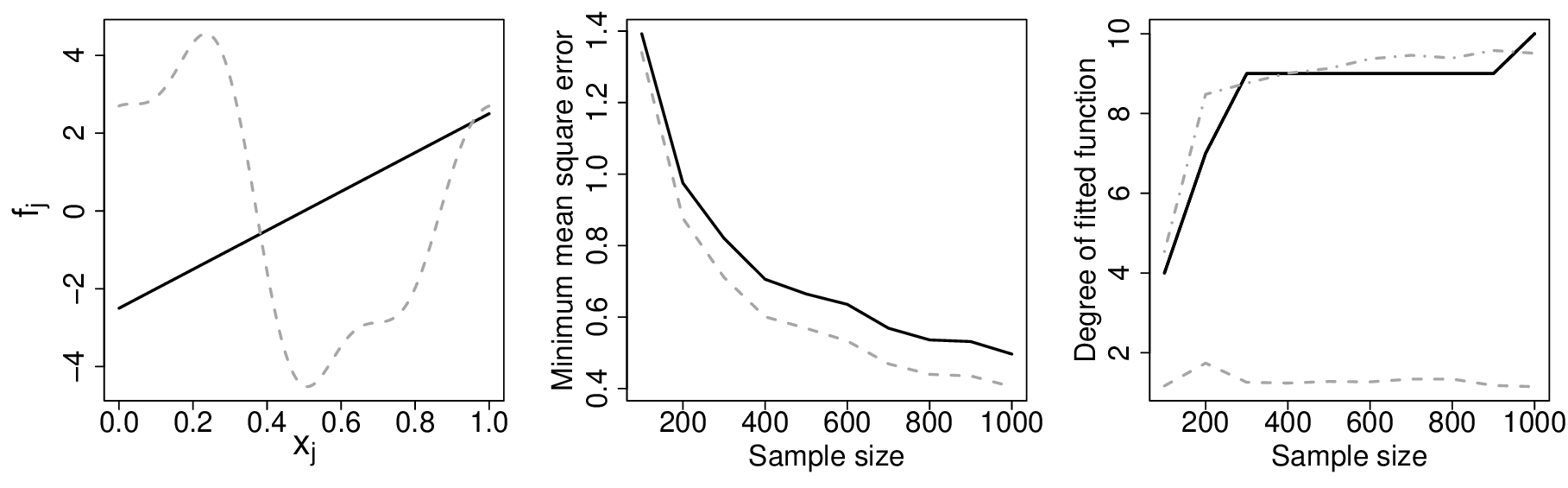}
	\caption{Left: Plots of component functions $f_1$~(\protect\includegraphics[height=0.5em]{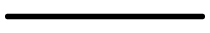}) and  $f_2$~(\protect\includegraphics[height=0.5em]{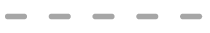}). Middle: Minimum mean square errors as functions of $n$ for our proposal~(\protect\includegraphics[height=0.5em]{NA_gray65_2.eps}) and \eqref{eqn::add}~(\protect\includegraphics[height=0.5em]{NA_gray0_1.eps}). Right: Degrees of fitted polynomial as functions of $n$: for our proposal, $\wh{f}_1$~(\protect\includegraphics[height=0.5em]{NA_gray65_2.eps}) and $\wh{f}_2$~(\protect\includegraphics[height=0.5em]{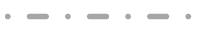}) are shown; for \eqref{eqn::add} both component functions have the same degree~(\protect\includegraphics[height=0.5em]{NA_gray0_1.eps}). All results are averaged over 100 replications.}
	\label{fig::smallSim}
\end{figure}    


In addition to adaptability and parsimony, our proposal is computationally efficient and can work with thousands of observations and features. Moreover, its estimates attain minimax optimal rates under standard smoothness assumptions, for univariate, multivariate, and sparse additive models.
The univariate estimator converges at rate $ n^{-{2m}/{(2m+1)}}$ where $m$ is the degree of smoothness; similarly, the multivariate estimator attains the rate $ n^{-{2m}/{(2m +p)}} $. For sparse additive models, under a suitable compatibility condition, our estimator converges at rate $\max\left\{ sn^{-{2m}/{(2m+1)}}, {s\log p}/{n} \right\} ,$ where $s$ is the number of non-zero $f_j$; even without the compatibility condition, consistency is achieved at the 
rate $ \max\left\{ sn^{-{m}/{(2m+1)}}, s({\log p}/{n})^{1/2} \right\}$.

\section{Methodology}
\label{sec:Methodology}
\subsection{Motivation for adaptive truncation}


Our proposal is motivated by the need to select the truncation levels in a data-driven manner.
Let us first reconsider the simple projection estimator. The bias-variance tradeoff and parsimony of estimates $\wh{f}_j$ in \eqref{eqn::add} are controlled by truncation levels $K_j$. While separately tuning $K_j$ over each component function may be feasible in low dimensions, it quickly becomes infeasible for additive models, as the optimal truncation level requires searching over a $p$-dimensional space. 

To bypass the tuning of multiple truncation levels, $K_j (j = 1, \ldots, p)$, one can instead use 
$n$ basis functions for each component, and consider a penalized version of the truncation estimator, 
\begin{equation}\label{eqn::project_pen}
\wh{{\beta}}^{\text{proj-pen}}_1,\ldots,\wh{{\beta}}^{\text{proj-pen}}_p  = \underset{ {\beta}_j \in \mathbb{R}^{n} }{\operatorname{argmin}}\  \frac{1}{2}\Big\|y
- \sum_{j=1}^p{\Psi}^{j}_{n}\beta_j\Big\|_n^2 + \lambda \sum_{j=1}^p \operatorname{max}(k: \,\beta_{jk} \neq 0), 
\end{equation}
where the truncation level for each feature is determined using the penalty $\operatorname{max}(k:\beta_{jk} \neq 0)$. The estimator in \eqref{eqn::project_pen} chooses the truncation level for each feature data-adaptively. However, the penalty in \eqref{eqn::project_pen} is non-convex, so solving \eqref{eqn::project_pen} becomes infeasible in moderate to high-dimensional problems. To mitigate this challenge, we formulate a convex problem using a novel penalty that can be seen as a convex relaxation of the penalty in \eqref{eqn::project_pen}.  

Our approach is particularly suitable for basis functions that possess a natural hierarchy, that is, when $\left(\psi_{k}\right)_{k=1}^{\infty}$ become increasingly complex for higher values of $k$, as opposed to, say, natural splines,  which rely on knot points. Examples of hierarchical basis functions include polynomial, trigonometric and wavelet basis functions; see 
the online Supplementary Material.


\subsection{The univariate proposal}
\label{sec:univariateHierBasis}
Consider again the projection estimator \eqref{eqn::projest}. As noted in Section~\ref{sec:introduction}, choosing the truncation level $K$ is key here: $K$ too small will result in large bias, while $K$ too large will over-inflate the variance. The bias and variance are balanced by taking $K = {O}\left\{ n^{{1}/{(2m+1)}} \right\}$, where $m$ relates to the smoothness of the underlying $f$, and is unknown in practice.
To circumvent this challenge, we use instead a complete basis with $K=n$ along with regularization to choose the truncation level. Our estimator is defined as 
\begin{equation}
\label{eqn::hierbasis}
\wh{\beta}^{\text{hier}} =  \underset{\beta \in \mathbb{R}^n }{\operatorname{argmin } }\   \frac{1}{2}\left\|y - \Psi_{n} \beta\right\|_n^2 + \lambda \Omega\left(\beta\right), \quad \Omega\left( {\beta}\right) = \sum_{k=1}^n w_k \left\|  {\Psi}_{k:n} {\beta}_{k:n} \right\|_n,
\end{equation}
%
with $w_k = k^m - (k-1)^m$. 
Here, ${\Psi}_{k:n}$ denotes the submatrix of $  {\Psi}_n$ containing columns $k$ to $n$, $  {\beta}_{k:n}$ is the subvector of $  {\beta}$ containing entries $k$ to $n$, and $m$ and $\lambda$ are tuning parameters. The choice of weights $w_k$ is theoretically motivated, and detailed in Section 5. Briefly, it defines a function class with desirable properties that allow us to establish convergence rates.

The hierarchical group lasso 
penalty $\Omega\left(\beta\right)$, will result in solutions $\wh{  {\beta}}^{\text{hier}}$ with {hierarchical sparsity}: that is, if $\wh{\beta}^{\text{hier}}_k = 0$ for some $k$, then $\wh{\beta}^{\text{hier}}_{k^{'}} = 0 $ for all $k^{'}>k $. For sufficiently large $\lambda$, many entries of $\wh{\beta}^{\text{hier}}$ will be $0$. For a given $\lambda$, we define the induced truncation level to be the minimal integer $K\leq n$ such that $\wh{\beta}^{\text{hier}}_{k} = 0$ for all integers $k>K$. Unlike the simple basis expansion estimator \eqref{eqn::projest}, this truncation level is data-adaptive, not prespecified. 

Equation~\eqref{eqn::hierbasis} involves two tuning parameters, $m$ and $\lambda$. The parameter $m$ is analogous to the smoothness parameter in smoothing splines \citep{wahba1990spline}, or the number of bounded derivatives used in  simple projection estimator \citep{chentsov1962evaluation}. 
In practice, using $m=2$ or $3$ gives good results; this is similar to the use of cubic smoothing splines. 
On the other hand, $\lambda$ determines the trade-off between goodness-of-fit and parsimony; 
a theoretically optimal $\lambda$-value is $\lambda \propto n^{-{m}/{(2m+1)}}$. Split-sample validation can be used to choose $\lambda$ in practice.

As with the lasso, the regularization in \eqref{eqn::hierbasis} results in bias, which can reduce the overall mean square error. To reduce this bias, we can consider the relaxed version of our estimator in \eqref{eqn::hierbasis} as the simple basis expansion estimator with $K = \|\wh{\beta}^{\text{hier}}\|_0$, the truncation level selected by \eqref{eqn::hierbasis}. 
This relaxed proposal is equivalent to using the penalty for selecting a truncation level. Therefore, in the univariate case, the relaxed estimator for a sequence of $\lambda$ values would match the simple basis expansion estimator~\eqref{eqn::projest} for a sequence of $K$ values. 
However, the advantages of our penalty become more clear in the case of multivariate additive models, discussed next. 


\subsection{The additive proposal}
\label{subsec:additiveHierBasis}
Ideally, the additive projection estimator \eqref{eqn::add} is obtained by considering a different truncation level $K_j$ for each feature. 
When $p$ is small, this can be achieved by using split-sample validation and searching over all combinations of $K_j (j = 1,\ldots,p)$; however, the number of candidate models grows exponentially in $p$ and becomes quickly unwieldy. Often, a single $K\equiv K_j$ is used in practice, which can lead to some $f_j$ estimates with too many degrees of freedom. As illustrated in Figure~\ref{fig::smallSim}, using a single truncation level can lead to poor estimates. 


Our proposal, which can be seen as a convex relaxation of \eqref{eqn::project_pen}, is designed to circumvent the above limitation of projection estimators in choosing the truncation level for models with multiple covariates. This differentiates our proposal from the projection estimator: in our framework, a single tuning parameter $\lambda$ leads to different $K_j$ for each fitted $f_j$.

Our additive framework is a direct extension of our univariate proposal~\eqref{eqn::hierbasis}. Specifically, we consider function estimates $\wh{f}_j = \sum_{k=1}^{n} \wh{\beta}^{\text{A-hier}}_{jk} \psi_k$, where 
\begin{equation}\label{eqn::addhier}
\wh{  {\beta}}^{\text{A-hier}}_1,\ldots,\wh{  {\beta}}^{\text{A-hier}}_p = \underset{  {\beta}_j \in \mathbb{R}^{n}}{\operatorname{argmin}} \ \frac{1}{2}\Big\|  {y} - \sum_{j=1}^p  {\Psi}^{j}_{n}  {\beta}_j\Big\|_n^2 + \lambda \sum_{j=1}^p \Omega_j\left(  {\beta}_j\right), 
\end{equation}
and $\Omega_j$ is the hierarchical group lasso penalty with weights $w_k = k^m - (k-1)^m$:
\begin{equation}\label{eqn::addPen}
\Omega_j\left(  {\beta}_j\right) = \sum_{k=1}^n w_k \left\|   {\Psi}^{j}_{k:n}  {\beta}_{j,k:n} \right\|_n. 
\end{equation}
%

The optimization problem~\eqref{eqn::addhier} results in $\wh{  {\beta}}_j$ estimates that are hierarchically sparse for each $j$. Specifically, for each $j$, there is some minimal $K_j$ such that $\wh{\beta}^{\text{A-hier}}_{jk} = 0$ for all integers $k > K_j$. Moreover, the major advantage of \eqref{eqn::addhier} is that the induced truncation level is feature-wise adaptive, with a different $K_j$ for each feature $j$. Additionally, as in the univariate setting, we can define a relaxed version of our estimator by fitting~\eqref{eqn::add}, where $K_j$ is now determined by \eqref{eqn::addhier}. As a result, our framework balances goodness-of-fit and parsimony for each feature individually, without requiring an exhaustive search. This is a major advantage over simple projection estimators.

The advantage of our method over simple projection estimators becomes more evident in high dimensions, when $p \gg n$.  
For instance, the popular estimator of \citeauthor{ravikumar2009sparse}, \eqref{eqn::spam}, is generally obtained by using a single truncation level, which, as noted above, can result in poor estimates. 
Similar to their proposal, our sparse additive framework encourages feature-wise sparsity using a group lasso penalty~\citep{yuan2006model}, and is defined as 
\begin{equation}\label{eqn::SparseAddhier}
\wh{  {\beta}}^{\text{S-hier}}_1,\ldots,\wh{  {\beta}}^{\text{S-hier}}_p = \underset{  {\beta}_j \in \mathbb{R}^{n}}{\operatorname{argmin}} \  \frac{1}{2}\Big\|  {y} - \sum_{j=1}^p  {\Psi}^{j}_{n}  {\beta}_j\Big\|_2 + \lambda^2 \sum_{j=1}^p \Omega_j\left(  {\beta}_j\right) + \lambda \sum_{j=1}^p \left\|  {\Psi}^{j}_{n}  {\beta}_j\right\|_n,
\end{equation}
with $\Omega_j\left(  {\beta}_j\right)$ given in~\eqref{eqn::addPen}. We can again define a relaxed version which fits~\eqref{eqn::add} with sparsity and $K_j$ selected by \eqref{eqn::SparseAddhier}. An important feature of the optimization problem~\eqref{eqn::SparseAddhier} is that the tuning parameters for the two penalty terms $\lambda$ and $\lambda^2$ are linked. This link is theoretically justified in Section~\ref{sec:TheoreticalResults}. Briefly, for an oracle $\lambda$, the choice of tuning parameters in \eqref{eqn::SparseAddhier} gives rate-optimal estimates. In practice, while this formulation gives good predictive performance in many cases, in other cases tuning sparsity and smoothness separately leads to strong predictive performance. Our numerical experiments in Section~\ref{sec:SimulationStudy} and \ref{sec:DataAnalysis} corroborate this finding.

As with $\wh{{\beta}}^{\text{SPAM}}_j$ in \eqref{eqn::spam}, for sufficiently large $\lambda$, our proposal gives a sparse solution with most $\wh{  {\beta}}^{\text{S-hier}}_j \equiv 0$. The two estimators differ, however, in their nonzero estimates: non-zero $\wh{  {\beta}}^{\text{S-hier}}_j$ are hierarchically sparse, with a data-driven feature-specific induced truncation level, whereas nonzero $\wh{  {\beta}}^{\text{SPAM}}_j$ in \eqref{eqn::spam} all have the same complexity. This additional flexibility of our methodology proves critical in high dimensions, and is achieved without paying a price in computational or sample complexity. Moreover, with the tuning parameters in \eqref{eqn::SparseAddhier}, this additional flexibility is in theory achieved with the same number of tuning parameters as \citeauthor{ravikumar2009sparse}'s method. 

\subsection{Relationship to existing methods}
The univariate framework of Section~\ref{sec:univariateHierBasis} builds upon existing penalized estimation methods. A popular choice is the smoothing spline estimator~\citep{wahba1990spline}, which sets $(\psi_k)_{k=1}^n$ to $n$ natural splines with knots at the observed covariates $x_1,\ldots,x_n$; this estimator is found by minimizing
$
\|  {y} -   {\Psi}  {\beta} \|_n^2 + 2\lambda\|C^{1/2}  {\beta}\|_n^2
$
over $\beta \in \mathbb{R}^n$, 
using $C\in \mathbb{R}^{n\times n}$ and $C_{j,k} = \int \psi_j^{\left( m/2+1/2  \right)}(t)\psi_k^{\left( m/2+1/2 \right)}(t)\, dt$ with $\psi^{(k) }$ denoting the $k$th order derivative of $\psi$. 
The smoothing spline eliminates the dependence on the truncation level and has an efficient-to-compute closed-form solution; but its estimated functions are piecewise polynomial splines of degree $m$ with $n$ knots. As a result, smoothing spline estimates are not parsimonious. 
To achieve more parsimonious estimates, \cite{mammen1997locally} use a data-driven approach to select the knots in spline functions. Their locally adaptive regression splines use the same natural spline basis and is found by minimizing 
$
\|  {y} -   {\Psi}  {\beta} \|_n^2 + 2\lambda(m!)^{-1}\|D  {\beta}\|_1,
$
over $\beta \in \mathbb{R}^n$,
using $D\in \mathbb{R}^{(n-m-1)\times n}$ with $D_{i,j} = \psi_j^{(m)}(t_i) - \psi_j^{(m)}(t_{i-1})$. This is closely related to the more computationally tractable trend filtering proposal~\citep{kim2009ell1,tibshirani2014adaptive}. 

\begin{table}
	\caption{Comparison of existing methods for sparse additive models.}
		\begin{tabular}{p{13mm}p{45mm}p{45mm}p{45mm}}
			& \hspace{20mm}  PE & \hspace{17mm} SS/RKHS & 
			 \hspace{20mm} TF\\
			$\quad $\newline 
			Scalability &
			\begin{center}
				$\checkmark$
			\end{center}
			\footnotesize
			Exact solution for univariate sub-problem; scales as $O(npK)$.
			& \begin{center}
				$\times$
			\end{center}
			\footnotesize
			No exact solution for univariate sub-problem; general convex solver scaling as $O\{(np)^3\}$.
			&\begin{center}
				$\times$
			\end{center} 
			\footnotesize
			Inefficient beyond first order TF, particularly for high-dimensional and unequally spaced covariates. \\
			$\quad $\newline Adaptability & \begin{center}
				$\times$
			\end{center}
			\footnotesize
			Smoothness controlled by basis expansion order, $K$, fixed for all component functions. 
			& \begin{center}
				\checkmark
			\end{center}
			\footnotesize
			Smoothness controlled by smoothness norm, varied for component functions. 
			& \begin{center}
				\checkmark
			\end{center}
			\footnotesize
			Smoothness controlled by smoothness norm and number of knots, $K$, varied for component functions.\\
			$\quad $ \newline Parsimony &  \begin{center}
				$\checkmark$
			\end{center}
			\footnotesize
			Component functions of order $K\ll n$ expansions. 
			& \begin{center}
				$\times$
			\end{center}
			\footnotesize
			Component functions of order $n$ expansions. 
			& \begin{center}
				$\checkmark$
			\end{center}
			\footnotesize
			Component functions have sparsity in number of knots. 
	\end{tabular}
	\begin{tablenotes}
		\small
		\item PE, projection estimators~\citep{ravikumar2009sparse,lou2016sparse}; RKHS, reproducing kernel Hilbert spaces~\citep{raskutti2012minimax,koltchinskii2010sparsity,yuan2015minimax}; SS, smoothing splines~\citep{meier2009high}; TF, trend filtering~\citep{petersen2016fused,sadhanala2017additive}.
	\end{tablenotes}
	\label{tab:TableMethods}
\end{table}

Despite their appealing properties in the univariate setting, locally adaptive regression splines and trend filtering are computationally difficult to extend to high-dimensional sparse additive models; even for a single feature, neither estimator has a closed-form solution. 
\citeauthor{ravikumar2009sparse}'s estimator \eqref{eqn::spam} overcomes this difficulty by using a fixed truncation level for all $p$ components. As mentioned earlier, its main drawback is that all nonzero components of the additive model have the same complexity. The sparse partially linear additive model of \cite{lou2016sparse} partly mitigates this by setting some of the nonzero components to linear functions using a hierarchical penalty of the form 
$
\sum_{j=1}^p\lambda_1 \|  {\beta}_j\|_2 + \lambda_2 \|  {\beta}_{j,-1}\|_2;
$
here $\beta_{j,1}$ is the coefficient of the linear term in the basis expansion, $  {\beta}_{j,-1} = (\beta_{j,2},\ldots, \beta_{j,K})^{\top} \in \mathbb{R}^{K-1}$, and $\lambda_1$ and $\lambda_2$ are tuning parameters. The first term in the penalty sets all of the coefficients for the $j$th feature to zero, whereas the second term only sets the $K-1$ coefficients corresponding to higher-order terms to zero. 

Our additive and sparse additive proposals of Section~\ref{subsec:additiveHierBasis} generalize those of \cite{ravikumar2009sparse} and \cite{lou2016sparse}.
The first becomes a special case of \eqref{eqn::SparseAddhier} if the weights in \eqref{eqn::addPen} are set to  
$
w_1 = 1 \text{ and } w_k = 0\text{ for } k >1. 
$
Similarly, with an orthogonal design matrix, $(  {\Psi}_K^j)^{\top}  {\Psi}_K^j/n = I_{K} \ (j = 1,\ldots,p)$, \citeauthor{lou2016sparse}'s method is a special case of  \eqref{eqn::SparseAddhier}
with weights in \eqref{eqn::addPen} set to 
$
w_{1} = w_{2} = 1 \text{ and } w_{k} = 0 \text{ for } k>2. 
$
Our theoretical analysis in Section~\ref{sec:additiveHierBasisConvergence} indicates that, in addition to the improved flexibility, our choice of weights~\eqref{eqn::addPen} results in optimal rates of convergence. 

{There are other proposals for estimating sparse additive models, including extensions of trend-filtering and smoothing splines for additive models. Extensions of trend filtering were either not shown to be rate optimal~\citep{petersen2016fused} or only shown to be so for low-dimensional additive models~\citep{sadhanala2017additive}. These extensions are also computationally challenging beyond first order trend filtering. Similarly, the extension of smoothing splines by \cite{meier2009high} is computationally inefficient, and not rate-optimal.} 

Using properties of reproducing kernel Hilbert spaces, some proposals offer minimax-optimal convergence rates for the prediction error over smooth additive classes~\citep{koltchinskii2010sparsity,raskutti2012minimax,yuan2015minimax} similar to our results in Section~\ref{sec:additiveHierBasisConvergence}. 
However, their estimators are given as minimizers of $(np)$-dimensional second order cone programs, for which they do not discuss efficient algorithms, at most mentioning generic convex solvers. The computation for general-purpose second order convex cone program solvers scales roughly as $(np)^3$; thus, even for moderate $p$ and $n$, these proposals become quickly intractable. We compare and contrast the strengths and weaknesses of existing proposals in Table~\ref{tab:TableMethods}. 

\section{Computational considerations and extensions}
\label{sec:extraStuff}
\subsection{Conservative basis truncation}
\label{subsec:fewBasis}
Our proposal~\eqref{eqn::hierbasis} uses a basis expansion with $n$ basis functions. In practice, for any reasonable choice of $\lambda$, $\wh{  {\beta}}$ will never have $n$ nonzero entries, and will generally have very few non-zero entries, $K_0 \ll n$. If we instead solve
\begin{equation}
\label{eqn::hierTrunc}
\wh{  {\beta}}^{\text{hier}({\K})} = \underset{  {\beta} \in \mathbb{R}^{\K} }{\operatorname{argmin}}\   \frac{1}{2}\Big\|  {y} -   {\Psi}_{\K}  {\beta}\Big\|_2 + \lambda \sum_{k=1}^{\K} w_k \Big\|   {\Psi}_{k:\K}  {\beta}_{k:\K} \Big\|_n,
\end{equation}
for $\K<n$, then provided $\K \geq K_0$, the solution will be identical to that of the original proposal~\eqref{eqn::hierbasis}. Even when not identical, so long as $\K$ is sufficiently large, $\K \gtrsim  n^{{2m}/\{(2m-1)(2m+1)\}}$, where $a_n\gtrsim b_n$ means $a_n \ge Cb_n$ for some constant $C$, the theoretical properties of \eqref{eqn::hierbasis} will be maintained. This bound relies on the smoothness of the underlying $f$; choosing $\K\gtrsim {n}^{2/3}$ gives a conservative upper bound which is independent of the underlying $f$. Our theoretical results do not establish tight bounds on function approximation, but we conjecture that they can be improved to obtain the usual $\K\gtrsim n^{1/(2m+1)}$ truncation level. Additionally, as discussed in Section~\ref{sec:Algorithm}, by using $\K$, rather than $n$, basis functions, the computational complexity decreases from 
{$O(n^2)$ to $O(n\K)$}. A similar result holds for the  sparse additive framework with
\begin{equation}
\small
\label{eqn::AhierTrunc}
\wh{\beta}^{\text{S-hier}(\K)}_1,\ldots,\wh{\beta}^{\text{S-hier}(\K)}_p = \underset{  {\beta}_j \in \mathbb{R}^{\K}}{\operatorname{argmin}} \ \frac{1}{2}\Big\|  {y} - \sum_{j=1}^p  {\Psi}^{j}_{\K}  {\beta}_j\Big\|_2 + \lambda^2 \sum_{j=1}^p \Omega_j\left(  {\beta}_j\right) + \lambda \sum_{j=1}^p \Big\|  {\Psi}^{j}_{\K}  {\beta}_j\Big\|_n,
\end{equation}
where, now, $\Omega_j\left(  {\beta}_j\right) = \sum_{k=1}^{\K} w_k \big\|   {\Psi}^{j}_{k:\K}  {\beta}_{j,k:\K} \big\|_n$. Choosing the pre-truncation level, $\K$, is easier than the truncation level for the simple basis expansion estimator \citep{chentsov1962evaluation}. The latter requires an {exact} truncation level that is neither too large, nor too small, whereas 
the former only requires a level that is not too small.

\subsection{Algorithm for the univariate and  sparse additive framework}
\label{sec:Algorithm}
An appealing feature of our framework is its computational efficiency. Problem~\eqref{eqn::hierTrunc} can be solved via a one-step coordinate descent algorithm.
Using a QR decomposition ${\Psi}_{\K} = UV$ with $U\in \mathbb{R}^{n\times \K}$ and $U^{\top} U/n = I_{\K}$, we can re-write \eqref{eqn::hierTrunc} as
\begin{equation}
\underset{\wt{  {\beta}}\in \mathbb{R}^{\K} }{\text{ minimize } } \frac{1}{2n}\, \left\|  {y} - U\wt{  {\beta}} \right\|_2^2 + \lambda\sum_{k=1}^{\K} w_{k} \left\| \wt{  {\beta}}_{k:\K} \right\|_2 ,
\label{eqn:HierBasisSimple1}
\end{equation}
where $\wt{  {\beta}} = V  {\beta}$.
Applying the results of \cite{jenatton2010proximal}, gives us Algorithm~\ref{alg:univariate}.
\begin{algorithm}
	\caption{One-step coordinate descent for the univariate problem} \label{alg:univariate}
	\vspace*{-12pt}
	\begin{tabbing}
		\enspace Initialize $  {\beta}^{0} =\cdots =   {\beta}^{\K} \gets U^{\top}   {y}/n$ \\
		\enspace For{ $k = \K, \ldots, 1$} \\
		\qquad Update $  {\beta}^{k-1}_{k:\K} \gets \left( 1 - {w_{k}\lambda  }/{\|{  {\beta}}^k_{k:\K}\|_2} \right)_{+}{  {\beta}}^k_{k:\K}$, where $(x)_{+} = \max(x,\, 0)$\\
		\enspace Return $  {\beta}^0$ 
	\end{tabbing}
\end{algorithm}

The reformulation in \eqref{eqn:HierBasisSimple1} can also be used to  efficiently solve the sparse additive extension~\eqref{eqn::AhierTrunc} via a block coordinate descent algorithm. Specifically, given a set of estimates $({\beta}_j)_{j=1}^p$, we fix all but one of the vectors $ {\beta}_j$ and optimize over the non-fixed vector using Algorithm~\ref{alg:univariate}. Iterating until convergence yields the solution to problem~\eqref{eqn::AhierTrunc}; see 
the Supplementary Material.

Solving problem \eqref{eqn::hierTrunc} requires a QR decomposition of the matrix ${\Psi}_{\K}$ followed by the multiplication $U^{\top}{y}$; these steps require ${O}(n\K^2)$ and ${O}(n\K)$ operations, respectively. However, these steps are only needed once for a sequence of $\lambda$ values. For the additive proposal \eqref{eqn::AhierTrunc}, $p$ such QR decompositions are needed once for the entire $\lambda$ sequence. 

By Proposition~2 of \cite{jenatton2010proximal}, for a given $\lambda$, problem \eqref{eqn:HierBasisSimple1} can be solved in ${O}(\K)$ operations. Each block update requires a matrix multiplication $U_j^{\top}  {r}_{-j}$ followed by solving the proximal problem \eqref{eqn:HierBasisSimple1}, see the Supplementary Material. This requires ${O}(n\K)$ operations. Thus, our sparse additive proposal requires ${O}(np\K)$ operations, which is the computational complexity of the lasso~\citep{friedman2010regularization} when $\K=1$.

The above computational complexity calculations indicate that our univariate and  sparse additive estimates can be obtained very efficiently. 
In fact, using our \texttt{R} implementation, the median time for solving the univariate problem for an example with $\K = n = 300$ is 0$\cdot$17 seconds on an Intel\textregistered \  CORE\texttrademark \ i5-3337U, 1$\cdot$80 GHz processor. 
The median time for solving the sparse additive framework for the simulation setting of Section~\ref{sec:SimulationAdditive} on a grid of 50 $\lambda$ values is 5$\cdot$96 seconds. 

\subsection{Degrees of freedom}
\label{sec:DegreesOfFreedom}
For regression with fixed design and $\e_i\sim \mathcal{N}(0,\sigma^2)$, we consider the definition of degrees of freedom given by~\cite{stein1981estimation}, 
$
\df = \sum_{i=1}^{n}\mathrm{cov}(y_i,\, \wh{y}_i)/\sigma^2\ ,
$
where $\wh{y}_i$ are the fitted response values. 
We apply Claim 3$\cdot$2 of \cite{haris2016convex} to derive an unbiased estimate of $\df$ for the estimator~\eqref{eqn:HierBasisSimple1}, using the decomposition ${\Psi}_{\K} = UV$ from Section~\ref{sec:Algorithm}. 
Let $K_0 = \max (k: \wh{\beta}_k\not=0)$, and let $U_{K_0} \in \mathbb{R}^{n\times K_0}$ denote the first $K_0$ columns of $U$. For a vector $  {\nu}\in \mathbb{R}^{\K}$, define $  {\nu}_{k:K_0}\in \mathbb{R}^{K_0}$ as $  {\nu}_{k:K_0} = (0, \ldots, 0, \nu_k,\nu_{k+1}, \ldots,\nu_{K_0})^{\top}$. We arrive at the following lemma. 
%
\begin{lemma}
	An unbiased estimator for the degrees of freedom of $\wh{\beta}$ in \eqref{eqn::hierTrunc} is
	\begin{equation*}
	\scriptsize
	\wh{\df} = 1 + \mathrm{tr}\left\{ U_{K_0}\left[ I_{K_0} + \sum_{k=1}^{K_0}\lambda w_{k}\left\{ \frac{\mathrm{diag}(  {1}_{k:K_0})}{\| \wh{  {\beta}}_{k:K_0} \|_2}  - \frac{\wh{  {\beta}}_{k:K_0}\wh{  {\beta}}_{k:K_0}^{\top} }{\| \wh{  {\beta}}_{k:K_0} \|_2^3}\right\} \right]^{-1}\frac{U_{K_0}^{\top}}{n}\left( I_n -   {1}_n  {1}_n^{\top}/n \right)  \right\},
	\end{equation*}
	where $\mathrm{diag}(  {\nu})\in \mathbb{R}^{K_0\times K_0}$ is a diagonal matrix with $  {\nu}\in \mathbb{R}^{K_0}$ on the main diagonal.
\end{lemma}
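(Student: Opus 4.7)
The strategy is to invoke Stein's identity, $\mathrm{df} = \mathbb{E}[\mathrm{tr}(\partial \wh{\bs{y}}/\partial \bs{y})]$, apply it to the orthogonalized HierBasis problem~\eqref{eqn:HierBasisSimple1}, and recover the stated Jacobian via one implicit differentiation of the KKT system. I would first isolate the intercept by decomposing $\wh{\bs{y}} = \bar{y}\bs{1}_n + U\wh{\wt{\bs{\beta}}}$, where $\wh{\wt{\bs{\beta}}}$ solves \eqref{eqn:HierBasisSimple1} on the centered response $\wt{\bs{y}} = (I_n - \bs{1}_n\bs{1}_n^T/n)\bs{y}$. By the chain rule, this split accounts for both the additive ``$1$'' (the intercept degree of freedom) and the trailing right factor $(I_n - \bs{1}_n\bs{1}_n^T/n)$ in the stated estimator.

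Next, for a fixed $\bs{y}$, I would identify the induced truncation $K_0 = \max\{k : \wh{\wt{\beta}}_k \neq 0\}$ and work on a neighborhood of $\bs{y}$ on which $K_0$ is constant. Because $\Omega$ is a weighted sum of norms of nested blocks, for almost every $\bs{y}$ each active block $\|\wh{\wt{\bs{\beta}}}_{k:K_0}\|_2$ is strictly positive and stays so under small perturbations, hence $\Omega$ is smooth in the active coordinates. The KKT condition restricted to the first $K_0$ coordinates of $\wh{\wt{\bs{\beta}}}$ is
\[
-\frac{U_{K_0}^T}{n}\bigl(\wt{\bs{y}} - U_{K_0}\wh{\wt{\bs{\beta}}}_{1:K_0}\bigr) + \lambda \sum_{k=1}^{K_0} w_k \frac{\wh{\wt{\bs{\beta}}}_{k:K_0}}{\|\wh{\wt{\bs{\beta}}}_{k:K_0}\|_2} = \bs{0}.
\]
Using the orthogonality $U_{K_0}^T U_{K_0}/n = I_{K_0}$ and the identity $\partial(\bs{v}/\|\bs{v}\|_2)/\partial \bs{v} = I/\|\bs{v}\|_2 - \bs{v}\bs{v}^T/\|\bs{v}\|_2^3$ (suitably zero-padded to act only on the block $\{k,\ldots,K_0\}$), implicit differentiation in $\wt{\bs{y}}$ yields
\[
\frac{\partial \wh{\wt{\bs{\beta}}}_{1:K_0}}{\partial \wt{\bs{y}}} = \left[I_{K_0} + \sum_{k=1}^{K_0}\lambda w_k \Bigl\{\frac{\mathrm{diag}(\bs{1}_{k:K_0})}{\|\wh{\wt{\bs{\beta}}}_{k:K_0}\|_2} - \frac{\wh{\wt{\bs{\beta}}}_{k:K_0}\wh{\wt{\bs{\beta}}}_{k:K_0}^T}{\|\wh{\wt{\bs{\beta}}}_{k:K_0}\|_2^3}\Bigr\}\right]^{-1}\frac{U_{K_0}^T}{n}.
\]
Premultiplying by $U_{K_0}$, composing with the intercept/centering Jacobians computed in the first step, and taking the trace produces the claimed expression.

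Unbiasedness then follows from Stein's lemma, provided $\bs{y} \mapsto \wh{\bs{y}}$ is almost everywhere differentiable with the needed integrability. This is where I would invoke Claim~3.2 of \cite{haris2015convex}, which was formulated for solutions of exactly this class of nested hierarchical group-lasso problems and guarantees local constancy of the active set together with the Lipschitz regularity required by Stein. Matching our setup to their framework amounts to identifying the groups $G_k = \{k,k+1,\ldots,K_0\}$ with weights $w_k$; once that match is made, the only remaining work is the implicit-differentiation calculation sketched above, which is purely mechanical. The main potential obstacle---and the only genuinely delicate point---is justifying that $K_0$ is locally constant on a set of full Lebesgue measure in $\bs{y}$, so that the formula truly captures $\partial\wh{\bs{y}}/\partial\bs{y}$ weakly everywhere; Claim~3.2 is designed precisely to resolve this for nested group-lasso penalties, making its application the cleanest route to the lemma.
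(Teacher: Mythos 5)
Your argument is correct and follows the same route the paper takes: the paper's entire ``proof'' consists of citing Claim~3.2 of \cite{haris2015convex}, and you invoke the same claim for the Stein-lemma regularity, while also reconstructing the KKT/implicit-differentiation calculation that the cited reference encapsulates. Your handling of the intercept and centering via the decomposition $\wh{\bs{y}} = \bar{y}\bs{1}_n + U\wh{\wt{\bs{\beta}}}$ correctly reproduces both the additive $1$ and the trailing $(I_n - \bs{1}_n\bs{1}_n^T/n)$ factor, and the Jacobian of the normalized block terms matches the stated formula after the zero-padding you note.
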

\vspace{-3mm}
\subsection{Non-additive multivariate regression}
\label{sec:MultivariateHB}
For vectors $x\in \mathbb{R}^p$ and $\nu_k \in \mathbb{Z}_+^p$, define 
$ {x}^{\nu_k} = x_1^{\nu_{11}}\times \cdots\times x_p^{\nu_{kp}}$. Now for functions $f^0:\mathbb{R}^p\to \mathbb{R}$, consider the basis representation
$
f^0(x) = \sum_{k=1}^{\K} \psi_k({x}^{  {\nu_k}})\beta^0_k,
$
for univariate functions $(\psi_j)_{j=1}^{\infty}$, and $\nu_1,\ldots,\nu_{\K} \in \mathbb{Z}_+^p$, where 
\begin{equation*}
\|{\nu_k}\|_1 = 1\ \left(k=1,\ldots,p\right),\quad  \|{\nu_k}\|_1 = 2\ \Big(k=p+1,\ldots, \binom{p+2}{p}-1\Big),
\end{equation*}
and so on. As in the univariate case, let ${\Psi}_{\K}\in \mathbb{R}^{n\times \K}$ be the matrix with entries ${\Psi}_{\K (i,k)} = \psi_k(  {x_i}^{  {\nu_k}})$. Then, our multivariate regression estimator is simply \eqref{eqn::hierTrunc} with  weights given by 
\begin{equation}
w_{q_k} = k^m - (k-1)^m, \quad  q_k = \binom{k+p-1}{p},
\label{eqn:weightsMultivariateHB}
\end{equation}
and $w_k = 0$ for all other $k$. Figure~\ref{fig:penaltyMultivariate} demonstrates the multivariate penalty for $p = 2$ and $\psi_k$ the identity function; that is, for $z\in \mathbb{R},\, \psi_k(z) \equiv z$. It is clear from the figure how the multivariate penalty is a natural extension of the univariate one: when $\psi_k(z) = z$, the fitted model can be a multivariate polynomial of any degree. With this choice of basis functions, our multivariate proposal acts as a procedure for selecting the complexity level of interaction models. This problem can be solved using Algorithm~\ref{alg:univariate} with a single pass over the basis elements. 

\begin{figure}
	\centering
	\includegraphics[scale = 0.34]{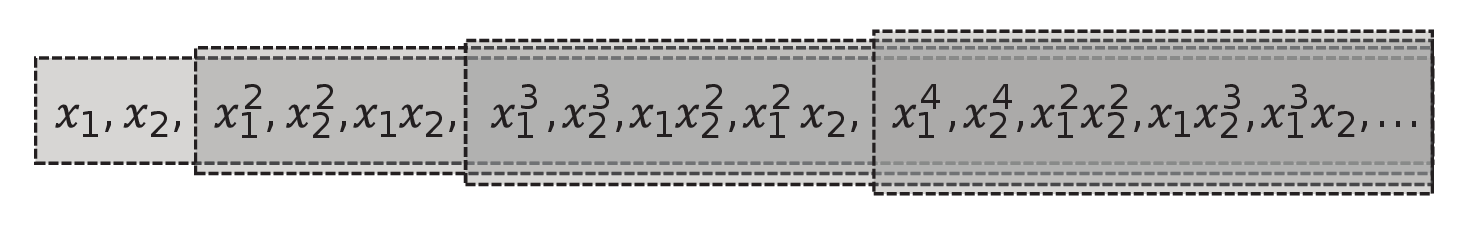}
	\caption{Visual representation of the multivariate penalty with $p = 2$ and $\psi_j(x) \equiv x$.}
	\label{fig:penaltyMultivariate}
\end{figure}


\subsection{Extension to classification}
\label{sec:classification}
We can extend our methodology to the setting of binary classification via a logistic loss function. Let $ y_i\in \left\{ -1,\, 1 \right\}\ (i=1,\ldots,n)$ be the observed response. We then fit
\begin{equation}
(\beta_0, \wh{  {\beta}}) = \underset{\beta_0\in \mathbb{R},\,   {\beta}\in \mathbb{R}^{n} }{\arg\min }\  \frac{1}{2n} \sum_{i=1}^n \log\left( 1 + \exp\left[ -y_i\{\beta_0 + \left(  {\Psi}_n  {\beta}\right)_i\} \right] \right) + \lambda \Om(  {\beta}).
\label{eqn:LogisticHierBasis}
\end{equation}
%
As with the least squares loss, \eqref{eqn:LogisticHierBasis} can be naturally extended to sparse additive models, by using both penalties in \eqref{eqn::SparseAddhier}. 
The problem can be efficiently solved via a proximal gradient descent algorithm \citep{combettes2011proximal}; see 
the Supplementary Material for details.

\section{Theoretical results}
\label{sec:TheoreticalResults}
\subsection{Summary of theoretical contributions}

To investigate finite sample properties of our estimators, we combine previously developed ideas from empirical process theory and metric entropy with a number of novel results about convergence rates of sparse additive models, and the metric entropy of our hierarchical class. 

Our new results in Section~\ref{sec:additiveHierBasisConvergence} allow one to establish convergence rates for a broad class of penalized sparse additive model estimators. Under a {compatibility condition} on the features, these rates match the minimax lower bound for estimation of sparse additive models under independent component functions \citep{raskutti2009lower}. Thus, our sparse additive estimators are rate-optimal. With no such assumptions, in Theorem~\ref{thm:thm1} we obtain rates that are the additive analog to assumption-free convergence rates for the lasso~\citep{chatterjee2013assumptionless}. Such assumption-free convergence rates have not been previously derived for sparse additive models. 

Finally, key for our theoretical analyses is the entropy of our hierarchical class; we calculate these with matching upper and lower bounds in Lemmas~\ref{lemma:upperBoundEntropy} and \ref{lemma:lowerBoundEntropy}. These new results allows us to show that our univariate and sparse additive estimators, \eqref{eqn::hierbasis} and \eqref{eqn::SparseAddhier}, are minimax rate-optimal within the hierarchical univariate and hierarchical sparse additive classes, respectively. 

\subsection{Entropy-based rates}
We begin by stating two well-known results. We then present our contributions in Sections~\ref{sec:hierbasisConvergence} and \ref{sec:additiveHierBasisConvergence}.
Firstly, Theorem 1 of \cite{yang1999information} establishes a lower bound for the minimax rate subject to certain conditions. Secondly, a framework for establishing an upper bound on convergence rates is given by Theorem 10$\cdot$2 of \cite{vandegeer2000empirical}. Here, we require a slight generalization of this result, which we state below and prove 
in the Supplementary Material. 

We first introduce some terminology and notation. For a set $\mathcal{F}$ equipped with some metric $d(\cdot,\, \cdot)$, the subset $\{f_1,\ldots,f_N\} \subset \mathcal{F}$ is a $\delta$-cover if for any $f\in \mathcal{F}$
$
\min_{1\le i \le N} d(f,\, f_i) \le \de.
$
The log-cardinality of the smallest $\de$-cover is the $\de$-entropy of $\mathcal{F}$ with respect to metric $d(\cdot,\, \cdot)$. We denote by $H(\de,\, \mathcal{F},\, Q)$, the $\de$-entropy of a function class $\mathcal{F}$ with respect to the $\|\cdot\|_Q$ metric for a measure $Q$, where $\|f\|_Q^2 =\int \{f(x)\}^2\,dQ(x)$.  For a fixed sample $x_1,\ldots,x_n$, we denote by $Q_n$ the empirical measure $Q_n = n^{-1}\sum_{i=1}^{n}\delta_{x_i}$ and use the short-hand notation $\|\cdot\|_n = \|\cdot\|_{Q_n}$.

\begin{theorem}[Theorem 1, \cite{yang1999information}]
	\label{thm:lowerMinimax}
	Consider the model
	$y_i = f^0(x_i) + \e_i\ (i=1,\ldots,n),$
	with independent and identically distributed $\e_i{\sim}\mathcal{N}(0,\sigma^2)$, $x_i\sim Q$.
	Assume the entropy condition 
	$H(\delta,\ \mathcal{F},\, Q ) = A_0\delta^{-\alpha}$ 
	holds for some function class $\mathcal{F}$ for $\alpha\in (0,2)$, and $A_0>0$.
	Then, for a constant $A_1$ depending on $A_0,\ \alpha$ and $\sigma^2$, 
	\begin{equation*}
	\min_{\wh{f}}\max_{f^0\in \mathcal{F} }\ E \Big( \big\|\wh{f} - f^0\big\|_{Q}^2\Big) \ge A_1n^{-{2}/{(2+\alpha)}}  \ ,
	\end{equation*}
	where the minimum is over the space of all measurable functions. 
\end{theorem}
\begin{theorem}[Theorem 10$\cdot$2, \cite{vandegeer2000empirical}]
	\label{thm:upperRate}
	Consider the model $y_i = f^0(x_i) + \e_i\ (i=1,\ldots,n)$, with independent sub-Gaussian noise $\e_i$. Let
	$\wh{f} = \arg\min_{f\in \mathcal{F}_{n} } \  \left\|   {y} - f\right\|_n^2 + 2\lambda_n^2\Om(f\mid Q_n),$
	for some function class $\mathcal{F}_n$ and semi-norm $\Om(\cdot\mid Q_n)$ on $\mathcal{F}_n$ which satisfy the entropy condition 
	$H[\delta,\{f\in \mathcal{F}_n: \Om(f\mid Q_n)\le 1\},Q_n]\le A_0\delta^{-\alpha},$ 
	for $\alpha\in (0,2)$. Then for  
	$\lambda_n^{-1} = n^{{1}/(2+\alpha)}\left\{\Om(f_{n}^*\mid Q_n) \right\}^{(2-\alpha)/\{2(2+\alpha)\} }$,
	and for any function $f_n^*\in \mathcal{F}_n$, there is a constant $c$ such that for all $T\ge c$, with probability at least $1-c\exp\left\{ -(T/c)^2 \right\}$, 
	\begin{equation*}
	\big\|\wh{f} - f^{0}\big\|_n^2 \le 5 \max \left\{2\big\|f^0 - f_n^*\big\|_n^2,\ C_0\lambda_n^2\Om(f_n^*|Q_n) \right\},
	\end{equation*}
	where $C_0$ is a constant that depends on $\alpha$ and $T$.
\end{theorem}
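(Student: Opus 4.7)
The plan is to adapt the standard entropy-based argument of van de Geer so that it accommodates an arbitrary approximant $f_n^{\ast} \in \mathcal{F}_n$ rather than requiring $f^0 \in \mathcal{F}_n$. The starting point is the basic inequality: since $\wh{f}$ minimizes the penalized criterion, for any $f_n^{\ast}\in\mathcal{F}_n$, writing $\bs{y} = f^0 + \bs{\e}$ and rearranging gives
\begin{equation*}
\tfrac{1}{2}\|\wh{f}-f^0\|_n^2 + \lambda_n^2 \Om(\wh{f}|Q_n) \le \tfrac{1}{2}\|f_n^{\ast} - f^0\|_n^2 + \langle \bs{\e},\, \wh{f}-f_n^{\ast}\rangle_n + \lambda_n^2 \Om(f_n^{\ast}|Q_n).
\end{equation*}
All subsequent work is devoted to controlling the empirical process term and trading it off against the penalty difference.

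Next I would bound $\langle \bs{\e},\,\wh{f}-f_n^{\ast}\rangle_n$ by an entropy argument. Set $R = \Om(\wh{f}|Q_n)\vee \Om(f_n^{\ast}|Q_n)$; then $(\wh{f}-f_n^{\ast})/R$ lies in a set whose $\delta$-entropy inherits the bound $A_0\delta^{-\alpha}$ up to a constant from the hypothesis. On the event where $R\asymp R_0$ and $\|\wh{f}-f_n^{\ast}\|_n \asymp r$, a Dudley entropy-integral argument combined with a standard sub-Gaussian concentration inequality yields
\begin{equation*}
\bigl|\langle \bs{\e},\, \wh{f}-f_n^{\ast}\rangle_n\bigr| \;\le\; C(T)\, R_0^{\alpha/2}\, r^{\,1-\alpha/2}\, n^{-1/2}
\end{equation*}
with probability at least $1 - \exp\{-T^2/c\}$. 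A dyadic peeling over $R_0\in [2^j \Om(f_n^{\ast}|Q_n),\,2^{j+1}\Om(f_n^{\ast}|Q_n)]$ and over the levels of $r$, combined with a union bound that is absorbed into the constant once $T\ge c$, makes this deviation bound uniform in $\wh{f}$.

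Finally, Young's inequality with conjugate exponents $4/(2-\alpha)$ and $4/(2+\alpha)$ splits $R_0^{\alpha/2} r^{1-\alpha/2}n^{-1/2}$ into a quadratic piece $\eta r^2$ and a residual of order $n^{-2/(2+\alpha)} R_0^{2\alpha/(2+\alpha)}$. Using $r^2 \le 2\|\wh{f}-f^0\|_n^2 + 2\|f^0-f_n^{\ast}\|_n^2$, the quadratic piece is reabsorbed into the left-hand side of the basic inequality. The stated choice $\lambda_n^{-1}=n^{1/(2+\alpha)}\Om(f_n^{\ast}|Q_n)^{(2-\alpha)/(2(2+\alpha))}$ is precisely the one that equates the Young residual with $\lambda_n^2 R_0$, so that after combining with the penalty difference $\lambda_n^2[\Om(f_n^{\ast}|Q_n)-\Om(\wh{f}|Q_n)]$, any dependence on $\Om(\wh{f}|Q_n)$ cancels and the remaining bound is a maximum of the approximation error $\|f^0-f_n^{\ast}\|_n^2$ and $\lambda_n^2\Om(f_n^{\ast}|Q_n)$ up to absolute constants.

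The main obstacle is the peeling step. Van de Geer's original argument can peel directly over $\|\wh{f}-f^0\|_n$ because $f^0$ is assumed to lie in $\mathcal{F}_n$; in the generalized setting one must instead peel over $\|\wh{f}-f_n^{\ast}\|_n$ and then translate back to $\|\wh{f}-f^0\|_n$ via the triangle inequality. The delicate point is ensuring that the approximation gap $\|f^0-f_n^{\ast}\|_n^2$ enters only additively (inside the outer maximum) and does not contaminate either the quadratic reabsorption or the balancing choice of $\lambda_n$, so that the resulting oracle bound genuinely degrades only to the order of the larger of the two terms.
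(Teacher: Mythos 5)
Your high-level structure matches the paper's: same basic inequality, the same normalization of the empirical process term by $\Om(\wh{f}|Q_n)+\Om(f_n^{\ast}|Q_n)$ so that a single entropy/peeling argument (a variant of van de Geer's Lemma 8.4) yields
\begin{equation*}
|\langle \e,\, \wh{f}-f_n^{\ast}\rangle_n| \;\le\; \frac{T}{\sqrt{n}}\,\|\wh{f}-f_n^{\ast}\|_n^{1-\alpha/2}\,\bigl\{\Om(\wh{f}|Q_n)+\Om(f_n^{\ast}|Q_n)\bigr\}^{\alpha/2}
\end{equation*}
with the stated probability, and the same conversion between $\|\wh{f}-f^0\|_n$ and $\|\wh{f}-f_n^{\ast}\|_n$. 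Where you diverge is the final algebra. The paper does not use Young's inequality; it splits into the two cases $\Om(\wh{f}|Q_n)\ge \Om(f_n^{\ast}|Q_n)$ and $\Om(\wh{f}|Q_n)\le\Om(f_n^{\ast}|Q_n)$. In the first case it drops the $\frac12\|\wh{f}-f_n^{\ast}\|_n^2$ term, solves the resulting inequality for $\Om(\wh{f}|Q_n)$ explicitly in terms of $\|\wh{f}-f_n^{\ast}\|_n$, substitutes back, and then solves for $\|\wh{f}-f_n^{\ast}\|_n$; in the second case it substitutes $R\le 2\Om(f_n^{\ast}|Q_n)$ and solves directly.

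Your Young-inequality step contains a genuine gap. With exponents $4/(2-\alpha)$ and $4/(2+\alpha)$, splitting $T n^{-1/2} r^{1-\alpha/2} R^{\alpha/2}$ as $\eta r^2 + C(T,\eta,\alpha)\, n^{-2/(2+\alpha)} R^{2\alpha/(2+\alpha)}$ forces the coefficient $C(T,\eta,\alpha)\propto T^{4/(2+\alpha)}\eta^{-(2-\alpha)/(2+\alpha)}$, and you must take $\eta<1/4$ for the quadratic piece to be absorbed. For $T\ge c$ this makes $C(T,\eta,\alpha)$ strictly larger than one. Since $R=\Om(\wh{f}|Q_n)+\Om(f_n^{\ast}|Q_n)\ge\Om(f_n^{\ast}|Q_n)$ and the chosen $\lambda_n$ gives exactly $n^{-2/(2+\alpha)}R^{2\alpha/(2+\alpha)}\le\lambda_n^2 R$, the Young residual is bounded only by $C(T,\eta,\alpha)\,\lambda_n^2 R$, which contains a term $C(T,\eta,\alpha)\,\lambda_n^2\,\Om(\wh{f}|Q_n)$ with a coefficient that exceeds the coefficient $1$ on the left-hand side. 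The cancellation you assert therefore does not occur: you cannot reabsorb a $T$-dependent multiple of $\lambda_n^2\Om(\wh{f}|Q_n)$ into a single such term. The repair is precisely the case analysis the paper performs (or some equivalent device that first bounds $\Om(\wh{f}|Q_n)$ in terms of $\|\wh{f}-f_n^{\ast}\|_n$ before eliminating it). As written, your sketch would only close if $C(T,\eta,\alpha)<1$, which contradicts $T\ge c$.

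Two smaller remarks. First, only one peeling is needed once the empirical process is normalized by $I(f)=\Om(f|Q_n)+\Om(f_n^{\ast}|Q_n)$; a second peeling over $R_0$ is redundant, since the normalization already removes the scale dependence. Second, the concern you flag about the approximation gap contaminating the balancing choice of $\lambda_n$ is not the delicate point: the paper handles it at the start by passing from $\|\wh{f}-f^0\|_n$ to $\|\wh{f}-f_n^{\ast}\|_n$ through a maximum, and only at the very end converts back via the triangle inequality, so the approximation error enters strictly additively throughout.
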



Before specializing Theorems~\ref{thm:lowerMinimax} and \ref{thm:upperRate} to our proposal, we briefly discuss their assumptions. The main assumption for Theorem~\ref{thm:lowerMinimax} is an entropy condition for the function class $\mathcal{F}$, which contains $f^0$. This is a common condition needed to quantify the size of an infinite-dimensional space $\mathcal{F}$. The entropy condition $H(\delta, \mathcal{F}, Q) = A_0\delta^{-\alpha}$, is satisfied by commonly-used function classes. Examples include, bounded Lipschitz functions with $\alpha=1$, bounded monotone functions with $\alpha=1$, and $m$th order Sobolev functions with $\alpha = 1/m$. Theorem~\ref{thm:upperRate} requires a similar entropy condition on a sequence of function spaces, but it does relax conditions on $f^0$ allowing it to be arbitrary, not necessarily in a specific class. Requiring an entropy condition for a sequence $(\mathcal{F}_n)_{n=1}^{\infty}$ may seem restrictive; but often, as with our hierarchical function class defined below, for all $n$, $\mathcal{F}_n \subseteq \mathcal{F}$ for some class  $\mathcal{F}$. Thus, it suffices to prove an entropy bound for $\mathcal{F}$. Finally, we also require the noise $\e_i$ to be independent and sub-Gaussian to use standard results from the empirical processes literature. While the original theorem of \cite{yang1999information} requires identically distributed Gaussian noise to prove a lower bound, the fact that Gaussian random variables are sub-Gaussian allows us to generalize the original result. In the following section, we define and establish entropy bounds for our hierarchical function class.


\subsection{Entropy results for the proposed penalty}
\label{sec:entropyResultsHB}
To set up the notation, we define the univariate function class
\begin{equation}
\mathcal{F}_n = \Big\{ f_{\beta}(x) = \sum_{k=1}^{\K_n} \psi_k(x)\beta_k\, : \int \psi_k\psi_l\, dQ = 0 \text{ for } k\not= l, \int \psi_k^2\, dQ = 1 \Big\}\ ,\  x\in \mathbb{R},
\label{eqn:functionalClass}
\end{equation}
and the multivariate function class
\begin{equation}
\footnotesize
\mathcal{F}_{p,n} = \Big\{ f_{\beta}(  {x}) = \sum_{k=1}^{\K_n} \psi_k(  {x}^{  {\nu_k}})\beta_k\, : \int \psi_k(  {x}^{  {\nu_k}})\psi_l(  {x}^{  {\nu_l}})\, dQ = 0 \text{ for } k\not= l, \int \{\psi_k(  {x}^{  {\nu_k}})\}^2\, dQ = 1 \Big\}\ ,\  {x}\in \mathbb{R}^p,
\label{eqn:functionalClassMulti}
\end{equation}
where ${\nu_k}\in \mathbb{Z}_+^p$, ${x^{\nu_k}}$ was defined in Section~\ref{sec:MultivariateHB}, and $Q$ is the probability measure associated with ${x}$. In \eqref{eqn:functionalClass} and \eqref{eqn:functionalClassMulti}, we allow for the limiting case of $n = \infty$ with $\K_\infty = \infty$. With some abuse of notation, for ${\beta} \in \ell^2(\mathbb{R})$, we define
$
\|{\beta}_{k:\infty}\|_2^2 = \sum_{l = k}^{\infty} \beta_l^2.
$


To specialize Theorems~\ref{thm:lowerMinimax} and \ref{thm:upperRate}, we need to characterize $H(\delta, \mathcal{F}_{\infty}^M, Q)$, for $\mathcal{F}_{\infty}^M$ defined below in \eqref{eqn:populationHBpen}, and establish an upper bound for $H[\delta, \{f_{  {\beta}}\in \mathcal{F}_n: \Om(  {\beta}) \le 1 \}, Q_n]$.  
In the next lemma, Lemma~\ref{lemma:reductionEuclidean}, we show that the calculation of $H(\delta, \mathcal{F}_{\infty}^M, Q)$ and $H[\delta, \{f_{  {\beta}}\in \mathcal{F}_n: \Om(  {\beta}) \le 1 \}, Q_n]$ is equivalent to an entropy calculation for subsets of $\ell^2(\mathbb{R})$ and $\mathbb{R}^{\K_n}$, respectively, with respect to the usual $\|\cdot\|_2$ norm. This reduction allows us to use simple volume arguments and existing results for establishing the entropy conditions. The lemma considers our proposed penalty in full generality, that is the penalty~(\ref{eqn::hierbasis}) with any set of non-negative weights $w_k$. This lemma gives a similar reduction of entropy calculations for the multivariate case with little extra work.

\begin{lemma}[Reduction to $\ell^2(\mathbb{R})$ and $\mathbb{R}^{\K_n}$]
	\label{lemma:reductionEuclidean}
	Let $\mathcal{F}_{n}^{M}$ and $\mathcal{F}_{p,n}^M$ be the univariate and multivariate hierarchical basis expansion class with bounded penalty, respectively. Specifically,
	\begin{equation}
	\small
	\mathcal{F}_n^M = \{ f_{  {\beta}} \in \mathcal{F}_n: \sum_{k=1}^{\K_n}w_k\|  {\beta}_{k:\K_n}\|_2  \le M\}, \quad  \mathcal{F}_{p,n}^M = \{ f_{  {\beta}} \in \mathcal{F}_{p,n}: \sum_{k=1}^{\K_n}w_k\|  {\beta}_{k:\K_n}\|_2  \le M\},
	\label{eqn:populationHBpen}
	\end{equation}
	where we allow the limiting case of $n = \infty$. Then, $H(\delta,\, \mathcal{F}_n^M,\, Q)$ or $H(\delta,\, \mathcal{F}_{p,n}^M,\, Q)$ is equal to $H( \de,\, \mathcal{H}_{\K_n}^{w/M})$, the entropy of $\mathcal{H}_{\K_n}^{w/M}$ with respect to the $\|\cdot\|_2$ norm, where
	\begin{equation*}
	\mathcal{H}^{w/M}_{\K_n} = \Big\{ \beta\in \mathbb{R}^{\K_n}: \sum_{k=1}^{\K_n} {w_k}/{M}\|\beta_{k:{\K_n}}\|_2\le 1 \Big\}\ .
	\end{equation*}
	Secondly, assume that the Gram matrix $\Psi_{\K_n}^{\top}\Psi_{\K_n}/n$ has a finite maximum eigenvalue denoted by $\Lambda_{\max}$. Then, denoting $\mathcal{H}^w_{\K_n} = \mathcal{H}^{w/1}_{\K_n}$, we have
	\begin{equation*}
	H\Big[ \delta,\, \Big\{f_{\beta} \in \mathcal{F}_n :\sum_{k=1}^{\K_n}w_k\|\Psi_{k:\K_n}  {\beta}_{k:\K_n}\|_n \le 1\Big\},\, Q_n \Big] \le H\Big(\delta\Lambda_{\max}^{-1/2},\, \mathcal{H}_{\K_n}^w\Big). 
	\end{equation*}
	The above inequality also holds with $\mathcal{F}_n$ replaced by $\mathcal{F}_{p,n}$.
	
\end{lemma}

Lemma~\ref{lemma:reductionEuclidean} establishes the connections between entropy of the function classes of interest and the set $\mathcal{H}_{\K_n}^w$. It is easy to see that $H(\de, \mathcal{H}_{\K_n}^{w/M})$ and $H(\de \Lambda_{\max}^{-1/2}, \mathcal{H}_{\K_n}^w)$ are proportional to $H(\de, \mathcal{H}_{\K_n}^w)$ where the proportionality constants depend on $M$ and $\Lambda_{\max}$, respectively. The next lemma establishes an upper bound for $H(\de, \mathcal{H}_{\K_n}^w)$ for the proposed choice of univariate and multivariate weights. This upper bound is all we need to specialize Theorem~\ref{thm:upperRate}. 

\begin{lemma}[An upper bound]
	\label{lemma:upperBoundEntropy}
	Suppose $\de\ge 0$. For the region $\mathcal{H}_{\K_n}^w$ with univariate weights $w_k = k^m - (k-1)^m$, 
	$H( \delta,\, \mathcal{H}_{\K_n}^w)\le U_{E,1}\delta^{-{1}/{m}}$, 
	for constant $U_{E,1} > 0$. Moreover, for the multivariate weights~(\ref{eqn:weightsMultivariateHB}), we have
	$H( \delta,\, \mathcal{H}_{\K_n}^w)\le U_{E,2}\delta^{-{p}/{m}}$,
	for constant $U_{E,2} > 0$.
\end{lemma}


While Lemma~\ref{lemma:upperBoundEntropy} is sufficient for applying Theorem~\ref{thm:upperRate}, to invoke Theorem~\ref{thm:lowerMinimax} we need an exact value for the entropy up to a proportionality constant. A natural way to achieve this is to find a lower bound for the entropy which matches the upper bound; we do this in the following lemma. 
\begin{lemma}[A lower bound]
	\label{lemma:lowerBoundEntropy}
	For $\de \in ((w_1+\cdots+w_{\K_n+1})^{-m}, 1/2)$, for the region $\mathcal{H}_{\K_n}^w$ with univariate weights, $w_k = k^m - (k-1)^m$, we have
	$H(\delta,\, \mathcal{H}_{\K_n}^w) \ge L_{E,1} \delta^{-{1}/{m}}$,
	and for the multivariate weights~(\ref{eqn:weightsMultivariateHB}) we have
	$H(\delta,\, \mathcal{H}_{\K_n}^w) \ge L_{E,2} \delta^{-{p}/{m}}$,
	for constants $L_{E,1}, L_{E,2}>0$ and where we assume, for simplicity, that $\K_n = q_{\K'}-1$ for some $\K'$ and $q_{\K'}$ as defined in \eqref{eqn:weightsMultivariateHB}.
\end{lemma}

Lemmas~\ref{lemma:reductionEuclidean}--\ref{lemma:lowerBoundEntropy} demonstrate the motivation for our weights $w_k$; they define a function class with the same entropy as an $m$th order Sobolev class. In fact, our class $\mathcal{F}_{\infty}^M$ is a subset of the $m$th order Sobolev class $\mathcal{G}_2^M$, where $\mathcal{G}_{q}^M = \{ f_{  {\beta}} \in \mathcal{F}_\infty: \sum_{k=1}^{\infty}(k^m|\beta_k|)^q  \le M^q  \}$ is the weighted $L_q$ space~\citep{rauhut2016interpolation}. Furthermore, we prove that $\mathcal{G}_1^M\subseteq \mathcal{F}_{\infty}^M \subseteq \mathcal{G}_2^M$; see the Supplementary Material. While the Sobolev class $\mathcal{G}_2^M$ is common in the literature~\citep{ravikumar2009sparse,geer2010lasso}, the class $\mathcal{G}_1^M$ has recently gained attention in the function interpolation literature; see, for example, \cite{rauhut2016interpolation}, \cite{candes2008enhancing} and the references therein.

\subsection{Specializing Theorems~\ref{thm:lowerMinimax} and \ref{thm:upperRate}} 
\label{sec:hierbasisConvergence}
The following proposition establishes a lower bound for the minimax rate of estimating $f^0$, the true function which belongs to some function class $\mathcal{F}$. We consider three different choices for $\mathcal{F}$: the univariate class~\eqref{eqn:functionalClass}; the multivariate class~\eqref{eqn:functionalClassMulti}; and the Sobolev class $\mathcal{G}_2^M$. To prove the result, we use the fact that if an upper bound for the convergence rate can be found that matches the lower bound, then we can conclude that our estimator is minimax.  
\begin{proposition}
	\label{lemma:lowerBoundHB}
	For the $m$th order function class 
	$
	\mathcal{\mathcal{F}}_{\infty}^M =\{f\in \mathcal{F}_\infty:  \sum_{k=1}^{\infty} w_k\|\beta_{k:\infty}\|_2  \le M\},
	$
	where $w_k = k^m - (k-1)^m$,
	\begin{equation*}
	\min_{\wh{f}}\max_{f^0\in \mathcal{F}_{\infty}^{M}} E \left( \big\|\wh{f} - f^0 \big\|_Q^2\right) \ge A_1n^{-{2m}/{(2m+1)}}.
	\end{equation*}
	For the $m$th order multivariate class
	$
	\mathcal{\mathcal{F}}_{p,\infty}^M =\{f\in \mathcal{F}_{p,\infty}: \sum_{k=1}^{\infty} w_k\|\beta_{k:\infty}\|_2  \le M\},
	$ where $w_k$ are the weights defined in \eqref{eqn:weightsMultivariateHB},
	\begin{equation*}
	\min_{\wh{f}}\max_{f^0\in \mathcal{F}_{p,\infty}^{M}}  E \left( \big\|\wh{f} - f^0 \big\|_Q^2 \right) \ge A_2n^{-{2m}/{(2m+p)}}.
	\end{equation*}
	Finally, for the $m$th order Sobolev class $\mathcal{G}_{2}^M = \{ f\in \mathcal{F}_{\infty}: \sum_{k=1}^{\infty} (k^m\beta_k)^2 \le M^2 \}$,
	\begin{equation*}
	\min_{\wh{f}}\max_{f^0\in \mathcal{G}_{2}^{M}} E \left( \big\|\wh{f} - f^0 \big\|_Q^2\right) \ge A_3n^{-{2m}/{(2m+1)}}.
	\end{equation*}
\end{proposition}
We now specialize Theorem~\ref{thm:upperRate} to establish an upper bound for the convergence rate of the proposed univariate and multivariate estimators. The following proposition reveals some interesting insights. Firstly, with respect to the empirical norm, $\|\cdot\|_n$, our estimators achieve the minimax rate for the classes $\mathcal{F}_{\infty}^M$ and $\mathcal{F}_{p,\infty}^M$, as defined in \eqref{eqn:populationHBpen}. For the Sobolev class, $\mathcal{G}_{2}^M$, if $\sum_{k=1}^{\infty}w_k\|  {\beta}_{k:\infty}\|_2 \le C(M)$ for all $f_{  {\beta}} \in \mathcal{G}_2^M$, then our univariate estimator is minimax over the Sobolev class as well. This result also gives insight into the role of $\K_n$.
\begin{proposition}
	\label{lemma:upperBoundHB}
	Consider the model $y_i = f^0(  {x_i}) + \e_i\ (i=1,\ldots,n)$ for mean zero, sub-Gaussian noise $\e_i$. 
	Define the univariate and multivariate estimators as
	\begin{equation*}
	\wh{f}^{\text{uni}} = \underset{f_{\beta}\in \mathcal{F}_n }{\arg\min }\  \frac{1}{2}\left\|   {y} - f_{\beta}\right\|_n^2 + \lambda_n^2\Om^{\text{uni}}(  {\beta}); \quad \wh{f}^{\text{multi}} = \underset{f_{\beta}\in \mathcal{F}_{p, n} }{\arg\min }\  \frac{1}{2}\left\|   {y} - f_{\beta}\right\|_n^2 + \lambda_n^2\Om^{\text{multi}}(  {\beta}) ,
	\end{equation*}
	for $p = 1$ and $p>1$, respectively, where $\Om^{\text{uni}}$ is the penalty in \eqref{eqn::hierTrunc} and $\Om^{\text{multi}}$ is the penalty in \eqref{eqn:weightsMultivariateHB}.
	Assume that 
	$
	\max_{k}\|\psi_k\|_{\infty} = \psi_{\max} <\infty 
	$
	and that the Gram matrix $  {\Psi}_{\K_n}^{\top}  {\Psi}_{\K_n}/n$ has a bounded maximum eigenvalue denoted by $\Lambda_{\max}$. Then:
	
	for $p=1$ and $f^0\in \mathcal{F}_{\infty}^M$ there is a constant $c>0$ such that for all $T\ge c$, with probability at least $1-c\exp\left\{-(T/c)^2\right\}$,
	\begin{equation*}
	\|\wh{f}^{\text{uni}} - f^{0}\|_n^2 \le 5 \max \left\{C_1\K_n^{-(2m-1)},\ C_2n^{{-{2m}/{(2m+1)}}}
	\right\},
	\end{equation*}
	where $C_1,\, C_2>0$ are constants that depend on $M$, $\psi_{\max}$, $\Lambda_{\max}$, $m$ and $T$;
	
	for $p=1$, $f^0\in \mathcal{G}_2^M$ there is a constant $c>0$ such that for all $T\ge c$, 
	\begin{equation*}
	\|\wh{f}^{\text{uni}} - f^{0}\|_n^2 \le 5 \max \left\{C_1\K_n^{-(2m-1)},\ C_2C_3n^{{-{2m}/{(2m+1)}}}
	\right\},
	\end{equation*}
	with probability at least $1-c\exp\left\{-(T/c)^2\right\}$, where $C_1,\, C_2>0$ are constants that depend on $M$, $\psi_{\max}$, $\Lambda_{\max}$, $m$, $T$ and, for $f^0 = \sum_{k=1}^{\infty}\psi_k\beta^0_k$, we have $C_3^{(2m+1)/2} =  \sum_{k=1}^{\infty} w_k\|  {\beta}^0_{k:\infty}\|_2$;
	
	for $1 < p  < 2m$ and $f^0\in \mathcal{F}_{p,\infty}^M$, let $\K'$ be such that $\K_n = q_{\K'}-1$ for $q_{\K'}$ as in \eqref{eqn:weightsMultivariateHB}. Then there is a constant $c>0$ such that for all $T\ge c$, with probability at least $1-c\exp\left\{-(T/c)^2\right\}$,
	\begin{equation*}
	\|\wh{f}^{\text{multi}} - f^{0}\|_n^2 \le 5 \max \left\{C_1\K'^{-(2m-1)},\ C_2n^{{-{2m}/{(2m+p)}}}
	\right\},
	\end{equation*}
	where $C_1,\, C_2>0$ are constants that depend on $M$, $\psi_{\max}$, $\Lambda_{\max}$, $m$, and $T$.
\end{proposition}
The pre-truncation level $\K_n$ in Proposition~\ref{lemma:upperBoundHB} is not the truncation order selected by our proposal; rather, it is the pre-specified maximum order of our proposal with conservative truncation in \eqref{eqn::hierTrunc}. The above result demonstrates that we achieve usual non-parametric rates as long as the truncation level $\K_n$ satisfies $\K_n \gtrsim n^{2m/\{(2m+1)(2m-1)\} }$ justifying $n^{2/3}$ as a conservative choice. Furthermore, if a function belongs to the $m$th order hierarchical class, then it also belongs to the $m'$th order class for all $m'\le m$ and Proposition~\ref{lemma:upperBoundHB} holds with $m$ replaced by $m'$. This means we can misspecify the smoothness order in our estimator and still get nonparametric convergence rates.

\subsection{Theoretical results for sparse additive models}
\label{sec:additiveHierBasisConvergence}
We next establish convergence rates of high-dimensional sparse additive models in terms of a general entropy condition. 
Our first contribution is an oracle inequality for an upper bound on the prediction error of additive models, which establishes the consistency of estimators with slow convergence rates; these rates are ${O}(s\nu_n)$ where $s\nu_n^2$ is the minimax lower bound of \cite{raskutti2009lower} for sparse additive model and, $s$ is the cardinality of the set $S$ defined below. For completeness, in Theorem~\ref{thm:lowerAdditiveMinimax} we state the result of \cite{raskutti2009lower}, which assumes independent covariates. 
We then proceed to state a {compatibility condition} which leads to two propositions: firstly, it establishes convergence rates of order ${O}(s\nu_n^2)$ and, secondly, it automatically establishes minimax rates for univariate regression as a special case of an additive model with $p=1$. These contributions extend to a broad class of estimators; consequently, we can establish new results on convergence rates for some existing methods, including methods that extend smoothing splines and trend filtering \citep{meier2009high,sadhanala2017additive}.

Let $f^0$ be the true function such that
$y_i = f^0(  {x_i}) +\e_i \ (i =1,\ldots, n)$,
for independent, mean-zero noise $\e_i$, ${x_i} = (x_{i1},\ldots,x_{ip})^{\top} \in \mathbb{R}^p$.
Let $f^*$ be a sparse additive approximation to $f^0$,   
\begin{equation*}
f^*(  {x_i}) = c^0 + \sum_{j = 1}^p f^*_j(x_{ij}) = c^0 + \sum_{j \in S} f^*_j(x_{ij}),
\end{equation*}
where $S = \{j:f_j^*\not= 0\}$, which we call the active set, is a subset of $\{1,\ldots, p\}$ of size $s=|S|$ and, $c^0 = E(\bar{y})$ where $\bar{y}$ is the sample mean. To ensure identifiability, we assume 
$\sum_{i=1}^{n} f^*_j(x_{ij}) = 0$  $(j =1, \ldots, p).$
Consider the estimator $\wh{f} = \sum_{j=1}^{p} \wh{f}_j$, where 
\begin{equation}
\wh{f}_1,\ldots, \wh{f}_p = \underset{(f_j)_{j=1}^p\in \mathcal{F}}{ \arg\min }\  \frac{1}{2n}\sum_{i=1}^{n} \Big\{ y_i - \bar{y} - \sum_{j=1}^{p}f_{j}(x_{ij}) \Big\}^2 + \lambda_n\sum_{j=1}^{p}I(f_j)\ ,
\label{eqn:sparseAdditiveOptim}
\end{equation}
where $I(\cdot)$ is a penalty of the form 
$I(f_j) = \|f_j\|_n + \lambda_n \up(f_j),$
for a semi-norm $\up(\cdot)$. We can think of $\up(f_j)$ as a smoothness penalty for function $f_j$.

\begin{theorem}[Theorem 1, \cite{raskutti2009lower}]
	\label{thm:lowerAdditiveMinimax}
	Consider $n$ independent identically distributed samples from the sparse additive model 
	$y_i  = \sum_{j\in {S}} f^0_{j}(x_{ij}) + \e_i$ $(i=1,\ldots,n)$,
	where $|S| = s\le p/4$, $  {x_i}\sim Q$, $\e_i \sim \mathcal{N}(0,\sigma^2)$ and, $f^0_j \in \mathcal{F}$ where $\mathcal{F}$ is a class satisfying the entropy condition
	$H(\delta,\, \mathcal{F},\, Q) = A_0\delta^{-1/m},$
	with $m>1/2$. Further assume the covariates are independent, so, $Q = \bigotimes_{j=1}^p Q_j$. Then for a constant $C>0$, 
	\begin{equation*}
	\min_{(\wh{f})_{j=1}^p}\max_{ (f^0_{j} )_{j=1}^p\in \mathcal{F}}\ E\Big( \Big\|\sum_{j=1}^{p}\wh{f}_j - f^0_{j} \Big\|_Q^2  \Big) \ge \max\Big\{ \frac{\sigma^2 s\log(p/s)}{32n} ,\ C s\Big(\frac{\sigma^2}{n}\Big)^{{2m}/{(2m+1)}} \Big\},
	\end{equation*}
	where the minimum is over the set of all measurable functions.
\end{theorem}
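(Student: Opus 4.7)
The plan is to establish each of the two terms inside the maximum as a separate lower bound via Fano's inequality, then combine them. The first term reflects the cost of selecting the active set among $\binom{p}{s}$ candidate subsets, while the second reflects the cost of nonparametric estimation of $s$ independent component functions; the independence of covariates plays an essential role in both arguments.

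For the sparsity term $\sigma^2 s \log(p/s)/(32 n)$, I would construct a finite family of hypotheses indexed by subsets $S \subseteq \{1,\ldots,p\}$ of size $s$. Fix a bounded $g\in \mathcal{F}$ with $\|g\|_Q = 1$ and define $f_S(\mathbf{x}) = \delta \sum_{j \in S} g(x_j)$ for a small $\delta > 0$ to be chosen. By the Varshamov-Gilbert lemma there is a subcollection of such $S$ of cardinality $M\ge \exp(c\,s\log(p/s))$ whose pairwise symmetric differences have size at least $s/2$. Independence of the covariates gives $\|f_S - f_{S'}\|_Q^2 \ge (s/2)\delta^2$ for any two such hypotheses, while the KL divergence between the associated Gaussian observation laws is $n\|f_S - f_{S'}\|_Q^2/(2\sigma^2) \le n s\delta^2/\sigma^2$. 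Choosing $\delta^2$ proportional to $\sigma^2\log(p/s)/n$ and invoking Fano's inequality delivers a lower bound of order $s\delta^2 \asymp \sigma^2 s \log(p/s)/n$, with the explicit constant $1/32$ extracted by careful bookkeeping.

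For the nonparametric term $Cs(\sigma^2/n)^{2/(2+\alpha)}$, fix the active set $S = \{1,\ldots,s\}$ and exploit the product structure $Q = \bigotimes_{j=1}^p Q_j$. Centering yields the additive Pythagorean identity $\|\sum_{j=1}^s f_j\|_Q^2 = \sum_{j=1}^s \|f_j\|_{Q_j}^2$, so the additive class $\mathcal{G}_s = \{\sum_{j=1}^s f_j: f_j \in \mathcal{F}\}$ is isometric to the $\ell^2$-product of $s$ copies of $\mathcal{F}$. Covering each coordinate at radius $\epsilon/\sqrt{s}$ gives $H(\epsilon, \mathcal{G}_s, Q) \le A_0 s^{1+\alpha/2}\epsilon^{-\alpha}$, and a matching lower bound follows from an $s$-fold product of coordinatewise packings, pruned by Varshamov-Gilbert so that pairwise separations are uniform. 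Applying Theorem~\ref{thm:lowerMinimax} with effective entropy constant $A_0 s^{1+\alpha/2}$ produces a minimax lower bound of order $(A_0 s^{1+\alpha/2})^{2/(2+\alpha)} (\sigma^2/n)^{2/(2+\alpha)} = s\cdot(A_0\sigma^2/n)^{2/(2+\alpha)}$, since $(1+\alpha/2)\cdot 2/(2+\alpha) = 1$. This is exactly the second term.

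The principal obstacle is producing a tight \emph{lower} bound for the entropy of the product class $\mathcal{G}_s$: a naive product packing starting from an $\epsilon$-packing of $\mathcal{F}$ has elements that differ in only a single coordinate and are therefore separated by only $\epsilon$, which would collapse the rate to the univariate bound. The remedy is to combine a $(c\epsilon/\sqrt{s})$-packing in each coordinate with a Varshamov-Gilbert pruning of the $s$-dimensional product so that surviving elements differ in a constant fraction of coordinates, thereby restoring the $\sqrt{s}$-aggregation and producing a packing of cardinality $\exp(c's\epsilon^{-\alpha})$. Independence of $Q$ is critical here, both for the coordinatewise Pythagorean decomposition and for additivity of KL divergences across coordinates. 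Tracking the explicit absolute constant $1/32$ in the sparsity term is routine but tedious.
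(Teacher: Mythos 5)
This statement is not proved in the paper; it is restated verbatim as a cited result from Raskutti, Wainwright and Yu, so there is no ``paper's proof'' to compare against. Your reconstruction does, however, track the original argument correctly in outline: a Fano-style lower bound over constant-weight subset codes for the search term, and an entropy-based nonparametric lower bound applied to the $s$-fold additive class for the estimation term. A few points deserve attention.

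In the sparsity construction you fix $g\in\mathcal{F}$ with $\|g\|_Q=1$ and set $f_S=\delta\sum_{j\in S}g(x_j)$; for the claim $\|f_S-f_{S'}\|_Q^2 = |S\triangle S'|\,\delta^2$ you need $g$ to be mean-zero under each $Q_j$, not merely normalized. Independence alone does not kill the cross terms $\mathbb{E}[g(x_j)g(x_k)]=\mathbb{E}[g(x_j)]\mathbb{E}[g(x_k)]$; centering is what does. You do invoke centering in the second half (for the Pythagorean identity), but it is equally essential here.

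In the nonparametric part you invoke Theorem~\ref{thm:lowerMinimax} with the effective entropy constant $A_0 s^{1+\alpha/2}$ and then use the scaling $A_1\propto A_0^{2/(2+\alpha)}$ to extract the factor $s$. That scaling is correct and well known, but the statement of Theorem~\ref{thm:lowerMinimax} as quoted in the paper only says ``$A_1$ is a constant that depends on $A_0$, $\alpha$, and $\sigma^2$'' without giving the explicit form. As written, you cannot legitimately pull the $s$ out; you would either need to re-derive the Yang--Barron bound with $A_0$ tracked, or (as in the original reference) apply Fano directly to the tensorized packing. Relatedly, the entropy lower bound you sketch via a Gilbert--Varshamov-pruned product packing gives $\exp(c\,s^{1+\alpha/2}\epsilon^{-\alpha})$, not $\exp(c\,s\,\epsilon^{-\alpha})$ as written in your last paragraph; the $s^{\alpha/2}$ factor comes from rescaling the coordinate radius to $\epsilon/\sqrt{s}$, and is precisely what makes $(1+\alpha/2)\cdot\tfrac{2}{2+\alpha}=1$ come out to the clean factor of $s$. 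Finally, the pruning only delivers the matching entropy lower bound in a window of $\epsilon$ (the coordinate packings must be nontrivial and the VG pruning must leave a constant fraction), which is the standard technical caveat that must be checked to apply the Yang--Barron machinery uniformly; worth a sentence in a complete writeup.
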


We next state the first key result of this section, which establishes an oracle inequality for additive models, as well as slow rates of convergence.  
\begin{theorem}
	\label{thm:thm1}
	\sloppy Assume the model $y_i  =f^0(x_{i}) + \e_i$ $(i=1,\ldots,n)$, with mean-zero $\e_i$ satisfying $\max_{i\in \{1,\ldots,n\}} L^2 \left( E \left[ \exp\left\{(\e_i/L)^2 \right\} \right] -1 \right)\le \sigma_0^2$ for constants $L$ and $\sigma_0$. Assume the entropy condition 
	$H[\delta, \{f\in \mathcal{F} : \up(f)\le 1\}, Q_n] \le A_0\delta^{-1/m},$
	holds for $m>1/2$, for some function class $\mathcal{F}$ and, some constant $A_0$. Let
	$
	\rho_n = \kappa\max \{ n^{-{m}/{(2m+1)}}, ( {\log p}/{n})^{1/2} \},
	$
	where $\kappa = \kappa(A_0, m, L, \sigma_0)$ is a sufficiently large positive constant. 
	Then, for $\lambda_n\ge 4\rho_n$, with probability at least $1-2\exp(-c_1n\rho_n^2 ) -c_2\exp(-c_3n\rho_n^2 )$ the estimator (\ref{eqn:sparseAdditiveOptim}) satisfies
	\begin{equation*}
	\small
	\label{eqn:inequalitySlowRates}
	\begin{split}
	\|\wh{f} - f^0\|_n^2 + \lambda_n\sum_{j\in S^c}& \|\wh{f}_j - f^*_j\|_n + \frac{3\lambda_n^2}{2}\sum_{j\in S}\up(\wh{f}_j - f^*_j)
	\le 3 \lambda_n\sum_{j\in S} \|\wh{f}_j - f^*_j\|_n + {4\lambda_n^2} \sum_{j\in S}\up(f^*_j) + \|f^* - f^0\|_n^2,
	\end{split}
	\end{equation*}
	where $c_1 = c_1(A_0,\sigma_0),\, c_2 = c_2(A_0, m,L,\sigma_0)$ and $c_3 \ge 1/c_2^2$ are positive constants.
	Furthermore, if the function class $\mathcal{F}$ satisfies $\sup_{f\in \mathcal{F}}\|f\|_n\le R$, we have
	\begin{equation*}
	\label{eqn:inequalitySlowRatesMAIN}
	\|\wh{f} - f^0\|_n^2 \le 2 C_{s}\max\Big\{ sn^{-{m}/{(2m+1)}} ,\, s\Big(\frac{\log p}{n}\Big)^{1/2}\Big\}  + \|f^*-f^0\|_n^2,
	\end{equation*}
	where $C_s\ge 0$ depends on $\kappa,\, R$ and $\sum_{j\in S}\up(f_j^*)/s$.
\end{theorem}
Theorem~\ref{thm:thm1} needs the same assumptions as Theorem~\ref{thm:upperRate}, namely sub-Gaussian noise $\e_i$, and an entropy condition on the univariate function class $\mathcal{F}$. Its second part assumes a bound on the univariate class $\mathcal{F}$ to control the term $\|\wh{f}_j - f^*\|_n$. In Proposition~\ref{thm:additiveFastRates} we drop this bounded class assumption and establish fast rates of convergence using the compatibility condition stated next. 

\begin{definition}[Compatibility Condition]
	We say that the compatibility condition is met for the set $\wt{S}$, if for some constant $\phi(\wt{S})>0$, and for all additive $ f = \sum_{j=1}^{p}f_j$, satisfying
	$\sum_{j\in \wt{S}^c}\|f_j\|_n 
	\le 4\sum_{j\in \wt{S}} \|f_j\|_n,$
	it holds that
	$
	\sum_{j\in \wt{S}}\|f_j\|_n \le |\wt{S}|^{1/2}\|f\|_n/\phi(\wt{S}).
	$
\end{definition}

The above condition is a functional analogue of the compatibility condition used to prove oracle inequalities for the lasso~\citep{geer2009conditions}. Such conditions are common in the high-dimensional literature, for example \cite{meier2009high} and \cite{geer2010lasso} use a similar condition; recently \cite{raskutti2012minimax} and \cite{yuan2015minimax} used a functional version of the restricted eigenvalue condition for proving fast rates in additive models. 

\begin{proposition}
	\label{thm:additiveFastRates}
	Assume the conditions of Theorem~\ref{thm:thm1} and the compatibility condition for $S = \{j: f_j^*\not\equiv0\}$ hold. Then, with probability at least $1-2\exp(-c_1n\rho_n^2)-c_2\exp(-c_3n\rho_n^2)$, 
	\begin{equation*}
	\frac{1}{2}\|\wh{f} - f^0\|_n^2 
	\le C_{f}\max\left\{ sn^{-{2m}/{(2m+1)}} ,\, {\frac{s\log p}{n}}\right\} 
	+ 2\|f^* - f^0\|_n^2,
	\end{equation*}
	where $C_f\ge 0$ is a constant that depends on $\phi(S)$ and $\sum_{j\in S} \up(f^*_j)/s$.
\end{proposition}

Misspecifying the order $m$ is especially important here since $f_j$ can have different orders of smoothness. Using the same argument as the univariate case, let $m_j\ (j=1,\ldots,p)$ denote the smoothness order of the $j$th component; then our results are valid for any $m\le \min_j m_j$.
We end this section by specializing Theorem~\ref{thm:thm1} to univariate regression. 

\begin{proposition}
	Assuming the conditions of Theorem~\ref{thm:thm1} with $p=1$, 
	the compatibility condition holds trivially  with $\phi(S)=1$. Moreover, for a constant $C_f\ge 0$ that depends on $\up(f^*)$,  
	\begin{equation*}
	\frac{1}{2}\|\wh{f} - f^0\|_n^2 \le C_fn^{-{2m}/{(2m+1)}} + 2\|f^* - f^0\|_n^2,
	\end{equation*}
	with probability at least $1-2\exp\left\{-c_1\kappa n^{{1}/{(2m+1)}}\right\}-c_2\exp\left\{ - c_3 \kappa n^{{1}/{(2m+1)}}\right\}$.
\end{proposition}

\section{Simulation studies}
\label{sec:SimulationStudy}
\subsection{Simulation for univariate regression}
We begin with a simulation to compare the performance of our univariate framework to smoothing splines~\citep{wahba1990spline} and trend filtering~\citep{kim2009ell1,tibshirani2014adaptive}. Smoothing splines and trend filtering are implemented in \texttt{R} packages \texttt{splines} and \texttt{genlasso}~\citep{arnold2014genlasso}.

\begin{figure}
	\centering
	\begin{tabular}{cc}
		\includegraphics[width = 0.4\textwidth]{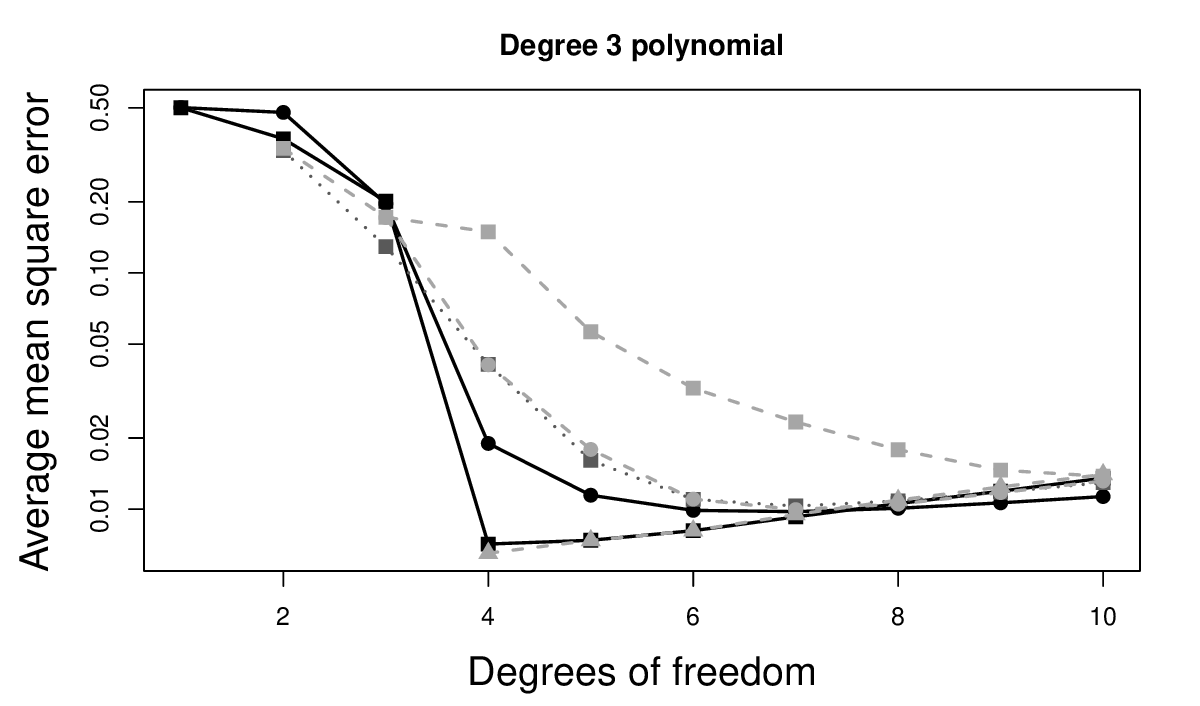} &
		\includegraphics[width = 0.4\textwidth]{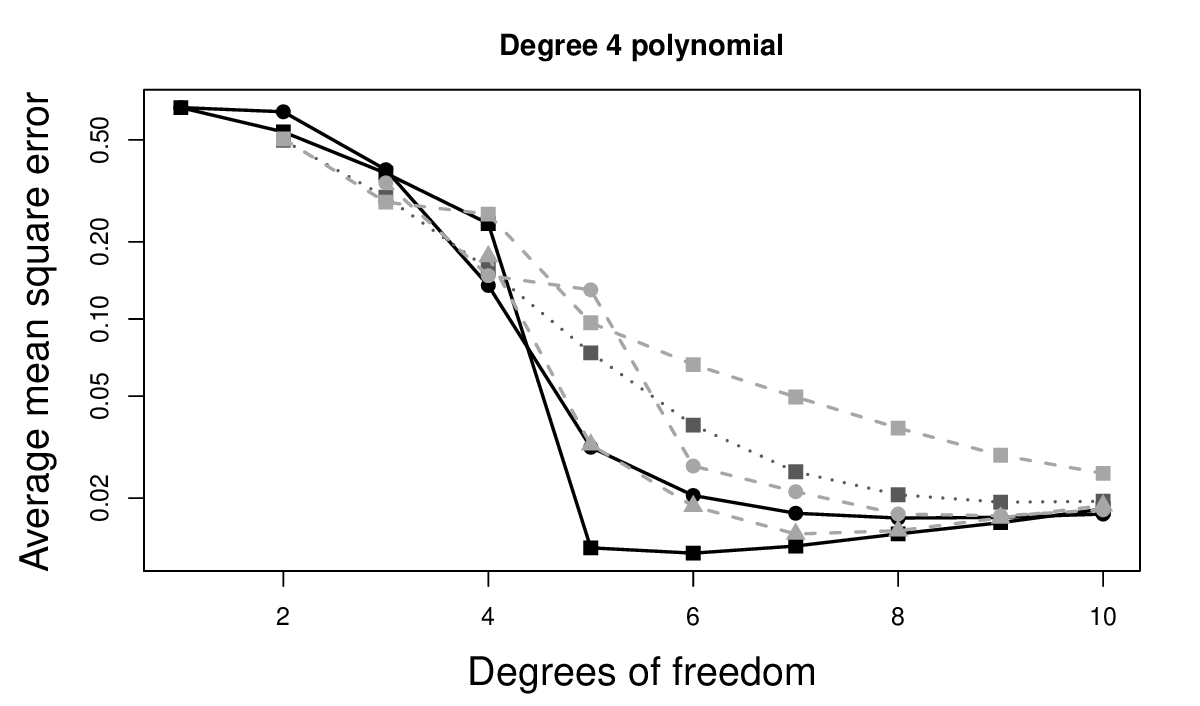} \\
		\includegraphics[width = 0.4\textwidth]{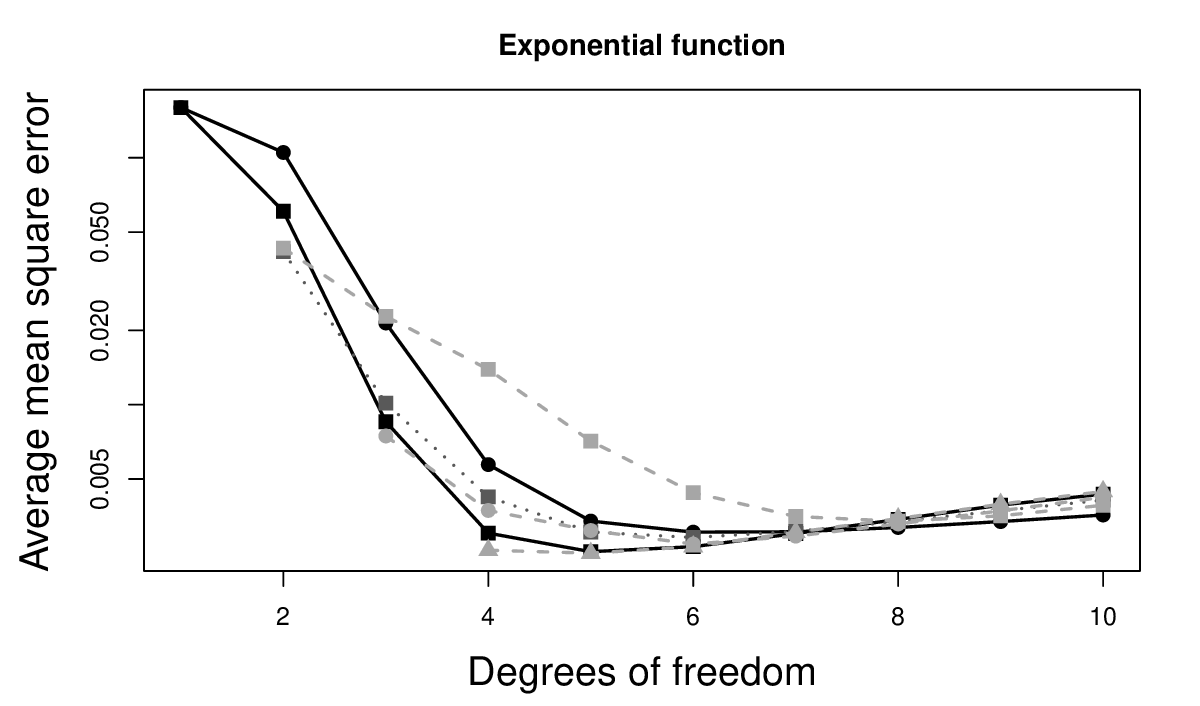} &
		\includegraphics[width = 0.4\textwidth]{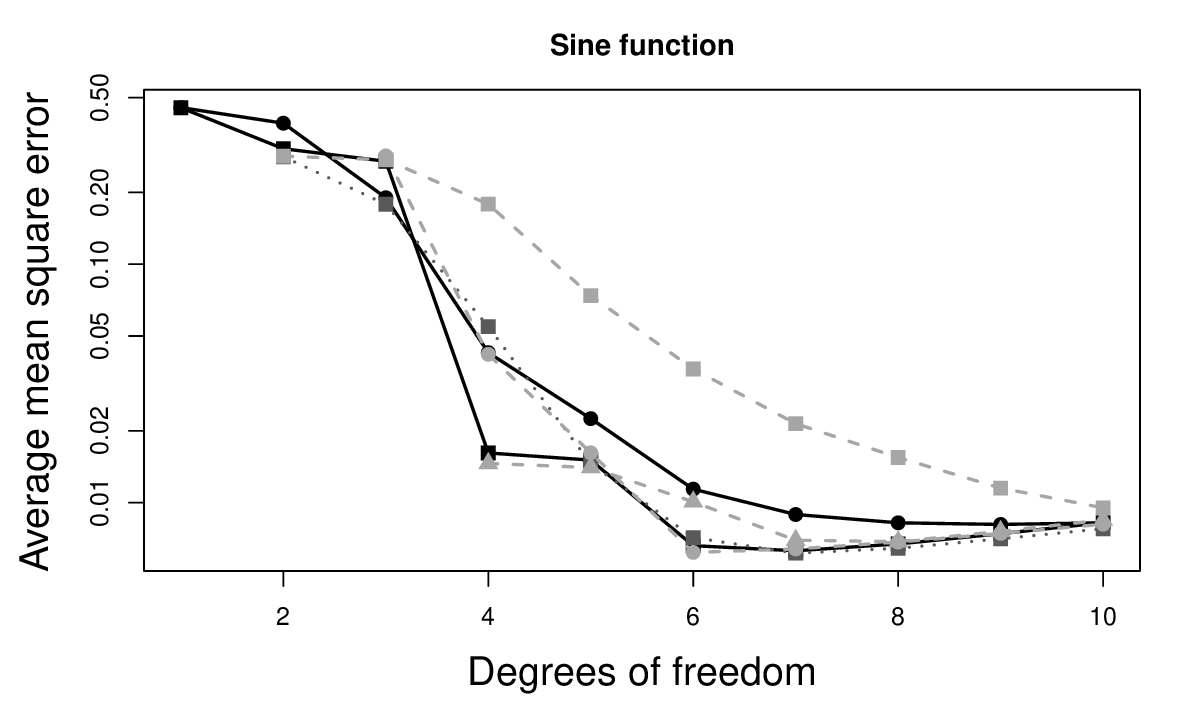}
	\end{tabular}
	\caption{Average mean square error, over 100 simulated datasets, as a function of degrees of freedom for true models given by $g_1,\,\ldots,\, g_4$ in \eqref{eq:simgs}. The colored lines indicate results for our framework with $m = $ 3~(\protect\includegraphics[scale=0.4]{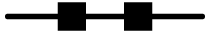}) and 1~(\protect\includegraphics[scale=0.4]{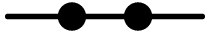}), Trend Filtering of order 1~(\protect\includegraphics[scale=0.4]{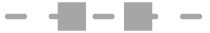}), 2~(\protect\includegraphics[scale=0.4]{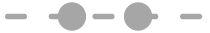}) and 3~(\protect\includegraphics[scale=0.4]{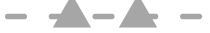}), and Smoothing Splines~(\protect\includegraphics[scale=0.4]{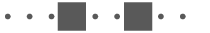}).} 
	\label{fig:MSEvsDOF}
\end{figure}


We generate the data as $y_i=f^0(x_i) + \e_i\ (i=1,\ldots,n)$ for different choices of the function $f^0$, and errors $\e_i \sim \mathcal{N}(  {0},\sigma^2)$ with $\sigma^2$ chosen to attain a fixed signal-to-noise ratio, 
$
\snr = {(n-1)}^{-1}\sum_{i=1}^{n} \{f^0(x_i)\}^2   / \sigma^2.
$
For this simulation we consider a fixed design with $x_i = i/n\ (i =1,\ldots, n)$. This facilitates comparison to trend filtering, which can become substantially slow for random $x_i$, particularly when the covariates are not uniformly distributed over a closed interval. We consider $n = 150$, $\snr \in \{2,3\}$, and four different functions, $f^0$ 
\begin{equation}
\label{eq:simgs}
\begin{split}
g_1(x) &= - \text{0$\cdot$43}+ \text{4$\cdot$83}x - \text{14$\cdot$65}x^2 + \text{11$\cdot$76}x^3,\\  
g_2(x) &= \text{0$\cdot$23} - \text{8$\cdot$44}x + \text{45$\cdot$20}x^2 - \text{81$\cdot$41}x^3 + \text{46$\cdot$59}x^4,\\ 
g_3(x) &= \exp( -5x + \text{0$\cdot$5} ) - \text{0$\cdot$4} \sinh( \text{2$\cdot$5} ), \quad g_4(x) = -\sin(7x - \text{0$\cdot$4}).
\end{split}
\end{equation}
We apply our proposal using a sequence of 100 $\lambda$ values linear on the log scale from $\lambda_{\max}$, for which $\wh{\beta} = 0$, down to $10^{-4}\lambda_{\max}$. We apply smoothing splines on a grid of 100 values of degrees of freedom from 10 to 1. Trend filtering is applied on a sequence of $\lambda$ values automatically selected by its \texttt{R} implementation with an average length of 279, 448 and 612 for trend filtering of order 1, 2 and 3, respectively. 
%
%
Figure~\ref{fig:MSEvsDOF} displays the mean square prediction error, $\mse = \|f^0 - \wh{f}\|_n^2$, of our method with $m = 1$ and $3$, smoothing splines and trend filtering of orders 1, 2 and 3, as a function of degrees of freedom. 
Our proposal appears to outperform the competitors in terms of mean square prediction error especially for polynomials. We observe comparable performance for the exponential and sine functions. This also provides empirical evidence for the theoretical results, where we proved our method to converge with rates comparable to smoothing splines. Since the functions considered in this simulation are smooth, as expected, our method with $m=1$ does not converge as fast as competing methods. 


We also compare the methods using an optimal tuning parameter that minimizes their prediction error on an independent test set of size $75$. The ratio of $\mse$ for each method over the $\mse$ of our proposal, and the corresponding ratios of degrees of freedom are shown in Table~\ref{tab:tab1}. 

\begin{table}
	\centering
	\caption{Average mean square errors of existing approaches relative to our univariate proposal with $m =3$; a value greater than 1 indicates a lower corresponding value for our method. The results presented are averages over 100 datasets along with $100\times$standard errors in parentheses}
		\begin{tabular}{lcccccccc}
			& \multicolumn{2}{c}{Degree 3 polynomial} & \multicolumn{2}{c}{Degree 4 polynomial}& \multicolumn{2}{c}{Exponential function} & \multicolumn{2}{c}{Sine function}\\
			& $\df$ &  $\mse$  & $\df$ &  $\mse$ & $\df$ &  $\mse$ & $\df$ &  $\mse$\\
			SS & 1$\cdot$32 (03) &  1$\cdot$48 (07) & 1$\cdot$25 (30) &  1$\cdot$58 (09) & 1$\cdot$13 (03) &  1$\cdot$21 (04) &  1$\cdot$10 (03) &  1$\cdot$00 (04)\\
			TF-1 & 1$\cdot$88 (13) &      2$\cdot$40 (30) & 1$\cdot$92 (12) &  2$\cdot$46 (30) & 1$\cdot$67 (16) &  1$\cdot$76 (31) & 1$\cdot$85 (13)  & 1$\cdot$88 (24) \\
			TF-2 & 1$\cdot$30 (09) &   1$\cdot$61 (24) &  1$\cdot$42 (09) &  1$\cdot$81 (26) & 1$\cdot$34 (13) &  1$\cdot$48 (29) & 1$\cdot$20 (10) & 1$\cdot$30 (21) \\
			TF-3 & 1$\cdot$06 (12) &  1$\cdot$37 (32) & 1$\cdot$27 (11) &  1$\cdot$66 (31) & 1$\cdot$47 (16) &  1$\cdot$51 (35) & 1$\cdot$34 (12)  & 1$\cdot$48 (25)
	\end{tabular}
	\begin{tablenotes}
		\small
		
		\item $\mse$, mean square prediction error; $\df$, degrees of freedom; SS, smoothing splines; TF-1, first order trend filter; TF-2, second order trend filter; TF-3, third order trend filter.
	\end{tablenotes}
	\label{tab:tab1}
\end{table}

%

\subsection{Simulation for multivariate additive regression}
\label{sec:SimulationAdditive}
Next, we compare the performance of our sparse additive estimator with   \citeauthor{ravikumar2009sparse}'s method. Their method is implemented in the \texttt{R} package \texttt{SAM}~\citep{zhao2014SAM} which uses natural spline basis functions. For a fairer comparison, we also implement their method using a polynomial basis expansion. Due to a lack of \texttt{R} packages for proposals of \cite{meier2009high} and \cite{lou2016sparse}, we defer the comparison to these methods to future work.

\begin{figure}
	\centering
	\includegraphics[width = 0.45\textwidth, clip=TRUE, trim=3mm 3mm 5mm 10mm
	]{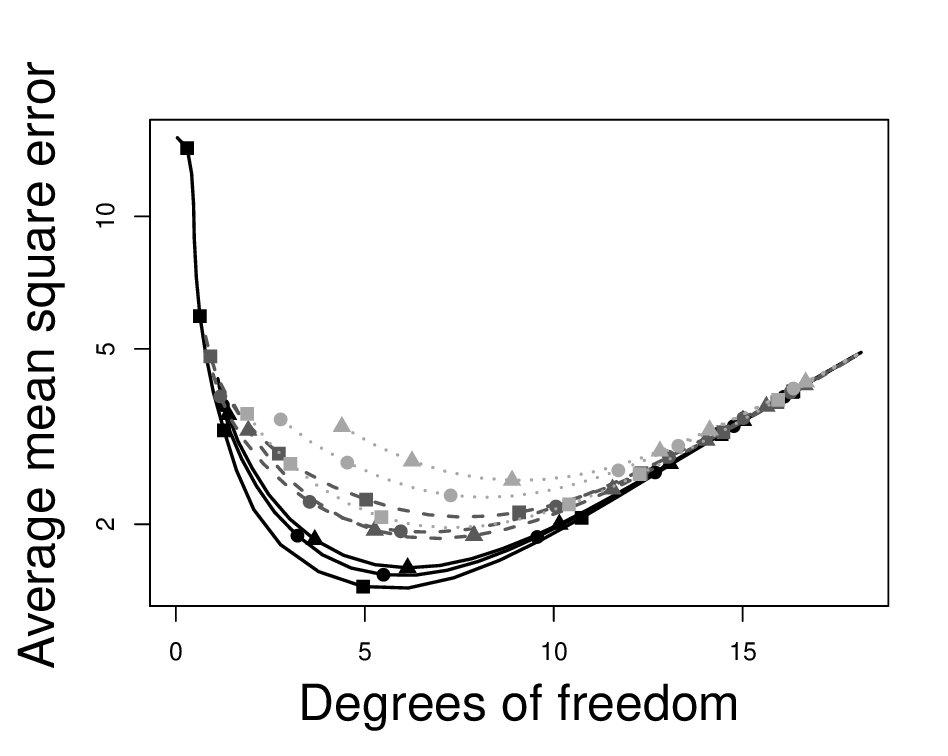}
	\includegraphics[width = 0.45\textwidth, clip=TRUE, trim=3mm 3mm 5mm 10mm
	]{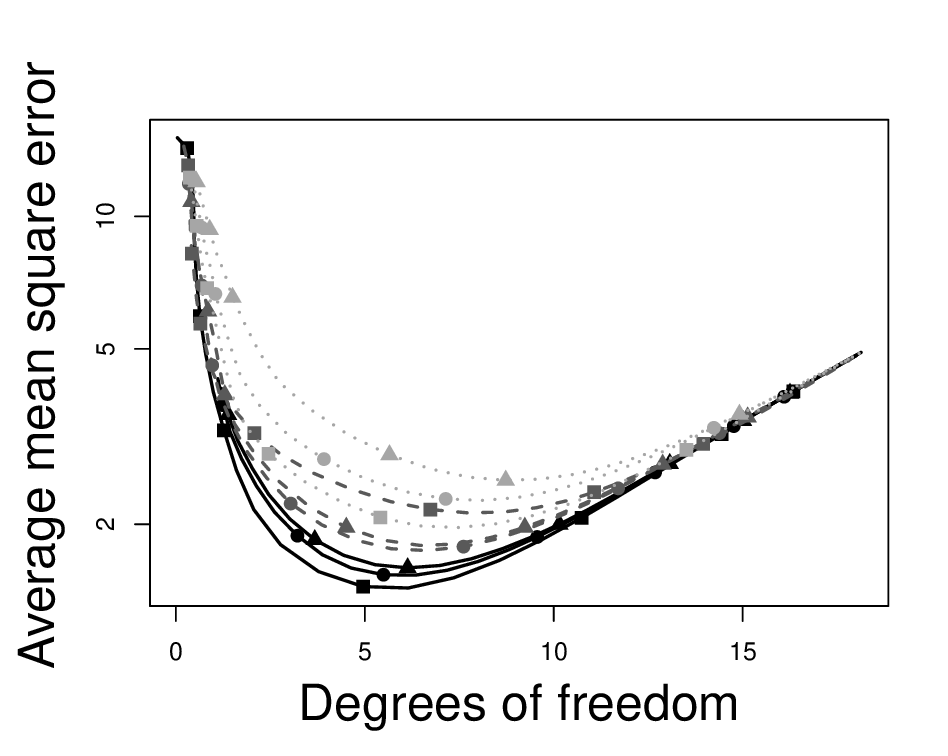}
	\caption{Average mean square error, over 100 simulated datasets, as a function of degrees of freedom for our proposal with $m=$ 1~(\protect\includegraphics[scale=0.3]{15_gray0_1.eps}),  
		2~(\protect\includegraphics[scale=0.3]{16_gray0_1.eps}) and 3~(\protect\includegraphics[scale=0.3]{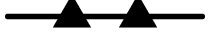}), compared to  \cite{ravikumar2009sparse} with 3~(\protect\includegraphics[scale=0.3]{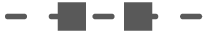}), 
		5~(\protect\includegraphics[scale=0.3]{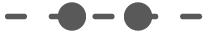}), 
		8~(\protect\includegraphics[scale=0.3]{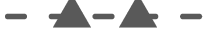}), 
		10~(\protect\includegraphics[scale=0.3]{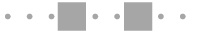}),
		15~(\protect\includegraphics[scale=0.3]{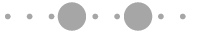}) and 
		20~(\protect\includegraphics[scale=0.3]{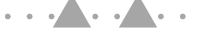}) basis functions. Left: Polynomial basis functions. Right: Natural spline basis functions.} 
	\label{fig:MSEvsLam}
\end{figure}


We consider the simulation setting of \cite{meier2009high} with some modifications to have high-dimensional data and smaller signal-to-noise ratio. We generate $n = 200$ samples for $p = 500$ features. The data is generated as
$y_i = 5f_1(x_{i1})+3f_2(x_{i2})+4f_3(x_{i3})+6f_4(x_{i 4}) + \e_i$ ($i=1,\ldots,n$), 
where $\e_i\sim \mathcal{N}(0, \sigma^2)$ with $\sigma^2$ such that $\snr = 3$ and 
\begin{gather*}
f_1(x) = x,\quad  f_2(x) = (2x-1)^2,\quad f_3 = 2\sin(2\pi x)/ \left\{2 - \sin(2\pi x )\right\},\\
f_4(x) = \text{0$\cdot$1}\sin(2\pi x) + \text{0$\cdot$2}\cos(2\pi x) + \text{0$\cdot$3}\sin^2(2\pi x) + \text{0$\cdot$4}\cos^3(2\pi x) + \text{0$\cdot$5}\sin^3(2\pi x)\ ;
\end{gather*}
the covariates are independently drawn from Uniform$(0,1)$. For $m=1$, we use the parametrization \eqref{eqn::AhierTrunc}. For $m = 2$ and $3$, we use 
\begin{equation}
\underset{  {\beta}_j \in \mathbb{R}^{\K} }{\text{minimize}} \frac{1}{2}\, \Big\|   {y} - \sum_{j=1}^{p} \Psi_{\K}^j  {\beta}_j \Big\|_n^2 + \gamma\lambda \sum_{j=1}^{p} \Om_j(  {\beta}_j) + (1-\gamma)\lambda \sum_{j=1}^{p} \|\Psi_{\K}^j  {\beta}_j\|_n \, ,
\end{equation}
with $\gamma =$ 0$\cdot$01 and 0$\cdot$001, respectively. All methods were fit over a sequence of 50 $\lambda$ values, decreasing linearly on the log-scale. We set the maximum number of basis functions $\K = 20$ for our estimator and use 3, 5, 8, 10, 15 and, 20 basis functions for \citeauthor{ravikumar2009sparse}'s proposal.

In terms of mean square prediction error, it is not surprising to observe superior performance of our methodology over that of \citeauthor{ravikumar2009sparse}'s proposal with polynomial basis in Figure~\ref{fig:MSEvsLam}. However, in the same figure, our method also seems to outperform the original proposal of \cite{ravikumar2009sparse} using natural splines. Overall, our proposal achieves the smallest mean square error for a substantial interval of degrees of freedom values in both panels of Fig.~\ref{fig:MSEvsLam}. 


\section{Analysis of Parkinson's telemonitoring data}
\label{sec:DataAnalysis}

We apply our method to the Parkinson's telemonitoring dataset~\citep{tsanas2010accurate}, obtained from the University of California Irvine Machine Learning Repository. The data consists of $n=5875$ observations and $p=18$ covariates including 16 biomedical voice measurements, age, and time of reading. Our goal is to predict the motor Unified Parkinson's Disease Rating Scale score. Apart from the analysis of the original dataset, we add 100 noise variables, uniformly generated from the unit interval, to study the sparsity properties of our proposal. We use 2/3 of the data as training and remaining as test, and average the results over 100 training-test splits. 

We compare our additive framework to \eqref{eqn::add} with fixed truncation, $K_j=K$, in the low-dimensional setting and compare our sparse additive framework to that of \cite{ravikumar2009sparse},  lasso~\citep{tibshirani1996regression}, and elastic net~\citep{zou2005regularization}  in the high-dimensional setting. For computational convenience, we fit our proposal with conservative  basis truncation~\eqref{eqn::hierTrunc} with $\K=\lceil (2n/3)^{1/2}\rceil$, the square-root of the number of {training} observations. In both settings, we fit our proposal with order $m=3$. We fit \citeauthor{ravikumar2009sparse}'s proposal with $K=2,4,8$; for a fairer comparison, we use a polynomial basis expansion. Other values of $K$, had comparable or worse performance and are not presented here. For each proposal, we also implement a relaxed version; re-fitting the selected non-zero coefficients via least squares. For a sequence of $\lambda$ values, or sequence of $K$ for \eqref{eqn::add}, we calculate the mean square test error and the model parsimony, the number of total non-zero basis functions used. Lower values of model parsimony correspond to more sparsity in either the number of components or truncation levels $K_j$. Figure~\ref{fig:DataAnalysis}, shows that our proposal outperforms competitors in the low and high-dimensional setting in terms of mean square error for a sequence of fitted models. The relaxed version of \citeauthor{ravikumar2009sparse}'s method achieves a lower test error but at the cost of reduced parsimony. The linear models, implemented by lasso and elastic net, had a substantially higher test error compared to additive models.

\begin{figure}
	\centering
	\includegraphics[width = 0.28\textwidth]{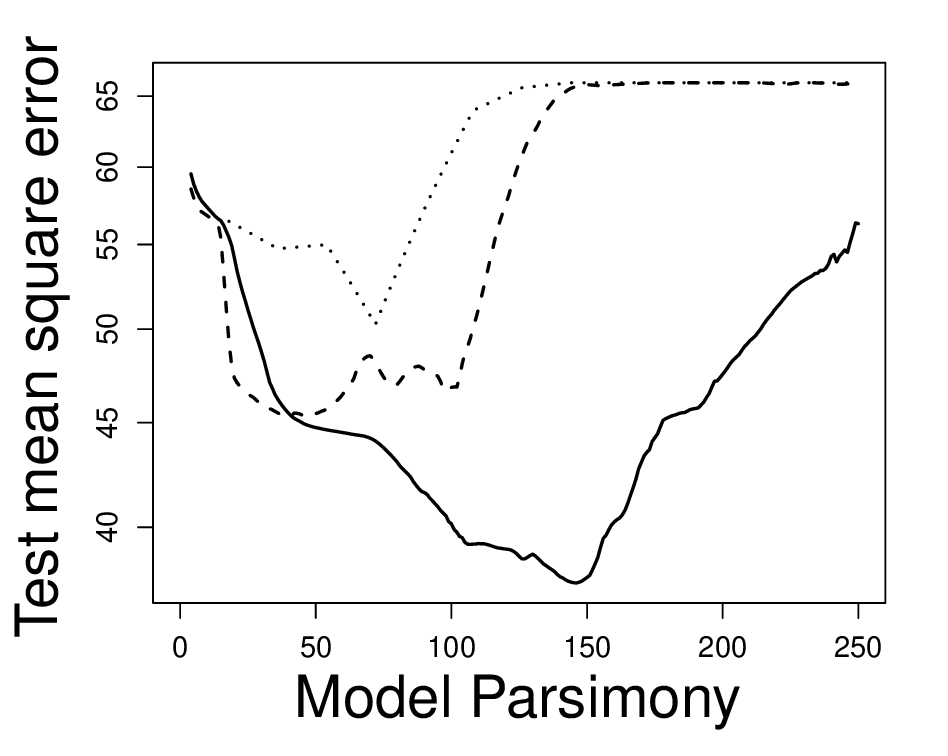}
	\includegraphics[width = 0.28\textwidth]{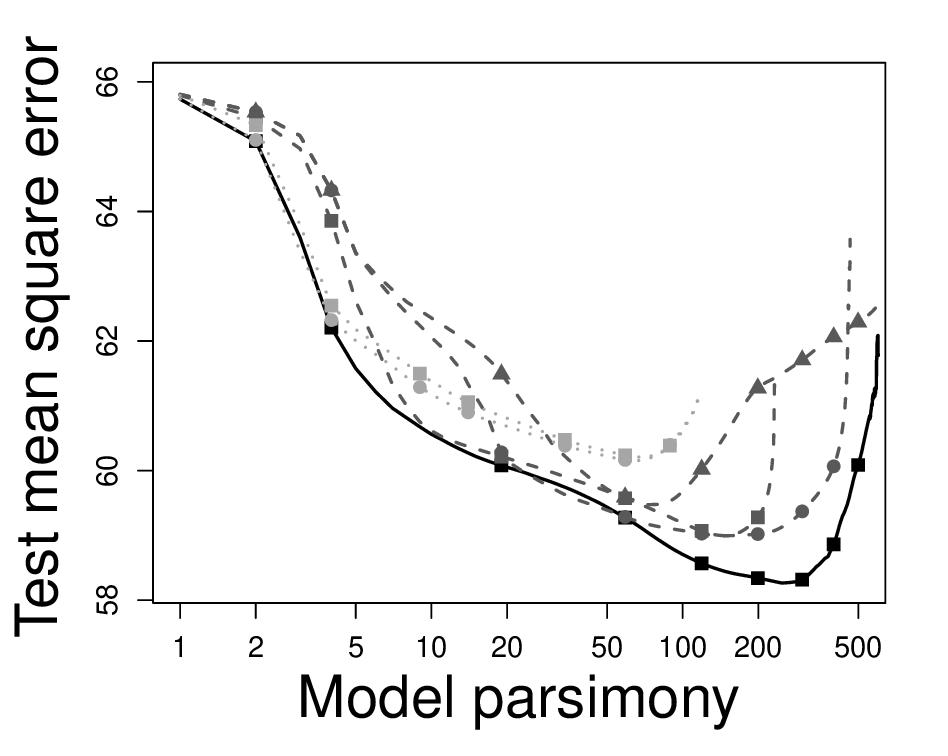}
	\includegraphics[width = 0.28\textwidth]{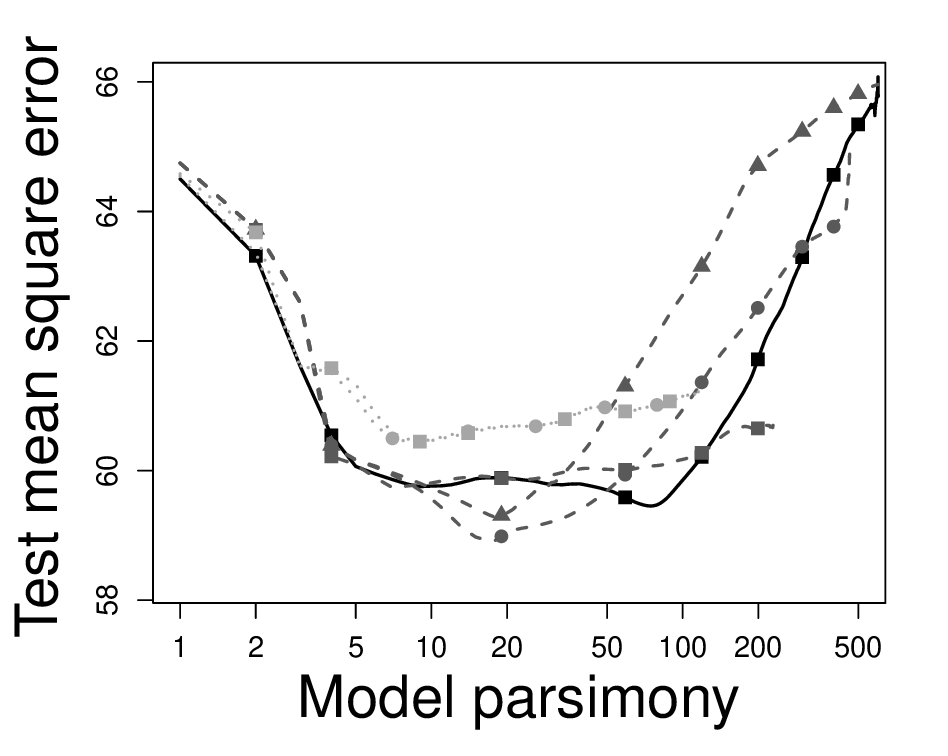}
	\caption{Analysis of Parkinson's telemonitoring dataset. {Left}: Low-dimensional results for our proposal~(\protect\includegraphics[height=0.5em]{NA_gray0_1.eps}), the simple truncation estimator~\eqref{eqn::add}~(\protect\includegraphics[height=0.5em]{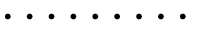}) and our relaxed proposal~(\protect\includegraphics[height=0.5em]{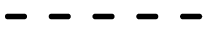}). Middle: High-dimensional results for our proposal~(\protect\includegraphics[height=0.5em]{15_gray0_1.eps}), \citeauthor{ravikumar2009sparse}'s proposal with 2~(\protect\includegraphics[height=0.5em]{15_gray35_2.eps}), 
		4~(\protect\includegraphics[height=0.5em]{16_gray35_2.eps}), and 8~(\protect\includegraphics[height=0.5em]{17_gray35_2.eps}) basis functions, lasso~(\protect\includegraphics[height=0.5em]{15_gray65_3.eps}) and elastic net~(\protect\includegraphics[height=0.5em]{16_gray65_3.eps}). Right: Results for the relaxed versions of all methods in the middle panel.}
	\label{fig:DataAnalysis}
\end{figure}


\section*{Acknowledgement}

We thank the associate editor and two referees for insightful comments that improved the manuscript. This work was partially supported by National Institutes of Health grants to A.S. and N.S., and National Science Foundation grants to A.S.

\section*{Supplementary material}
\label{SM}
\sloppy The online supplementary material~(appended below) includes additional figures, algorithm details, extension to binary response, and proofs for results in Section~\ref{sec:TheoreticalResults}. The \texttt{R} package \texttt{HierBasis}, available on \url{https://github.com/asadharis/HierBasis}, implements the proposed methods.

\bibliographystyle{biometrika}

\appendix
\numberwithin{equation}{section}
\numberwithin{algocf}{section}
\numberwithin{figure}{section}

\section{Additional figures}
\label{sec:AddFigures}
In this appendix we present some additional figures referenced in Sections~\ref{sec:Methodology},  \ref{sec:SimulationStudy} and \ref{sec:DataAnalysis} of the main manuscript. 

\vspace{5mm}

\noindent
Figure~\ref{fig:ExamplePlots}, shows examples of some fitted models for a fixed value of degrees of freedom. Our method seems to perform very well and is mostly robust to changes in the value of $m$. The smoothing splines estimates are unable to do as well for the same effective degrees of freedom. The plots in the bottom panel of Figure~\ref{fig:ExamplePlots} also suggest that first order trend filter can perform poorly in presence of model misspecification.

\vspace{5mm}

\noindent
In Figure~\ref{fig:AdditiveExamplePlots}, we show some of the fitted functions for both \citeauthor{ravikumar2009sparse}'s method and our proposal using the $\lambda$ value which minimizes the test set error for \citeauthor{ravikumar2009sparse}'s proposal with $3$ and $10$ basis functions.

\vspace{5mm}

\noindent
In Figure~\ref{fig:BasisPlots}, we show examples of basis functions which possess a natural hierarchy. Our proposal is specifically suited for such systems of hierarchical basis functions.

\begin{figure}
	\centering
	\includegraphics[scale = 0.52]{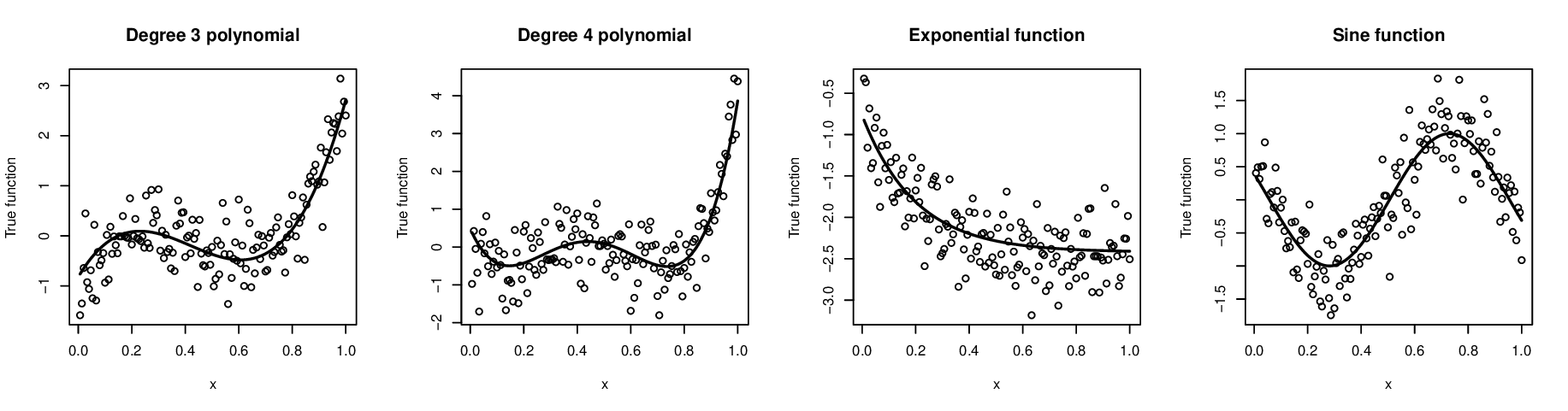}
	\includegraphics[scale = 0.52]{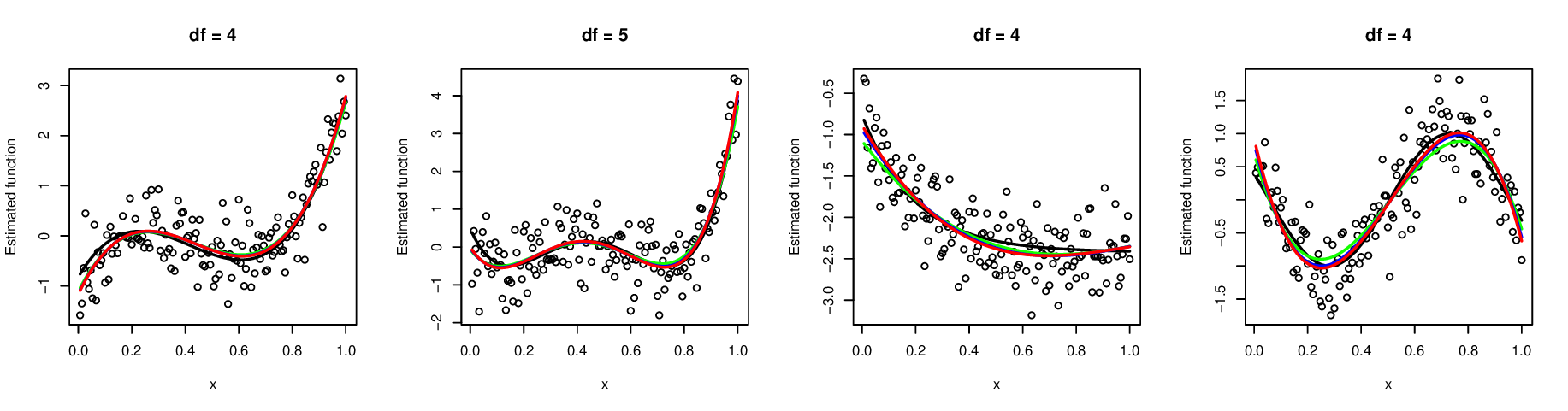}
	\includegraphics[scale = 0.52]{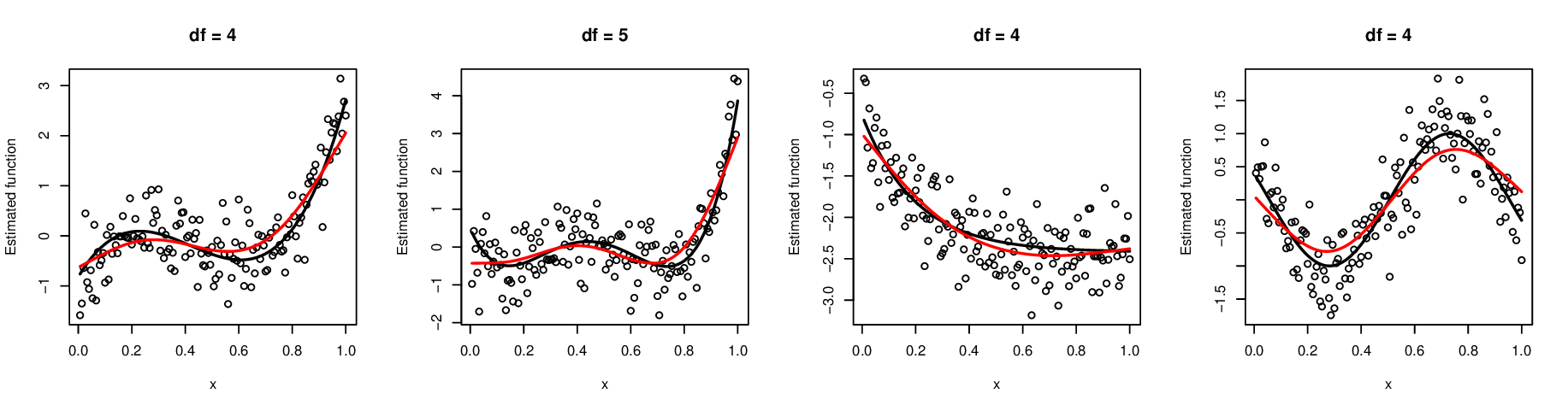}
	\includegraphics[scale = 0.52]{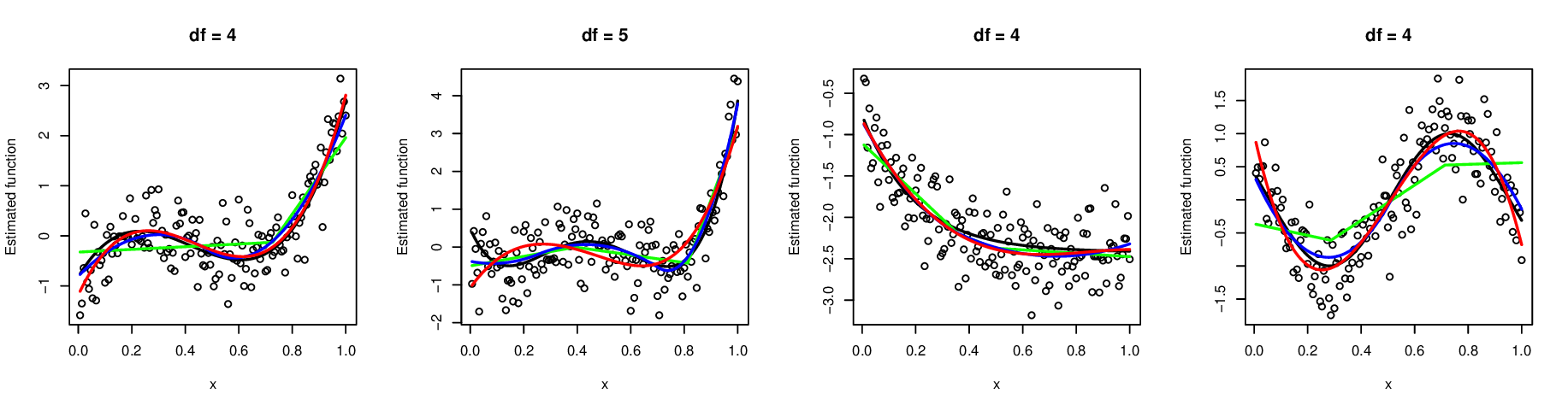}
	\caption{Top row: Scatter plots of one simulated data along with the true functions~(\protect\includegraphics[height=0.5em]{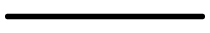}). Second row: The first row figures with estimated functions for our proposal with $m=1$~(\protect\includegraphics[height=0.5em]{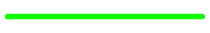}), $2$~(\protect\includegraphics[height=0.5em]{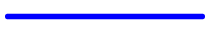}), and 3~(\protect\includegraphics[height=0.5em]{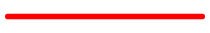}). Third row: The first row figures with estimated functions via smoothing splines~(\protect\includegraphics[height=0.5em]{NA_red_1.eps}). Fourth row: The first row figures with estimated functions for trend filtering of order 1~(\protect\includegraphics[height=0.5em]{NA_green_1.eps}), 2~(\protect\includegraphics[height=0.5em]{NA_blue_1.eps}), and 3~(\protect\includegraphics[height=0.5em]{NA_red_1.eps}).}
	\label{fig:ExamplePlots}
\end{figure}
\begin{figure}
	\centering
	\includegraphics[width= \textwidth]{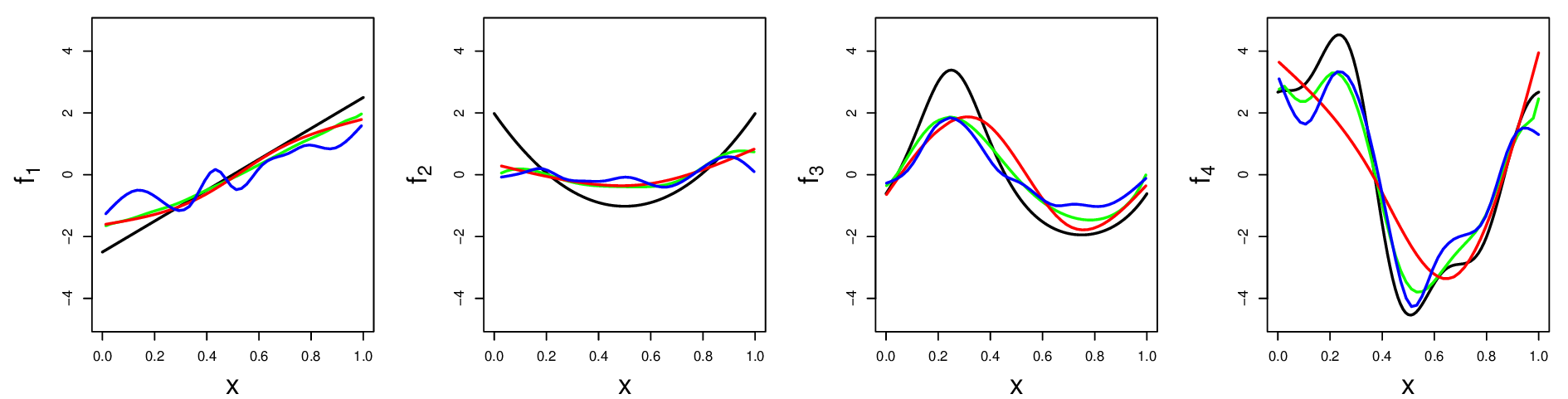}
	\caption{The first 4 component functions of the simulation study from Section~\ref{sec:SimulationAdditive}. We show the estimates of our proposal~(\protect\includegraphics[height=0.5em]{NA_green_1.eps}), and that of \cite{ravikumar2009sparse} fitted with 10~(\protect\includegraphics[height=0.5em]{NA_blue_1.eps}) and  3~(\protect\includegraphics[height=0.5em]{NA_red_1.eps}) basis functions. In each case, the tuning parameter leading to the smallest mean square error was used.}
	\label{fig:AdditiveExamplePlots}
\end{figure}
\begin{figure}
	\centering
	\includegraphics[width=\textwidth]{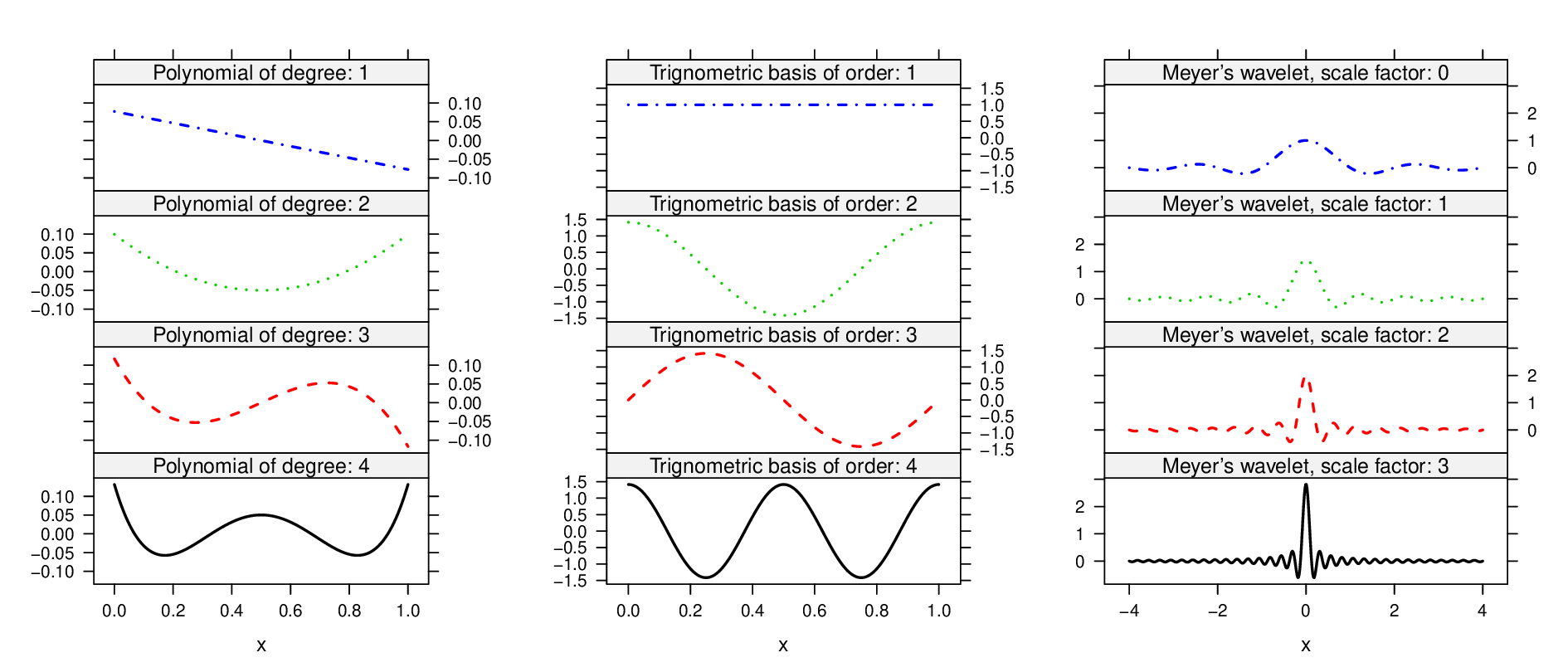}
	\caption{Examples of basis functions with natural hierarchical complexity; we plot $\psi_1(x)$~(\protect\includegraphics[height=0.5em]{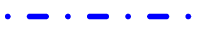}),
		$\psi_2(x)$~(\protect\includegraphics[height=0.5em]{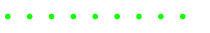}),
		$\psi_3(x)$~(\protect\includegraphics[height=0.5em]{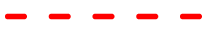}),
		$\psi_4(x)$~(\protect\includegraphics[height=0.5em]{NA_black_1.eps}). Polynomial, trigonometric and wavelet basis functions are shown in the left, middle and right panels, respectively.}
	\label{fig:BasisPlots}
\end{figure}

\section{Algorithms for additive framework and  extension to classification}
\label{app:additiveAlgorithm}
Here we give an algorithm for our additive and sparse additive framework as well as an algorithm for the extension of our proposal to classification. We use a block coordinate descent algorithm for solving the additive and sparse additive proposal. This algorithm cyclically iterates through features, and for each feature applies the univariate solution detailed in Algorithm~\ref{alg:univariate} of the main manuscript. The exact details are given in Algorithm~\ref{alg:additive} below.

\begin{algorithm}
	\caption{Block coordinate descent for the additive and sparse additive framework}
	\label{alg:additive}
	\begin{tabbing}
		\enspace Initialize ${\beta}_j \gets 0$ for $j=1,\ldots,p$\\
		\enspace While $l\le max\_iter$ \textbf{ and } not converged\\
		
		\qquad For $ j = 1,\, \ldots,\, p$\\
		\qquad \qquad Set 
		$ {r}_{-j} \gets {y} - \sum_{j^{'} \neq j}{\Psi}_{\K}^{j^{'}}{{\beta}}_{j^{'}} $\\
		\qquad \qquad Update ${\beta}_j \gets \underset{{\beta} \in \mathbb{R}^{\K}}{\arg\min} \frac{1}{2} \left\| {r}_{-j} - {\Psi}_{\K}^j {\beta} \right\|_n^2 + \lambda^2 \sum_{k=1}^{\K} \wt{w}_{k} \left\|{\Psi}^j_{k:\K} {\beta}_{j,k:\K} \right\|_n,$\\ 
		\qquad \qquad \qquad where $\wt{w}_1 = w_1+\lambda^{-1}$ and $\wt{w}_k = w_k$ for $k=2,\,\ldots,\, \K$.\\
		
		\enspace Return ${\beta}_1,\, \ldots,\, {\beta}_p$ 
	\end{tabbing}
\end{algorithm}
We also give an algorithm for the extension of our method to classification based on proximal gradient descent. To begin let $L(\beta_0,{\beta}) = {1}/({2n}) \sum_{i=1}^n \log\left( 1 + \exp\left[ -y_i\left\{\beta_0 + \left(\Psi {\beta}\right)_i\right\}  \right] \right)$. We denote by $\nabla L(\beta_0,{\beta})$, the derivative of $L$ at the point $(\beta_0,\, {\beta})\in \mathbb{R}^{\K+1}$. Algorithm~\ref{alg:univariateLogistic} presents the steps for solving \eqref{eqn:LogisticHierBasis}. The algorithm for extension of additive models to classification can be similarly derived and is omitted in the interest of brevity. 
\begin{algorithm}
	\caption{Proximal gradient descent for extension to classification}
	\label{alg:univariateLogistic}
	\enspace Initialize $(\beta_0^0, {\beta}^0)$ \\
	\enspace For $ l = 1,\, 2,\,\ldots$ until convergence\\
	\qquad Select a step size $t_l$ via line search \\
	\qquad Update $$(\beta_0^{l},\, {\beta}^{l}) \gets \underset{(\beta_0,{\beta}) \in \mathbb{R}^{\K+1}}{\arg\min} \frac{1}{2} \left\| (\beta_0,{\beta})  - \left\{ (\beta_0^{l-1},{\beta}^{l-1}) - t_{l}\nabla L(\beta_0^{l-1}, {\beta}^{l-1}) \right\}\right\|^2_2 + \lambda\Om({\beta}).$$\\
	
	\enspace Return $(\beta_0^{l}, {\beta}^{l})$ 
\end{algorithm}

\section{Proofs for Section~\ref{sec:entropyResultsHB}}
\label{app:proofsUnivariate}

\begin{proof}[of Lemma~\ref{lemma:reductionEuclidean}]
	Firstly, we have for $f_{\beta^{[1]}}(x) = \sum_{k=1}^{\K_n}\psi(x)\beta_k^{[1]},\, f_{\beta^{[2]}}(x) = \sum_{k=1}^{\K_n}\psi(x)\beta_k^{[2]} \in \mathcal{F}_n$
	\begin{align*}
	\|f_{\beta^{[1]}}-f_{\beta^{[1]}}\|_Q^2 &= \int (f_{\beta^{[1]}}-f_{\beta^{[1]}})^2dQ = \int \left\{ \sum_{k=1}^{\K_n}\psi_k(x)(\beta^{[1]}_k - \beta^{[2]}_k)\right\}^2dQ \\
	&= \int \left\{\sum_{k=1}^{\K_n} \psi_k^2(x)\left(\beta^{[1]}_k - \beta^{[2]}_k\right)^2 + \sum_{k\not=l}\psi_k(x)\psi_l(x)\left(\beta^{[1]}_k - \beta^{[2]}_k\right)\left(\beta^{[1]}_l - \beta^{[2]}_l\right) \right\}dQ\\
	&= \left\|\beta^{[1]} - \beta^{[2]} \right\|_2^2,
	\end{align*}
	where the final equality follows due to the orthonormality of $\psi_k$. Similarly for $f_{\beta^{[1]}},f_{\beta^{[1]}}\in \mathcal{F}_{p,n}$ we can show that $\|f_{\beta^{[1]}}-f_{\beta^{[1]}}\|_Q^2 = \|{\beta}^{[1]}-{\beta}^{[2]}\|_2^2$. Thus if $\{{\beta}^1,\ldots, {\beta}^N \}$ is the smallest $\de$-cover of $H^{w/M}_{\K_n}$ then the functions $\{ f_{\beta^1},\ldots, f_{\beta^N} \}$ form the smallest $\de$-cover with respect to the $L_Q$ norm. This can be extended to the case $n = \infty$. This proves the first part.

	Secondly, note that for $f_{\beta^{[1]}},f_{\beta^{[2]}}\in \mathcal{F}_n$ 
	\begin{equation*}
	\|f_{\beta^{[1]}}-f_{\beta^{[2]}}\|_n^2 = ({\beta}^{[1]}-{\beta}^{[2]})^{\top}\frac{ {\Psi}_{\K_n}^{\top} {\Psi}_{\K_{n}} }{n}({\beta}^{[1]}-{\beta}^{[2]}) \le \Lambda_{\max}\|{\beta}^{[1]}-{\beta}^{[2]}\|_2^2,
	\end{equation*}
	thus if  $\{{\beta}^1,\ldots, {\beta}^N \}$ is the smallest $\delta$-cover for $\mathcal{H}_{\K_n}^w$, then $\{f_{\beta^{1}},\ldots, f_{\beta^{N}}\}$ is a $\Lambda_{\max}^{1/2}\de$ cover of $\{f_{\beta} \in \mathcal{F}_n :\sum_{k=1}^{\K_n}w_k\|{\Psi}_{k:\K_n}{\beta}_{k:\K_n}\| \le 1\}$ with respect to the $Q_n$ metric. Since this is a cover and not the smallest cover, we have  
	$$H\Big[ \Lambda_{\max}^{1/2} \de, \{f_{\beta} \in \mathcal{F}_n :\sum_{k=1}^{\K_n}w_k\|{\Psi}_{k:\K_n}{\beta}_{k:\K_n}\| \le 1\}, Q_n\Big] \le H(\de, \mathcal{H}_{\K_n}^w),$$
	and since the inequality holds for all $\de>0$, we can select $\de = \de'\Lambda_{\max}^{-1/2}$ giving us the result. 
	
	For the multivariate case we can repeat the same argument as above replacing $\mathcal{F}_n$ by $\mathcal{F}_{p,n}$.
\end{proof}

\begin{proof}[ of Lemma~\ref{lemma:upperBoundEntropy}]
	For the Ellipsoid $E_{\K_n}^w$ where
	\begin{equation}
	E^w_{\K_n} = \left\{ {\beta}\in \mathbb{R}^{\K_n}: \sum_{k=1}^{\K_n} \beta_k^2\left(w_1+\cdots+w_k\right)^2\le 1 \right\}\ ,
	\label{eqn:ellipsoidRegion}
	\end{equation}
	we show that $\mathcal{H}_{\K_n}^w\subset E_{\K_n}^w$ in Lemma~\ref{app:BoundRegion}\ref{lemma:upBound} of Appendix~\ref{app:BoundRegion}. \cite{dumer2006covering} proved an upper bound for ellipsoids which we state in Appendix~\ref{sec:upperBoundEntropy}. For the special case of $w_k = k^m-(k-1)^m$, this theorem yields the desired upper bound as shown in Corollary~\ref{sec:entropyResults}\ref{lemma:UpperBoundSobolev}. Therefore we have $H(\de,\, \mathcal{H}_{\K_n}^w) \le H(\de,\, E_{\K_n}^w) \le U_{E,1}\de^{-{1}/{m}}$.
	
	Similarly, we can consider the special case of our  multivariate framework weights in Corollary~\ref{sec:entropyResults}\ref{lemma:UpperBoundMultivariateHB}, which gives us the result $H(\de,\, \mathcal{H}_{\K_n}^w) \le H(\de,\, E_{\K_n}^w) \le U_{E,2}\de^{-{p}/{m}}$.
\end{proof}

\begin{proof}[ of Lemma~\ref{lemma:lowerBoundEntropy}]
	Let $d$ be the integer such that $(w_1+\cdots+w_{d+1})^{-1} \le \delta\le (w_1+\cdots+w_{d})^{-1}$ for $\delta\in ((w_1+ \cdots +  w_{\K_n+1})^{-1}, 1)$. Note that since $\delta\ge (w_1+\cdots+w_{\K_n+1})^{-1}$, $d\le \K_n$. We define the truncated region as
	\begin{equation*}
	\wt{H}_d^w = 
	\left\{ {\beta}\in \mathcal{H}_{\K_n}^w: \beta_j = 0\ \text{ for all } \ j\ge d+1 \right\} .
	\label{eqn:trunHB2}
	\end{equation*}
	Then we have that $\mathcal{H}_d^w\subset \wt{\mathcal{H}}_{d}^w\subseteq \mathcal{H}_{\K_n}^w$ where $\mathcal{H}_d^w$ is simply viewing $\wt{\mathcal{H}}_{d}^w$ as a subset of $\mathbb{R}^{d}$. Let $\mathbb{B}_n(r)$ be the $n$-ball of radius $r$. By Lemma~\ref{app:BoundRegion}\ref{lemma:lowBound}, we have $\mathbb{B}_{d}\left\{ (w_1+\cdots+w_d)^{-1} \right\}  \subset \mathcal{H}_d^w$. The lower bound of the entropy of a ball can be obtained by a simple volume argument. Since $(w_1+\cdots+w_d)^{-1} \ge \delta$ then $\mathbb{B}_d(\de)\subseteq \mathbb{B}_{d}\{(w_1+\cdots+w_d)^{-1}\} $ and hence
	\begin{align*}
	H(\delta/2, \mathcal{H}_d^w  )\ge  H\{\delta/2, \mathbb{B}_{d}\left( \de \right)  \} \ge \log \frac{Vol\{ \mathbb{B}_{d}\left( \de\right)\} }{Vol\{\mathbb{B}_{d}(\delta/2)\} } =  d\log(2).
	\end{align*}
	Since the above inequality holds for $\delta\le 1 $, for $\delta\in ((w_1+ \cdots +  w_{\K_n + 1})^{-1},\, 1/2)$ we have 
	$H(\de, \mathcal{H}_d^w)\ge d\log 2$.
	\newline

	\noindent Now for the univariate case we have $(w_1+\cdots + w_{d+1})^{-1} = (d+1)^{-m}\le \delta$ or $(d+1)\ge \de^{-{1}/{m}}$ and hence we have 
	\begin{align*}
	H(\de, \mathcal{H}_d^w)\ge d\log 2 \ge (\de^{-{1}/{m}}-1)\log 2 = \de^{-{1}/{m}}\left( 1- \de^{1/m} \right)\log 2\ge \de^{-{1}/{m}}\left( 1- 2^{-1/m} \right)\log 2.
	\end{align*}
	
	Now for the multivariate case, the argument is slightly different due to presence of zero weights. As before, there is some $d'$ such that $(w_1+\cdots+w_{q_{d'}-1})^{-1} \le \delta \le  (w_1+\cdots+w_{q_{d'} })^{-1}$ and hence $d = q_{d'}-1$. Note that by assumption we have $\K_n= q_{\K'}-1$ and hence $\delta \ge (w_1+\cdots+w_{q_{\K'}})^{-1}$ which implies that $d'\le \K'$ and hence $d\le \K_n$. Finally we have that since $w_1+\cdots+w_{q_{d'}-1} = w_1+\cdots+w_{q_{d'-1}} = (d'-1)^m$, therefore $d'-1\ge \de^{-{1}/{m}}$. Now we have that 
	\begin{align*}
	H(\de, \mathcal{H}_d^w)&\ge d\log(2) = (q_{d'}-1)\log(2)= \left\{\binom{d'+p-1}{p}-1\right\}\log(2) \\
	&\ge \left\{\frac{(d'+p-1)^p}{p^p}-1\right\}\log(2)\ge \left\{\frac{(\de^{-{1}/{m}}+p)^p}{p^p}-1\right\}\log(2) \\
	&= \de^{-{p}/{m}} \underbrace{ \left\{ \frac{(1+p\de^{1/m})^p}{p^p} - \de^{{p}/{m}}\right\} }_{g(\de)}\log(2) \ge \de^{-{p}/{m}}A\log(2),
	\end{align*}
	where the last inequality follows from the fact that $g^{\prime}(\de)>0$ for all $\de\in (0,1)$.
\end{proof}

\section{Details for Proposition~\ref{lemma:upperBoundHB}}
\label{app:detailsUnivariate}
\subsection{Univariate case}

Firstly, if $f^0(x) = \sum_{k=1}^{\infty}\psi_k(x)\beta^0_{k}$ then we select $f_n^*(x) = \sum_{k=1}^{\K_n}\psi_k(x)\beta^0_{k} \in \mathcal{F}_n$. Secondly, we note that for the univariate estimator we have $\Om(f_n^*\mid Q_n) = \Om^{\text{uni}}({\beta}^0_{1:\K_n})$. For brevity we will drop the dependence on ${\beta}^0$ and denote $\Om^{\text{uni} }({\beta}^0_{1:\K_n})$ by  $\Om$. Thus we have
\begin{align*}
\lambda_n^2\Om(f_n^*\mid Q_n) &=  n^{-{2}/{(2+\alpha)}}\Om^{-{(2-\alpha)}/{(2+\alpha)}} \Om = n^{-{2}/{(2+\alpha)}} \Om^{{2\alpha}/{(2+\alpha)}} = n^{-{2m}/{(2m+1)}}\Om^{{2}/{(2m+1)}},
\end{align*}
where we use the fact that for our class $\alpha = 1/m$.
For the term $\Om({\beta}^0_{1:\K_n})$ we have 
\begin{align*}
\Om({\beta}^0_{1:\K_n}) &= \sum_{k=1}^{\K_n}w_k \Big\{\left({\beta}^0_{1:\K_n}\right)^{\top} \frac{{\Psi}_{ k:\K_n}^{\top} {\Psi}_{k:\K_n}}{n} {\beta}^0_{1:\K_n} \Big\}^{1/2} \le \Lambda^{1/2}_{\max} \sum_{k=1}^{\K_n}w_k\|{\beta}^0_{1:\K_n}\|_2 \le  \Lambda^{1/2}_{\max} M,
\end{align*}
for $f^0\in \mathcal{F}_{\infty}^M$. For $\mathcal{G}_{2}^M$, we do not have the above bound and hence we keep the $\Om$ term in the inequality.

For the truncation error we note that 
\begin{align*}
\|f^0 - f_n^*\|_n^2 &= \frac{1}{n}\sum_{i=1}^{n}\left\{ \sum_{k=\K_n+1}^{\infty} \psi_k(x_i)\beta^0_{k} \right\}^2 \\
&\le \psi_{\max}^2\frac{1}{n}\sum_{i=1}^{n}\left( \sum_{k=\K_n+1}^{\infty} \beta^0_{k} \right)^2 = \psi_{\max}^2 \left(\sum_{k=\K_n+1}^{\infty}\beta_k^0\right)^2 \\ 
&= \psi_{\max}^2\left( \sum_{k=\K_n+1}^{\infty} \frac{k^m}{k^m}|\beta^0_{k}| \right)^2\\
&\le \psi_{\max}^2\left\{\sum_{k=\K_n+1}^{\infty} k^{2m}(\beta^0_{k})^2\right\} \left( \sum_{k=\K_n+1}^{\infty}\frac{1}{k^{2m}} \right),\\
&\le 
= \psi_{\max}^2M^2\sum_{k=\K_n+1}^{\infty} {k^{-2m}},
\end{align*}
where the last inequality follows from the proof of Lemma~\ref{app:BoundRegion}\ref{lemma:upBound}. The result now follows since 
\begin{align*}
\sum_{k=\K_n+1}^{\infty}k^{-2m} \le \left\{(2m-1)(\K_n+1)^{2m-1}\right\}^{-1} \le \frac{1}{2m-1}\frac{1}{\K_n^{2m-1}}.
\end{align*}

\subsection{Multivariate case}
Now we assume that $f^0({x}) = \sum_{k=1}^{\infty} \psi_k({x}^{{\nu_k}})\beta^0_{k}$ for ${x}\in \mathbb{R}^p$ and $\nu_k \in \mathbb{Z}_+^p$. Then we take $f_n^*({x}) = \sum_{k=1}^{\K_n} \psi_k({x}^{{\nu_k}}) \beta^0_{k}$. Now by the same calculations as in the univariate case, we have 
\begin{align*}
\lambda_n^2\Om(f_n^*\mid Q_n) &=  n^{-{2m}/{(2m+p)}} \Om^{{2p}/{(2m+p)}} \le  n^{-{2m}/{(2m+p)}} ({\Lambda^{1/2}_{\max}} M)^{{2p}/{(2m+p)}}.
\end{align*}

For the truncation error we note that $\K_n = q_{\K'}-1$ and hence
\begin{align*}
\|f^0 - f_n^*\|_n^2 &= \frac{1}{n}\sum_{i=1}^{n}\left\{ \sum_{k=\K_n+1}^{\infty} \psi_k({x_i}^{{\nu_k}})\beta^0_{k} \right\}^2 \le \psi_{\max}^2\frac{1}{n}\sum_{i=1}^{n}\left( \sum_{k=q_{\K'}}^{\infty} \beta^0_{k} \right)^2\\
&= \psi_{\max}^2\frac{1}{n}\sum_{i=1}^{n}\left\{ \sum_{k:\|{\nu_k}\|_1= \K'} \frac{\K'^m}{\K'^m}|\beta^0_{k}| + \sum_{k:\|{\nu_k}\|_1= \K'+1} \frac{(\K'+1)^m}{(\K'+1)^m}|\beta^0_{k}|+\cdots\right\}^2\\
&= \psi_{\max}^2\frac{1}{n}\sum_{i=1}^{n}\left( \sum_{R=\K'}^{\infty} \frac{R^m}{R^m} \sum_{k:\|{\nu_k}\|_1= R} |\beta^0_{k}| \right)^2\\
&= \psi_{\max}^2\frac{1}{n}\sum_{i=1}^{n}\left\{\left( \sum_{R=\K'}^{\infty}\frac{1}{R^{2m}} \right)^{1/2} \left(\sum_{R=K'}^{\infty} R^m \sum_{k:\| {\nu_k} \|_1= R } |\beta^0_{k}| \right)^{1/2} \right\}^2\\
&\le \psi_{\max}^2M^2\sum_{R = \K'}^{\infty}\frac{1}{R^{2m}} \le \frac{M^2\psi_{\max}^2}{2m-1}\frac{1}{(\K')^{2m-1}}.
\end{align*}

\section{Proof of Theorem~\ref{thm:thm1}}
\label{app:proofsAdditive}
\subsection{Initial results}

Recall that $(\wh{f}_j)_{j=1}^p \in \mathcal{F}$ where $\mathcal{F}$ is some arbitrary univariate function class 
. We denote the functions $\wh{f}({x}) = \sum_{j=1}^{p}\wh{f}_j({x}_j)$ and $f^0({x}) = \sum_{j=1}^{p}f^0_j({x}_j)$ for ${x} = (x_1,\ldots,x_p)^{\top}\in \mathbb{R}^p$. For the proof of Theorem~\ref{thm:thm1}, $\lambda_n$ and $\rho_n$ are functions of $n$ but for convenience we will simply write $\lambda,\, \rho$. Throughout this proof, instead of the smoothness level $m$, we will use $\alpha = 1/m$. Thus the entropy condition is $H[\delta, \{f\in \mathcal{F} : \up(f)\le 1\}, Q_n] \le A_0\delta^{-\alpha},$ for $\alpha\in (0,2)$, and so forth.

We begin the proof of Theorem~\ref{thm:thm1} with a basic inequality.

\begin{lemma}[Basic inequality]
	\label{lemma:basic}
	For any function $f^* = \sum_{j=1}^{p}f_j^*$, where $f_j^*\in \mathcal{F}$ and, the solution $\wh{f}$ of (\ref{eqn:sparseAdditiveOptim}), we have the following basic inequality
	\begin{equation*}
	\frac{1}{2}\|\wh{f} - f^0\|_n^2 + \lambda I_p(\wh{f}) \le  |\langle \e,\, \wh{f} - f^* \rangle_n| + \lambda I_p(f^*) + |\bar{\e}|\sum_{j=1}^{p}\|\wh{f}_j - f^*_j\|_n  +  \frac{1}{2}\|f^* - f^0\|_n^2,
	\end{equation*}
	where $\langle \e, f\rangle_n = {n}^{-1}\sum_{i=1}^{n}\e_if({x_i})$, $\bar{\e} = {n}^{-1}\sum_{i=1}^{n}\e_i$ and $I_p(f) = \sum_{j=1}^{p}I(f_j) = \sum_{j=1}^{p}\|f_j\|_n + \lambda\up(f_j)$ for an additive function $f$.
\end{lemma}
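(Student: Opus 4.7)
The plan follows the classical ``basic inequality'' strategy for penalized M-estimators. Starting from the defining argmin of $\wh f$ in \eqref{eqn:sparseAdditiveOptim} and plugging in the candidate $\{f^*_j\}$, one obtains a quadratic loss comparison. Polarization rewrites this as a bound on $\tfrac12\|\wh f - f^*\|_n^2$, which is then converted to $\tfrac12\|\wh f - f^0\|_n^2$ via an $L^2$ decomposition. The crucial algebraic simplification is that the two cross inner products involving $f^* - f^0$ cancel exactly, leaving only a noise inner product, a $\bar y$-dependent offset, and the approximation error $\tfrac12\|f^* - f^0\|_n^2$.

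In detail, Step 1 is the optimality inequality $\tfrac12\|\bs y - \bar y\bs 1 - \wh f\|_n^2 + \lambda I_p(\wh f) \le \tfrac12\|\bs y - \bar y\bs 1 - f^*\|_n^2 + \lambda I_p(f^*)$. Step 2 uses $\tfrac12\|a\|_n^2 - \tfrac12\|b\|_n^2 = \langle b,\, a-b\rangle_n + \tfrac12\|a-b\|_n^2$ with $a = \bs y - \bar y\bs 1 - \wh f$ and $b = \bs y - \bar y\bs 1 - f^*$, yielding $\tfrac12\|\wh f - f^*\|_n^2 + \lambda I_p(\wh f) \le \langle \bs y - \bar y\bs 1 - f^*,\, \wh f - f^*\rangle_n + \lambda I_p(f^*)$. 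Step 3 applies the identity $\tfrac12\|\wh f - f^0\|_n^2 = \tfrac12\|\wh f - f^*\|_n^2 + \langle \wh f - f^*,\, f^* - f^0\rangle_n + \tfrac12\|f^* - f^0\|_n^2$ to pivot the LHS norm, and splits $\bs y - \bar y\bs 1 - f^* = (\bs y - \bar y\bs 1 - f^0) + (f^0 - f^*)$ on the RHS; the pieces $\langle f^0 - f^*,\, \wh f - f^*\rangle_n$ and $\langle \wh f - f^*,\, f^* - f^0\rangle_n$ cancel by symmetry of the inner product, producing $\tfrac12\|\wh f - f^0\|_n^2 + \lambda I_p(\wh f) \le \langle \bs y - \bar y\bs 1 - f^0,\, \wh f - f^*\rangle_n + \lambda I_p(f^*) + \tfrac12\|f^* - f^0\|_n^2$.

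Step 4 substitutes $\bs y - \bar y\bs 1 - f^0 = \e - \bar y\bs 1$ (directly from $y_i = f^0(\bs x_i) + \e_i$) and splits the inner product as $\langle \e,\, \wh f - f^*\rangle_n - \bar y\,\langle \bs 1,\, \wh f - f^*\rangle_n$. The identifiability $\sum_i f^*_j(x_{i,j}) = 0$ implies $\langle \bs 1,\, f^*\rangle_n = 0$, so $\langle \bs 1,\, \wh f - f^*\rangle_n = \sum_{j=1}^p \langle \bs 1,\, \wh f_j - f^*_j\rangle_n$, and Cauchy--Schwarz with $\|\bs 1\|_n = 1$ bounds each summand by $\|\wh f_j - f^*_j\|_n$. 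Assuming $\bar y = \bar\e$ (i.e.\ $\bar{f^0} = 0$), the offset term becomes $|\bar\e|\sum_j\|\wh f_j - f^*_j\|_n$; taking the absolute value on the $\langle \e,\,\cdot\rangle_n$ piece completes the bound.

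The main obstacle is this final centering step: the natural calculation produces $\bar y = \bar{f^0} + \bar\e$ rather than $\bar\e$. This is justified without loss of generality by centering the response (the problem is invariant under $Y \mapsto Y - c$, $f^0 \mapsto f^0 - c$, since the loss only sees $Y_i - \bar Y$ and the identifiability/penalty structure is translation-invariant in the relevant sense); alternatively one can absorb the extra $\bar{f^0}$ contribution into a slightly larger $|\bar{f^0} + \bar\e|\sum_j \|\wh f_j - f^*_j\|_n$ bound. Verifying that this centering is consistent with the identifiability constraint on $f^*_j$ is the delicate bookkeeping piece of the proof.
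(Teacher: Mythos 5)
Your proof is correct and follows essentially the same strategy as the paper's: start from the optimality inequality at $\wh f$ versus the candidate $f^*$, expand/polarize the quadratic loss, cancel the cross inner products involving $f^*-f^0$, and finish with Cauchy--Schwarz over the $p$ components to isolate a $|\bar y|\sum_j\|\wh f_j - f^*_j\|_n$ term. The only cosmetic difference is that you pass through $\tfrac12\|\wh f - f^*\|_n^2$ as an intermediate and then pivot to $\tfrac12\|\wh f - f^0\|_n^2$, whereas the paper expands directly against $f^0$ by writing $Y_i - \bar Y - g(\bs x_i) = \e_i + c^0 - \bar Y - (g-f^0)(\bs x_i)$; these are the same algebraic manipulation in different dress.

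Your flagged concern about centering is apt, and in fact the paper makes the same implicit assumption you do. The paper's identity $|c^0-\bar Y| = |\bar\e|$ is derived by writing $Y_i = c^0 + \sum_j f^0_j(x_{ij}) + \e_i$ with $\sum_i f^0_j(x_{ij}) = 0$, i.e.\ it tacitly assumes the truth $f^0$ is itself additive with empirically centered components, even though the theorem's model statement only says $Y_i = f^0(\bs x_i) + \e_i$. Your observation that the raw calculation yields $|\bar y|$ (or $|\bar{f^0} + \bar\e|$) rather than $|\bar\e|$, and that this requires a WLOG centering argument, is exactly the bookkeeping the paper sweeps under $c^0 = E[\bar Y]$. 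So nothing is missing from your argument; you have simply been more explicit about the hypothesis being used.
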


\begin{proof}
	We have 
	\begin{align*}
	\frac{1}{2n}\sum_{i=1}^{n} \left\{ y_i - \bar{y} - \wh{f}({x_{i}}) \right\}^2 + \lambda I_p(\wh{f}) \le \frac{1}{2n}\sum_{i=1}^{n} \left\{ y_i - \bar{y} - f^*({x_{i}}) \right\}^2 + \lambda I_p(f^*_j),
	\end{align*}
	which is equivalent to
	\begin{align*}
	& \frac{1}{2n}\sum_{i=1}^{n} \left\{ \e_i + c^0 - \bar{y} -(\wh{f} - f^0)({x_i})  \right\}^2 + \lambda I_p(\wh{f}) \le \frac{1}{2n}\sum_{i=1}^{n} \left\{ \e_i + c^0 - \bar{y}-(f^*-f^0)({x_i})  \right\}^2 + \lambda I_p(f^*_j).
	\end{align*}
	This gives us
	\begin{align*}
	& \frac{1}{2n}\sum_{i=1}^{n} \left( \e_i + c^0 - \bar{y}\right)^2 +(\wh{f} - f^0)^2({x_i}) -2(\e_i+c^0-\bar{y})(\wh{f} - f^0)({x_i}) + \lambda I_p(\wh{f}) \\
	&\le \frac{1}{2n}\sum_{i=1}^{n} \left( \e_i+c^0 - \bar{y}  \right)^2 + (f^*-f^0)^2({x_i})-2(\e_i+c^0-\bar{y})(f^*-f^0)({x_i}) + \lambda I_p(f^*).
	\end{align*}
	Re-arranging the terms and simplifying gives us
	\begin{align*}
	& \frac{1}{2}\|\wh{f} - f^0\|_n^2 - \langle \e + c^0 - \bar{y},\, \wh{f}-f^0\rangle_n + \lambda I_p(\wh{f}) \\
	&\le   \frac{1}{2}\|f^*-f^0\|_n^2 - \langle \e + c^0 - \bar{y},\, f^*-\wh{f}\rangle_n - \langle \e+c^0-\bar{y},\,\wh{f}-f^0 \rangle_n +\lambda I_p(f^*),
	\end{align*}
	which implies
	\begin{align*}
	&\frac{1}{2}\|\wh{f} - f^0\|_n^2  + \lambda I_p(\wh{f}) \le   \frac{1}{2}\|f^*-f^0\|_n^2 - \langle \e + c^0 - \bar{y},\, f^*-\wh{f}\rangle_n  +\lambda I_p(f^*),
	\end{align*}
	and finally gives us
	\begin{align*}
	&\frac{1}{2}\|\wh{f} - f^0\|_n^2  + \lambda I_p(\wh{f}) \le  |\langle \e,\, \wh{f} - f^* \rangle_n|+ | c^0-\bar{y}| \sum_{j=1}^{p}\|\wh{f}_j - f^*_j\|_n +\lambda I_p(f^*) +  \frac{1}{2}\|f^* - f^0\|_n^2.
	\end{align*}
	Now, for the second term
	\begin{align*}
	|c^0 - \bar{y}| &= |E\bar{y} - \bar{y}| = \left| \frac{1}{n}\sum_{i=1}^{n}(Ey_i - y_i)\right| = |\bar{\e}|,
	\end{align*}
	which leads us to 
	\begin{equation*}
	\frac{1}{2}\|\wh{f} - f^0\|_n^2 + \lambda I_p(\wh{f}) \le  |\langle \e,\, \wh{f} - f^* \rangle_n| + \lambda I_p(f^*) + |\bar{\e}|\sum_{j=1}^{p}\|\wh{f}_j - f^*_j\|_n  +  \frac{1}{2}\|f^* - f^0\|_n^2.
	\end{equation*}
\end{proof}

\begin{lemma}[Bounding the term $|\bar{\e}|$]
	For $\e = (\e_1,\ldots, \e_n)^{\top}$ such that $E(\e_i) = 0$ and
	\begin{equation*}
	L^2\left\{ {E}\Big(e^{\e_i^2/L^2}\Big) -1 \right\}\le \sigma_0^2\ ,
	\end{equation*}
	for all $\kappa>0$ and
	\begin{equation*}
	\rho = \kappa\max\left\{ n^{-{1}/{(2+\alpha)}} , \left( \frac{\log p}{n} \right)^{1/2} \right\},
	\end{equation*}
	we have that with probability at least $1- 2\exp\left( -{n\rho^2}/{c_1} \right)$,
	\begin{equation*}
	|\bar{\e}|\le \rho ,
	\end{equation*}
	for a constant $c_1$ that depends on $L$ and $\sigma_0$.
\end{lemma}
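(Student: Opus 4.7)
The plan is to apply a standard sub-Gaussian concentration inequality to the i.i.d.\ sum $\bar{\e} = \frac{1}{n}\sum_{i=1}^n \e_i$. The hypothesis $K^2(\mathbb{E}e^{\e_i^2/K^2} - 1) \le \sigma_0^2$ is an Orlicz-$\psi_2$ norm bound, which is well known to imply sub-Gaussianity. Specifically, by expanding $e^{\e_i^2/K^2} = \sum_{k\ge 0} \e_i^{2k}/(k!K^{2k})$ and using $\mathbb{E}\e_i = 0$, one obtains uniform moment bounds of the form $\mathbb{E}\e_i^{2k} \le C_0 \, k! \, K^{2k}$ for some $C_0$ depending on $\sigma_0$ and $K$. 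Substituting these into the Taylor expansion of the moment generating function yields $\mathbb{E}e^{t\e_i} \le e^{c t^2}$ for all $t \in \mathbb{R}$, where $c = c(K,\sigma_0)$.

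The next step is to combine this MGF bound across independent samples: $\mathbb{E}e^{t\bar{\e}} = \prod_{i=1}^n \mathbb{E}e^{(t/n)\e_i} \le e^{c t^2 / n}$. Applying Chernoff's inequality, for any $\rho > 0$,
\begin{equation*}
\Pr(\bar{\e} > \rho) \le \inf_{t>0} e^{-t\rho} \mathbb{E}e^{t\bar{\e}} \le \inf_{t>0} \exp\bigl(c t^2/n - t\rho\bigr) = \exp\bigl(-n\rho^2/(4c)\bigr),
\end{equation*}
where the infimum is attained at $t = n\rho/(2c)$. A symmetric argument controls $\Pr(\bar{\e} < -\rho)$, and a union bound gives $\Pr(|\bar{\e}| > \rho) \le 2\exp(-n\rho^2/c_1)$ with $c_1 = 4c$.

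Since this concentration inequality holds uniformly in $\rho$, it applies in particular to the choice $\rho = \kappa \max(n^{-1/(2+\alpha)}, \sqrt{\log p/n})$ for any $\kappa > 0$, yielding the claimed bound. The only subtlety, and the step requiring the most care, is verifying that the implicit constant $c_1$ depends only on $K$ and $\sigma_0$; this amounts to checking that the moment bounds extracted from the Orlicz condition are clean in $K$ and $\sigma_0$. Aside from that bookkeeping, the proof reduces to a textbook application of Chernoff's bound, and no nontrivial probabilistic obstacle is expected.
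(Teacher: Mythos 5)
Your proposal is correct and follows essentially the same route as the paper. The paper simply cites Lemma~8.2 of van de Geer (2000) with $\gamma_i = 1/n$, which gives $P(|\bar{\e}| \ge t) \le 2\exp\{-nt^2/[8(K^2+\sigma_0^2)]\}$ for all $t>0$, and then sets $t=\rho$; your argument unpacks the proof of that cited lemma (Orlicz-$\psi_2$ condition $\Rightarrow$ uniform moment bounds $\Rightarrow$ sub-Gaussian MGF $\Rightarrow$ Chernoff), arriving at the same tail bound with $c_1$ depending only on $K$ and $\sigma_0$.
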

\begin{proof}
	By Lemma 8$\cdot$2 of \cite{vandegeer2000empirical}~(with $\gamma_n = {1}_n/n$) we have for all $t> 0$
	\begin{align*}
	\text{pr}\left( \left| \frac{1}{n}\sum_{i=1}^{n}\e_i \right| \ge t \right) \le 2\exp\left\{ -\frac{nt^2}{8(L^2+\sigma_0^2)} \right\}.
	\end{align*}
	The result follows by setting $t = \rho$.
\end{proof}


\begin{lemma}[Bounding the term $|\langle{\e}, \wh{f} - f^*\rangle_n| $]
	For $\lambda\ge 4\rho$ where $$\rho = \kappa\max\left\{ n^{-1/{(2+\alpha)}} , \left( \frac{\log p}{n} \right)^{1/2} \right\},$$ for some constant $\kappa$, if
	\begin{equation*}
	H[\delta, \{f\in \mathcal{F}:\up(f)\le 1 \}, Q_n] \le A_0\delta^{-\alpha},
	\end{equation*} 
	we then have with probability at least $1-c_2\exp\left(-c_3n\rho^2\right)$
	\begin{equation*}
	|\langle \e,\wh{f}_j - f^*_j\rangle_n|\le \rho \|\wh{f}_j - f^*_j\|_n + \rho\lambda \up(\wh{f}_j - f^*_j),
	\end{equation*}
	for all $j=1,\ldots,p$ and positive constants $c_2$ and $c_3$.
\end{lemma}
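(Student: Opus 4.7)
\medskip

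\noindent\textbf{Proof Plan.}
The plan is to bound a supremum of the empirical process $g \mapsto \langle \e, g\rangle_n$ uniformly over the shell $\{g = \wh{f}_j - f_j^* : g \in \mathcal{F} - f_j^*\}$ and then union bound over $j=1,\dots,p$. Since the stated inequality is linear in the two ``sizes'' $\|g\|_n$ and $\up(g)$, the natural device is a double peeling argument: on the slab $\{g : \|g\|_n \le R,\, \up(g) \le S\}$, control the supremum by an exponential concentration inequality whose rate is determined by the Dudley-type entropy integral, and then patch slabs together dyadically in $(R,S)$.

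First I would observe that $g_j := \wh{f}_j - f_j^*$ lies in the translate $\mathcal{F} - f_j^*$, and the semi-norm property of $\up$ gives $\up(g_j) \le \up(\wh{f}_j) + \up(f_j^*)$. Rescaling $g \mapsto g/(\up(g) \vee 1)$ reduces the entropy condition (\ref{eqn:entropyCondition}) to the unit ball $\{f \in \mathcal{F} : \up(f) \le 1\}$, where $H(\delta, \cdot, Q_n) \le A_0 \delta^{-\alpha}$. Fix $R, S > 0$ and consider
\[
\mathcal{G}_{R,S} = \{ g \in \mathcal{F} - f_j^* : \|g\|_n \le R,\ \up(g) \le S\}.
\]
By the entropy bound, $H(\delta, \mathcal{G}_{R,S}, Q_n) \le A_0' (\delta/S)^{-\alpha}$ for $\delta \le R$, so that Dudley's integral gives
\[
\int_0^{R} \sqrt{H(\delta, \mathcal{G}_{R,S}, Q_n)}\, d\delta \;\lesssim\; S^{\alpha/2} R^{1-\alpha/2}.
\]

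Second, I would apply a sub-Gaussian/chaining tail inequality for the empirical process (this is exactly the content behind Corollary 8.3 / Lemma 8.4 of van de Geer 2000, using the Orlicz-$\psi_2$ control on $\e_i$ implied by the moment assumption $K^2(\mathbb{E} e^{\e_i^2/K^2}-1)\le \sigma_0^2$). This yields, for some constants $c, c'$ depending on $(A_0, \alpha, K, \sigma_0)$,
\begin{equation*}
\Pr\!\left( \sup_{g \in \mathcal{G}_{R,S}} |\langle \e, g\rangle_n| \ge c \left[ S^{\alpha/2} R^{1-\alpha/2} n^{-1/2} + R t\right] \right) \le \exp(-c' n t^2),
\end{equation*}
valid for $t$ not too small. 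Choosing $t = \rho$ and using Young's inequality $S^{\alpha/2}R^{1-\alpha/2} \le R\cdot \rho + S\cdot (\rho)^{-(2-\alpha)/\alpha}\cdot n^{-?}$ with the calibrated rate $\rho \ge \kappa n^{-1/(2+\alpha)}$ balances the two terms; the result is an upper bound of the form $\rho R + \rho \lambda S$ provided $\lambda \ge 4\rho$.

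Third, I would run the peeling argument. Partition $R, S \in \{0\} \cup \{2^s, 2^{s+1}\}_{s\in\mathbb{Z}}$ dyadically, apply the single-slab bound on each piece with failure probability $\exp(-c' n \rho^2 2^{2s})$, and sum the geometric series so that the total failure probability remains of order $\exp(-c_3 n \rho^2)$. This delivers the bound $|\langle \e, g_j\rangle_n| \le \rho \|g_j\|_n + \rho\lambda\, \up(g_j)$ uniformly in $g_j$ for the single index $j$. Finally, a union bound over $j=1,\dots,p$ multiplies the failure probability by $p$, which is absorbed by the $\sqrt{\log p/n}$ floor in the definition $\rho = \kappa\max(n^{-1/(2+\alpha)}, \sqrt{\log p /n})$ by choosing $\kappa$ large enough.

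\textbf{Main obstacle.} The delicate point is ensuring that the peeling error sums and the union-bound-over-$p$ both end up dominated by $\exp(-c_3 n\rho^2)$ with the \emph{same} constant in front of $\lambda$ and $\|g\|_n$: this forces a careful Young-type split of the Dudley term $S^{\alpha/2}R^{1-\alpha/2}$ so that the coefficient of $R$ is exactly $\rho$ and the coefficient of $S$ is exactly $\rho\lambda$. The two regimes in the definition of $\rho$ correspond respectively to the entropy-driven scale $n^{-1/(2+\alpha)}$ (from Dudley) and to the $p$-dependent scale $\sqrt{\log p/n}$ (from the union bound); matching these simultaneously with a single $\kappa$ is what drives the requirement $\lambda \ge 4\rho$.
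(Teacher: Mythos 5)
Your plan takes a genuinely different route from the paper's proof, and one key calibration step is left unworked.

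The paper's argument avoids a two-parameter peeling entirely. It first observes that the penalty $I(f)=\|f\|_n+\lambda\up(f)$ can itself be used as the normalizing quantity: since $\{f:I(f)\le 1\}\subset\{f:\up(f)\le \lambda^{-1}\}$, a simple rescaling of the assumed $\delta$-cover gives $H(\delta,\{f:I(f)\le 1\},Q_n)\le A_0(\delta\lambda)^{-\alpha}\le A_1(\delta\rho)^{-\alpha}$. It then applies its variant of van de Geer's Lemma~8.4 \emph{to the class $\{f/I(f)\}$}, so the resulting single-variable peeling (in $\|f\|_n$ only) yields, after a union bound over $j$, the tail bound
\[
|\langle\e,f_j\rangle_n| \le \rho\,\|f_j\|_n^{1-\alpha/2}\,I(f_j)^{\alpha/2}\quad\text{for all }j,
\]
with probability at least $1-c_2\exp(-c_3n\rho^2)$. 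The crucial last step is \emph{not} a Young/weighted-AM--GM split but Bernoulli's inequality: writing
\[
\|f\|_n^{1-\alpha/2}I(f)^{\alpha/2}=\|f\|_n\Bigl(1+\frac{\lambda\up(f)}{\|f\|_n}\Bigr)^{\alpha/2}\le \|f\|_n+\lambda\up(f),
\]
since $\alpha/2\le 1$. This produces the stated bound $\rho\|\wh f_j-f_j^*\|_n + \rho\lambda\up(\wh f_j-f_j^*)$ with clean constants and no need to balance dyadic slabs in two parameters.

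Your proposal instead peels jointly in $R=\|g\|_n$ and $S=\up(g)$ and then uses a Young-type split of the Dudley term $S^{\alpha/2}R^{1-\alpha/2}$. This can in principle be made to work, but the calibration you wrote, $S^{\alpha/2}R^{1-\alpha/2}\le R\rho + S\rho^{-(2-\alpha)/\alpha}n^{-?}$, is not finished: the exponent on $n$ is left as a placeholder and the claim that the split produces coefficients exactly $\rho$ and $\rho\lambda$ under $\lambda\ge 4\rho$ is asserted rather than verified. What you would need to show is that the weighted AM--GM split
\[
n^{-1/2}S^{\alpha/2}R^{1-\alpha/2}\le \rho\lambda\,S + \rho\,R
\]
holds precisely when $n^{1/2}\rho^{1+\alpha/2}\gtrsim 1$, i.e.\ exactly when $\rho\ge\kappa n^{-1/(2+\alpha)}$ for $\kappa$ large enough --- this is the same calibration the paper obtains more transparently by choosing $T=\sqrt{n}\rho^{1+\alpha/2}$ inside Lemma~8.4 and then invoking Bernoulli. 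In short, your route is conceptually viable but adds a two-parameter peeling and a Young split whose constants are not nailed down, while the paper's trick of normalizing by the composite penalty $I(\cdot)$ collapses this to a one-parameter peeling and replaces Young with a one-line Bernoulli bound.
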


\begin{proof}
	Firstly, for $\mathcal{F}_0 = \{f\in \mathcal{F}: \up(f)\le 1\}$ we have by assumption a $\delta$ cover $f_1,\ldots,f_N$ such that for all $f\in \mathcal{F}_0$ we have $
	\min_{j\in \{1,\ldots,N\}} \| f_j -  f\|_n\le \delta$. Now we are interested in the set $\mathcal{F}_{0,\lambda} = \{f\in \mathcal{F}: \lambda\up(f)\le 1\}$.
	Firstly, for a function $f\in \mathcal{F}_{0,\lambda}$,
	\begin{align*}
	\min_{j\in \{1,\ldots,N\}} \| f - f_j/\lambda\|_n = \min_{j\in \{1,\ldots,N\}} \frac{1}{\lambda}\| \lambda f - f_j\|_n\le  \frac{\delta}{\lambda},
	\end{align*}
	because $\up(\lambda f) = \lambda \up(f) \le 1$ that implies $\lambda f \in \mathcal{F}_0$. This means that the set $\{f_1/\lambda,\ldots, f_N/\lambda\}$ is a $\de/\lambda$ cover of the set $\mathcal{F}_{0,\lambda}$.
	
	Thus $H(\delta, \mathcal{F}_0,Q_n) \le A_0\delta^{-\alpha}$ which implies that $H(\delta/\lambda, \mathcal{F}_{0,\lambda},Q_n) \le A_0\delta^{-\alpha}$ or equivalently $H(\delta, \mathcal{F}_{0,\lambda}, Q_n) \le A_0(\delta\lambda)^{-\alpha}$. Finally, since $\{f\in\mathcal{F}: I(f)\le 1\} \subset \{f\in\mathcal{F}: \up(f)\le \lambda^{-1}\}$ we have 
	\begin{align*}
	H[\delta, \{f\in \mathcal{F}: I(f)\le 1 \}, Q_n] \le A_0(\delta\lambda)^{-\alpha} .
	\end{align*}

	The same entropy bound holds for the class 
	\begin{equation}
	\label{eqn:class-scaled2}
	\wt{\mathcal{F}} = \left\{\frac{f_j - f_j^*}{\|f_j - f_j^*\|_n + \lambda \Om(f_j - f_j^*)}: f_j\in \mathcal{F}\right\},
	\end{equation}
	and we can now apply Corollary 8.3 of \cite{vandegeer2000empirical} by noting that 
	\begin{align*}
	\int_{0}^{1} H^{1/2}(u, \wt{\mathcal{F}}, Q_n)\, du \le \wt{A}_0\lambda^{-\alpha/2},
	\end{align*}
	for some constant $\wt{A}_0 = \wt{A}_0(A_0)$. For some $c_2 = c_2(L,\sigma_0)$ and all
	$\delta\ge 2c_2\wt{A}_0\lambda^{-\alpha/2}n^{-1/2}$ 
	we have 
	\begin{align}
	\label{eqn:lemmaUse}
	\text{pr}\left\{ \sup_{f_j\in \mathcal{F}}  \frac{\left| \langle \e, {f}_j - f^*_j \rangle_n\right|}{ \|f_j-f_j^*\|_n + \lambda \Om(f_j-f_j^*) } \ge \delta \right\} \le c_2\exp\left( -\frac{n\delta^2}{4c_2^2} \right).
	\end{align}
	Since $\lambda \ge \rho$ we note that $2c_2\wt{A}_0\lambda^{-\alpha/2}n^{-1/2} \le 2c_2\wt{A}_0\rho^{-\alpha/2}n^{-1/2}$ and that 
	\begin{align*}
	2c_2\wt{A}_0\rho^{-\alpha/2}n^{-1/2} \le \rho\  \text{ equivalently }\  \rho \ge \left( 2c_2\wt{A}_0 \right)^{{2}/{(2+\alpha)}} n^{-{1}/{(2+\alpha)}}.
	\end{align*}
	Which holds by definition since $\rho = \kappa\max\left\{ \left({\log p}/{n}\right)^{1/2}, n^{-{1}/{(2+ \alpha)}} \right\} \ge \kappa n^{-{1}/{(2+\alpha)}}$ and $\kappa$ is sufficiently large, any $ \kappa \ge \Big( 2c_2\wt{A}_0 \Big)^{{2}/{(2+\alpha)}}$ would suffice. Therefore, we can take $\delta = \rho$ in \eqref{eqn:lemmaUse} along with a union bound to obtain 
	\begin{align*}
	\text{pr}\left\{ \max_{j=1,\ldots,p}\sup_{f_j\in \mathcal{F}}  \frac{\left| \langle \e, {f}_j - f^*_j \rangle_n\right|}{ \|f_j-f_j^*\|_n + \lambda \Om(f_j-f_j^*) } \ge {\rho} \right\} &\le pc_2\exp\left( - \frac{n\rho^2}{4c_2^2} \right)\\
	&= c_2\exp\left\{ - n\rho^2\left( \frac{1}{4c_2^2} - \frac{\log p}{n\rho^2} \right)\right\}\\
	&\le c_2\exp\left( - n\rho^2c_3 \right),
	\end{align*}
	for some positive constant $c_3 = c_3(c_2,\wt{A}_0)$.
	

	Finally, we show that $c_3> 0$. This follows from the fact that $ {1}/{(4c_2^2)} - {\log p }/{(n\rho^2)}> 0$ which is equivalent to $n\rho^2 > 4c_2^2\log p$. This holds since $n\rho^2\ge \kappa^2\log p$ for $\kappa$ sufficiently large. Thus, we have with probability at least $1-c_2\exp\left(c_3n\rho^2\right)$ for all $j= 1,\ldots,p$
	\begin{align*}
	|\langle\e,\wh{f}_j - f^*_j\rangle_n| \equiv |\langle\e,\hd_j\rangle_n| \le \rho \|\hd_j\|_n + \rho\lambda\up(\hd_j)\ .
	\end{align*}
\end{proof}
\subsection{Using the active set}
We continue using the short-hand notation $\hd_j = \wh{f}_j - f_j^*\ (j=1,\ldots, p)$. So far we have shown that, for $\lambda\ge 4\rho$, with probability at least $1-2\exp\left(-{n\rho}/{c_1}\right) -c_2\exp\left(-c_3n\rho^2\right)$, the following inequality holds
\begin{align*}
\|\wh{f} - f^0\|_n^2 + 2\lambda\sum_{j=1}^{p}I(\wh{f}_j) &\le 2|\langle \e, \wh{f}-f^* \rangle_n| + 2|\bar{\e}|\sum_{j=1}^p \|\hd_j\|_n + 2\lambda \sum_{j=1}^{p}I(f^*_j)+\|f^*-f^0\|_n^2 \\
&\le \left\{ \sum_{j=1}^{p} 2\rho\|\hd_j\|_n + 2\rho\lambda\up(\hd_j)\right\}  + \left( 2\rho \sum_{j=1}^{p}\|\hd_j\|_n \right) \\
&+ \left\{ 2\lambda \sum_{j=1}^{p}I(f^*_j)\right\}+\|f^*-f^0\|_n^2.
\end{align*}
Thus we have
\begin{align*}\|\wh{f}-f^0\|_n^2 + 2\lambda\sum_{j=1}^{p}I(\wh{f}_j)  &\le \sum_{j=1}^{p} \left\{ \lambda \|\hd_j\|_n + \frac{\lambda^2}{2} \up(\hd_j) + 2\lambda\|f^*_j\|_n+2\lambda^2\up(f^*_j)\right\} + \|f^* - f^0\|_n^2.
\end{align*}
For notational convenience we will exclude the $\|f^*-f^0\|_n^2$ term in the following manipulations. If $S$ is the active set then we have on the right hand side, denoted by \RHS,
\begin{align*}
{\RHS} &= \lambda \sum_{j\in S} \left\{\|\hd_j\|_n +  \frac{\lambda}{2}\up(\hd_j) +2\|f^*_j\|_n + 2\lambda\up(f^*_j) \right\}+\lambda\sum_{j\in S^c} \left\{\|\wh{f}_j\|_n+\frac{\lambda}{2}\up(\wh{f}_j)\right\}\\
&\le \lambda \sum_{j\in S} \left\{\|\hd_j\|_n +  \frac{\lambda}{2}\up(\hd_j) +2\|\hd_j\|_n+2\|\wh{f}_j\|_n + 2\lambda\up(f^*_j) \right\}+\lambda\sum_{j\in S^c} \left\{\|\wh{f}_j\|_n+\frac{\lambda}{2}\up(\wh{f}_j)\right\}\\
&= 3\sum_{j\in S}\lambda\|\hd_j\|_n+2\sum_{j\in S}\lambda^2\up(f^*_j)+ \sum_{j\in S^c}\lambda\|\wh{f}_j\|+\frac{1}{2}\sum_{j\in S^c}\lambda^2\up(\wh{f}_j) + 2\sum_{j\in S}\lambda\|\wh{f}_j\|_n + \frac{1}{2}\sum_{j\in S}\lambda^2\up(\hd_j),
\end{align*}
where the inequality holds by the decomposition $\|f^*_j\|_n = \|f^*_j-\wh{f}_j+\wh{f}_j\|_n\le \|\hd_j\|_n + \|\wh{f}_j\|_n$.

\noindent On the left hand side, denoted by \LHS, we have
\begin{align*}
\LHS &= \|\wh{f} - f^0\|_n^2+2\lambda\sum_{j\in S} \left\{\|\wh{f}_j\|_n+\lambda\up(\wh{f}_j)\right\} + 2\lambda\sum_{j\in S^c} \left\{\|\wh{f}_j\|_n+\lambda\up(\wh{f}_j)\right\}\\
&\ge \|\wh{f} - f^0\|_n^2+2\lambda\sum_{j\in S} \left\{\|\wh{f}_j\|_n +\lambda\up(\hd_j)-\lambda\up(f^*_j)\right\} + 2\lambda\sum_{j\in S^c} \left\{\|\wh{f}_j\|_n+\lambda\up(\wh{f}_j)\right\},
\end{align*}
where the inequality follows from the triangle inequality $  \up(\wh{f}_j) + \up(f^*_j)\ge \up(\hd_j)$ since $\up(\cdot)$ is a semi-norm. By re-arranging the terms we obtain the inequality
\begin{equation*}
\|\wh{f} - f^0\|_n^2 + \lambda\sum_{j\in S^c}\left\{ \|\wh{f}_j\|_n + \frac{3\lambda}{2}\up(\wh{f}_j) \right\} + \frac{3\lambda^2}{2}\sum_{j\in S}\up(\hd_j) \le 3 \lambda\sum_{j\in S} \|\hd_j\|_n + {4\lambda^2} \sum_{j\in S}\up(f^*_j) + \|f^*-f^0\|_n^2
\end{equation*}
which implies that 
\begin{equation*}
\label{eqn:inequalitySlowRates2}
\|\wh{f} - f^0\|_n^2 + \lambda\sum_{j\in S^c} \|\hd_j\|_n + \frac{3\lambda^2}{2}\sum_{j=1}^p\up(\hd_j) \le 3 \lambda\sum_{j\in S} \|\hd_j\|_n + {4\lambda^2} \sum_{j\in S}\up(f^*_j) + \|f^*-f^0\|_n^2.
\end{equation*}
This implies the slow rates for convergence for $\lambda \ge 4\rho$ and $s = |S|$
\begin{equation*}
\label{eqn:inequalitySlowRates3}
\frac{1}{2}\|\wh{f} - f^0\|_n^2 + \le s\lambda\left\{ 3R + 2\lambda\sum_{j\in S}\up(f^*_j)/s\right\}  + \frac{1}{2}\|f^*-f^0\|_n^2.
\end{equation*}

This completes the proof of Theorem~\ref{thm:thm1}. In the next section we prove the oracle inequality with fast rates via the compatibility condition.

\subsection{Using the compatibility condition}
\label{sec:compatibility}

Recall the compatibility condition for $f = \sum_{j=1}^{p}f_j$, whenever
\begin{equation}
\label{eqn:compatibility1}
\sum_{j\in S^c}\|f_j\|_n 
\le 4\sum_{j\in S} \|f_j\|_n,
\end{equation}
then we have 
\begin{equation*}
\sum_{j\in S}\|f_j\|_n \le s^{1/2}\|f\|_n/\phi(S),
\end{equation*}
were $s = |S|$.
Once we assume the compatibility condition we can prove Proposition~\ref{thm:additiveFastRates} by considering the following two cases.

\noindent \textbf{Case 1:} $\lambda\sum_{j\in S}\|\hd_j\|_n\ge 4\lambda^2\sum_{j\in S}\up(f^*_j)$ in which case we have
\begin{align*}
\|\wh{f} - f^0\|_n^2 + \lambda\sum_{j\in S^c} \|\hd_j\|_n + \frac{3\lambda^2}{2}\sum_{j=1}^p\up(\hd_j) &\le 4 \lambda \sum_{j\in S} \|\hd_j\|_n + \|f^* - f^0\|_n^2 \ ,
\end{align*}
hence for the function $\wh{f} - f^* = \sum_{j=1}^{p}{\hd}_j$ (\ref{eqn:compatibility1}) holds and hence by the compatibility condition we have
\begin{align*}
\|\wh{f} - f^0\|_n^2 &+ \lambda\sum_{j\in S^c} \|\hd_j\|_n + \frac{3\lambda^2}{2}\sum_{j=1}^p\up(\hd_j) \le \frac{4\lambda s^{1/2}}{\phi(S)}\|\wh{f} - f^*\|_n + \|f^* - f^0\|_n^2 \\
&\le \frac{4\lambda s^{1/2}}{\phi(S)}\|\wh{f} - f^0\|_n + \frac{4\lambda s^{1/2}}{\phi(S)}\|{f}^* - f^0\|_n + \|f^* - f^0\|_n^2\\
&\le 2\left\{ \frac{2\lambda (2s)^{1/2}}{\phi(S)} \right\}\left( \frac{\|\wh{f} - f^0\|_n}{2^{1/2}} \right) + 2\left\{\frac{2\lambda s^{1/2}}{\phi(S)} \right\} \left(\|{f}^* - f^0\|_n\right) + \|f^* - f^0\|_n^2\\
&\le  \frac{4\lambda^2 (2s) }{\phi^2(S)} + \frac{\|\wh{f} - f^0\|^2_n}{2}  + \frac{4\lambda^2 {s}}{\phi^2(S)} + \|{f}^* - f^0\|^2_n + \|f^* - f^0\|_n^2\\
&\le \frac{12\lambda^2 s }{\phi^2(S)}+\frac{\|\wh{f} - f^0\|^2_n}{2}  +2\|{f}^* - f^0\|^2_n,
\end{align*}
where we use the inequality $2ab\le a^2+b^2$ and this implies that 
\begin{equation*}
\frac{1}{2} \|\wh{f} - f^0\|_n^2 + \lambda\sum_{j\in S^c} \|\hd_j\|_n + \frac{3\lambda^2}{2}\sum_{j=1}^p\up(\hd_j) \le \frac{12s\lambda^2}{\phi^2(S)} + 2\|f^* - f^0\|_n^2.
\end{equation*}

\noindent \textbf{Case 2:} $\lambda\sum_{j\in S}\|\hd_j\|_n\le 4\lambda^2\sum_{j\in S}\up(f^*_j)$ in which case we have 
\begin{align*}
\|\wh{f} - f^0\|_n^2 + \lambda\sum_{j\in S^c} \|\hd_j\|_n + \frac{3\lambda^2}{2}\sum_{j=1}^p\up(\hd_j) &\le 16 \lambda^2\sum_{j\in S} \up(f^*_j) + \|f^*-f^0\|_n^2 \\
&\le 16s\lambda^2\sum_{j\in S}\up(f^*_j)/s + \|f^*-f^0\|_n^2,
\end{align*}
which implies
\begin{align*}
\frac{1}{2}\|\wh{f} - f^0\|_n^2 + \lambda\sum_{j\in S^c} \|\hd_j\|_n + \frac{3\lambda^2}{2}\sum_{j\in S}\up(\hd_j) 
&\le 16s\lambda^2\sum_{j\in S} \up(f^*_j)/s + 2\|f^*-f^0\|_n^2.
\end{align*}

\section{Constraining the proposed penalty region}
\label{app:BoundRegion}
Recall the following definitions
\begin{equation}
\mathcal{H}^w_{K_n} = \left\{ {\beta}\in \mathbb{R}^{K_n}: \sum_{k=1}^{K_n} w_k\|\beta_{k:{K_n}}\|_2\le 1 \right\},
\label{eqn:hbRegionApp}
\end{equation}
\begin{equation}
E^w_{K_n} = \left\{ {\beta}\in \mathbb{R}^{K_n}: \sum_{k=1}^{K_n} \beta_k^2\left(w_1+\cdots+w_k\right)^2\le 1 \right\}\ .
\label{eqn:ellipsoidRegionApp}
\end{equation}

\begin{lemma}
	\label{lemma:upBound}
	For the regions $\mathcal{H}_{K_n}^w$ and $E_{K_n}^w$ as defined in (\ref{eqn:hbRegionApp}) and (\ref{eqn:ellipsoidRegionApp}), respectively, we have $\mathcal{H}_{K_n}^w \subseteq E_{K_n}^w$ for all $n\ge 1$ and non-negative weights.
\end{lemma}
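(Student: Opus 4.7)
The plan is to express both the quadratic form defining $E_{K_n}^w$ and the square of the sum defining $H_{K_n}^w$ as double sums indexed by pairs of indices $(j,l)$, and then reduce the containment to an elementary pointwise comparison.

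Writing $a_k = \|\bs{\beta}_{k:K_n}\|_2$ (a non-increasing sequence in $k$) and $W_k = w_1 + \cdots + w_k$, I would first expand $W_k^2 = \sum_{1 \le j,l \le k} w_j w_l$ and interchange the order of summation. Since the inner sum equals $\sum_{k \ge \max(j,l)} \beta_k^2 = a_{\max(j,l)}^2$, and monotonicity of $a_k$ gives $a_{\max(j,l)} = \min(a_j, a_l)$, this produces
\begin{equation*}
\sum_{k=1}^{K_n} W_k^2 \beta_k^2 \;=\; \sum_{j,l=1}^{K_n} w_j w_l \min(a_j, a_l)^2.
\end{equation*}
On the other side, squaring the defining sum of $H_{K_n}^w$ yields $\bigl(\sum_k w_k a_k\bigr)^2 = \sum_{j,l} w_j w_l\, a_j a_l$.

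The containment then reduces to the termwise inequality $\min(a_j, a_l)^2 \le a_j a_l$ for non-negative reals, which is immediate from $\min(a,b)^2 \le \min(a,b)\cdot\max(a,b) = ab$. Summing against the non-negative weights $w_j w_l$ and using $\sum_k w_k a_k \le 1$ gives $\sum_k W_k^2 \beta_k^2 \le 1$, so $\bs{\beta} \in E_{K_n}^w$.

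The main obstacle I expect is spotting the double-sum rewriting. A more natural first attempt, namely bounding $|\beta_k| W_k \le \sum_{j \le k} w_j a_j \le 1$ termwise using $a_j \ge |\beta_k|$ for $j \le k$, only yields the $\ell^\infty$-type conclusion $\max_k |\beta_k| W_k \le 1$, which is strictly weaker than the $\ell^2$-type ellipsoid bound needed. The double-sum reformulation via $W_k^2 = \sum_{j,l \le k} w_j w_l$ circumvents this by letting the off-diagonal interactions carry the argument, and does so using only non-negativity of the weights, matching the generality of the lemma's statement.
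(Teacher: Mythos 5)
Your proof is correct and uses essentially the same mechanism as the paper's: both hinge on expanding a quadratic into a double sum over index pairs $(j,l)$ and then exploiting that the tail norms $a_k = \|\bs{\beta}_{k:K_n}\|_2$ are nonincreasing, so the off-diagonal terms satisfy $a_j a_l \ge \min(a_j,a_l)^2$. The paper runs the same calculation from the opposite side (expanding $(\sum_k w_k a_k)^2$ and separating diagonal from off-diagonal, then lower-bounding via $\|\beta_{m:K_n}\|/\|\beta_{k:K_n}\|\ge 1$); your symmetric $\min$-formulation with the exact identity $\sum_k W_k^2\beta_k^2 = \sum_{j,l} w_jw_l\min(a_j,a_l)^2$ is a slightly cleaner packaging of the identical argument.
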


\begin{proof}
	It is sufficient to show $\sum_{k=1}^{K_n} \beta_k^2\left(w_1+\cdots+w_k\right)^2 \le \left( \sum_{k=1}^{K_n} w_k\|\beta_{k:{K_n}}\|_2 \right)^2$. We now have
	\begin{align*}
	\left( \sum_{k=1}^{K_n} w_k\|{\beta}_{k:{K_n}}\|_2 \right)^2 &= \sum_{m=1}^{K_n} w_m^2 \|\beta_{m:{K_n}}\|_2^2 + 2\sum_{m< k}w_kw_m \|\beta_{m:{K_n}}\|_2\|\beta_{k:{K_n}}\|_2\\
	&= \sum_{m=1}^{K_n} w_m^2\sum_{l=m}^{K_n}\beta_l^2 + 2\sum_{m<k} w_kw_m \|\beta_{k:{K_n}}\|_2^2\underbrace{ \frac{\|\beta_{m:{K_n}}\|}{\|\beta_{k:{K_n}}\| } }_{\ge 1} \\
	&\ge \sum_{l=1}^{K_n}\sum_{m=1}^{K_n} w_m^2\beta_l^2\bs{1}(l\ge m)  + 2\sum_{k=2}^{K_n}\sum_{m=1}^{k-1}w_kw_m\sum_{l=1}^{K_n} \beta_l^2\bs{1}(l\ge k) \\
	&= \sum_{l=1}^{K_n} \beta_l^2 \sum_{m=1}^{l} w_m^2  + 2\sum_{l=1}^{K_n} \beta_l^2\sum_{k=2}^{K_n}\sum_{m=1}^{k-1}w_kw_m \bs{1}(l\ge k) \\
	&= \sum_{l=1}^{K_n} \beta_l^2 \left( \sum_{m=1}^{l} w_m^2  + 2\sum_{k=2}^{l}\sum_{m=1}^{k-1}w_kw_m \right)  = \sum_{l=1}^{K_n}\beta_l^2 \left( \sum_{m=1}^{l}w_m \right)^2.
	\end{align*}
\end{proof}

\begin{lemma}
	\label{lemma:lowBound}
	For the region $\mathcal{H}_{K_n}^w$ as defined in (\ref{eqn:hbRegionApp}), we have the inclusion $\mathbb{B}_{K_n}^w \subseteq \mathcal{H}_{K_n}^w$ where 
	\begin{equation}
	\mathbb{B}_{K_n}^w = \left\{ {\beta} \in \mathbb{R}^{K_n}: \sum_{k=1}^{K_n}\beta_k^2 \le (w_1+\cdots+a_{K_n})^{-2} \right\}.
	\label{eqn:circle}
	\end{equation}
\end{lemma}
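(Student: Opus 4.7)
\textbf{Proof proposal for Lemma~\ref{lemma:lowBound}.}
The plan is to exploit the trivial monotonicity of the tail norms $\|\bs{\beta}_{k:K_n}\|_2$ in $k$. Specifically, for any vector $\bs{\beta}\in\mathbb{R}^{K_n}$ and any $k\in\{1,\dots,K_n\}$, we have
\begin{equation*}
\|\bs{\beta}_{k:K_n}\|_2^2 \;=\; \sum_{l=k}^{K_n}\beta_l^2 \;\le\; \sum_{l=1}^{K_n}\beta_l^2 \;=\; \|\bs{\beta}\|_2^2,
\end{equation*}
so in particular every tail norm is bounded by the full $\ell_2$-norm of $\bs{\beta}$.

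Given $\bs{\beta}\in\mathbb{B}_{K_n}^w$, let $W:=w_1+\cdots+w_{K_n}$ so that $\|\bs{\beta}\|_2\le W^{-1}$ by hypothesis. Using the bound above inside the \hierbasis{} penalty and then factoring, I would write
\begin{equation*}
\sum_{k=1}^{K_n} w_k\|\bs{\beta}_{k:K_n}\|_2 \;\le\; \sum_{k=1}^{K_n} w_k\,\|\bs{\beta}\|_2 \;=\; W\cdot \|\bs{\beta}\|_2 \;\le\; W\cdot W^{-1} \;=\; 1,
\end{equation*}
which is exactly the defining inequality of $H_{K_n}^w$. Hence $\bs{\beta}\in H_{K_n}^w$, proving the inclusion.

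There is no serious obstacle here: the entire content is the trivial observation that truncating a vector from the left cannot increase its $\ell_2$-norm, combined with the choice of radius $W^{-1}$ which was built precisely so that $W\cdot\|\bs{\beta}\|_2\le 1$. The lemma is best viewed as saying that the $\ell_2$-ball of radius $1/\sum_k w_k$ is the largest centered Euclidean ball that fits inside $H_{K_n}^w$, which is exactly what is needed in the proof of Lemma~\ref{lemma:lowerBoundEntropy} to apply a volume-based lower bound on the entropy of $H_d^w$.
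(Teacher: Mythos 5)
Your proof is correct and uses essentially the same idea as the paper: both arguments reduce to the observation that $\|\bs{\beta}_{k:K_n}\|_2 \le \|\bs{\beta}\|_2$ for every $k$, which lets you bound $\sum_k w_k\|\bs{\beta}_{k:K_n}\|_2 \le (\sum_k w_k)\|\bs{\beta}\|_2 \le 1$. The paper states the same chain of inequalities slightly less directly (by multiplying and dividing through by $\|\bs{\beta}_{1:K_n}\|$, with a stray exponent that appears to be a typo), so your version is, if anything, cleaner but not a different route.
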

\begin{proof}
	Let ${\beta}\in \mathbb{B}_{K_n}^w$ and for brevity we denote $\|\cdot\| = \|\cdot\|_2$. Then for ${\beta}\in \mathbb{B}_{K_n}^w$ 
	\begin{align*}
	1 &\ge \|{\beta}\|\left( w_1+\cdots+w_{K_n} \right)\\
	&\ge \|{\beta}\|\left( w_1\frac{\|\beta_{1:K_n}\|}{\|\beta_{1:K_n}\|} + 
	\cdots + w_{K_n}\frac{\|\beta_{K_n:K_n}\|}{\|\beta_{1:K_n}\|} \right)^2\\
	&= w_1\|\beta_{1:K_n}\| 
	+\cdots+w_{K_n}\|\beta_{K_n:K_n}\|,
	\end{align*}
	which implies that ${\beta}\in \mathcal{H}_{K_n}^w$.
\end{proof}

In Figure~\ref{fig:inclusion}, we demonstrate the above two lemma's for $K_n=2$ for the special case of $w_k =  k^{m} - (k-1)^m$.

\begin{figure}
	\centering
	\includegraphics[width = \textwidth]{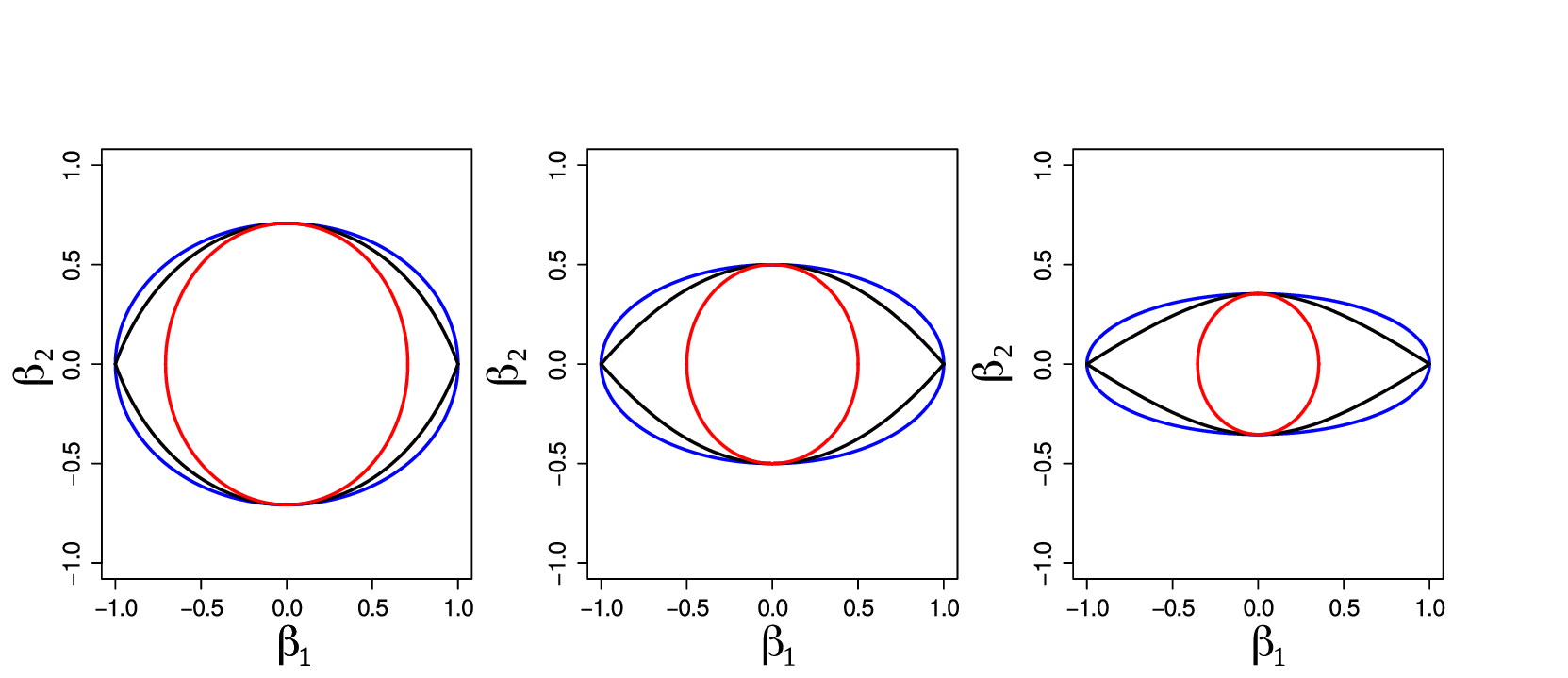}
	\caption{Demonstration of Lemma~\ref{app:BoundRegion}\ref{lemma:upBound} and Lemma~\ref{app:BoundRegion}\ref{lemma:lowBound} for the special case of $w_k = a_{j,\, m} = k^m - (k-1)^m$ and $K_n=2$. We show the region $E_{2}^w$~(\protect\includegraphics[height=0.5em]{NA_blue_1.eps}), $\mathbb{B}_2^w$~(\protect\includegraphics[height=0.5em]{NA_red_1.eps}) and $\mathcal{H}_2^w$~(\protect\includegraphics[height=0.5em]{NA_black_1.eps}). From left to right we have the plots for $m = 0.5,\, 1$ and $1.5$.}
	\label{fig:inclusion}
\end{figure}

We now present the proof of the claim made in \ref{sec:entropyResultsHB} of the manuscript regarding the relationship of our function class to weighted $L_p$ spaces. Recall the definition of our class $\mathcal{F}_{\infty}$:
\begin{equation*}
\mathcal{F}_{\infty} = \Big\{ f_{\beta}(x) = \sum_{k=1}^{\infty} \psi_k(x)\beta_k\, : \int \psi_k\psi_l\, dQ = 0 \text{ for } k\not= l, \int \psi_k^2\, dQ = 1 \Big\}\ .
\end{equation*}

\begin{lemma}
	For the hierarchical function class,
	\begin{equation*}
	\mathcal{F}_{\infty}^M = \{ f_{  {\beta}} \in \mathcal{F}_{\infty}: \sum_{k=1}^{\infty}\{k^m - (k-1)^m\}\|  {\beta}_{k:\infty}\|_2  \le M\},
	\end{equation*}
	and the weighted $L_q$ class 
	\begin{equation*}
	\mathcal{G}_q^M = \{ f_{  {\beta}} \in \mathcal{F}_\infty: \sum_{k=1}^{\infty}(k^m|\beta_k|)^q  \le M^q  \},
	\end{equation*}
	we have the following relationship:
	\begin{equation}
	\mathcal{G}_1^M \subseteq \mathcal{F}_{\infty}^M\subseteq \mathcal{G}_2^M.
	\end{equation}
\end{lemma}
\begin{proof}
	The inclusion $\mathcal{F}_{\infty}^M\subseteq \mathcal{G}_2^M$ follows from the proof of Lemma~F\ref{lemma:upBound} above. For the first inclusion it suffices to show that $\sum_{k=1}^{\infty} \{k^m - (k-1)^m\}\|\beta_{k:{\infty}}\|_2 \le \sum_{k=1}^{\infty} k^m|\beta_k|$. This follows from the fact that the $\ell_q$ norm is decreasing in $q$.
	\begin{align*}
	\sum_{k=1}^{\infty} \{k^m - (k-1)^m\}\|\beta_{k:{\infty}}\|_2 &\le \sum_{k=1}^{\infty} \{k^m - (k-1)^m\}\|\beta_{k:{\infty}}\|_1\\
	&= \sum_{k=1}^{\infty} \{k^m - (k-1)^m\}\sum_{j=k}^{\infty}|\beta_j|\\
	&= \sum_{k=1}^{\infty}\sum_{j=1}^{\infty} \{k^m - (k-1)^m\}|\beta_j|\bs{1}(j>k)\\
	&= \sum_{j=1}^{\infty}|\beta_j| \sum_{k=1}^{j}\{k^m - (k-1)^m\}\\
	&= \sum_{j=1}^{\infty}|\beta_j|j^m.
	\end{align*}
\end{proof}

\section{Some entropy results for ellipsoids}
\label{sec:entropyResults}

\subsection{An upper bound}
\label{sec:upperBoundEntropy}

In this section we establish some entropy results for the ellipsoid (\ref{eqn:ellipsoidRegionApp}) and the circle (\ref{eqn:circle}) which will allow us to establish entropy rates for the penalty region $\mathcal{H}_{K_n}^w$.

Since $K_n$ can potentially be $\infty$, or arbitrarily large, we need a way to handle this dimension. It turns out that this can be done using a simple argument which we demonstrate in the following theorem.

\begin{theorem}\citep{dumer2006covering}
	\label{thm:EllipsoidUp}
	For any $\theta\in (0,1/2)$, the $\delta$-entropy of the ellipsoid $E_{K_n}^w$ satisfies the following inequality
	\begin{equation*}
	H(\delta,\ E_{K_n}^w) \le \sum_{k=1}^{d-1}\log\left( \frac{1}{\delta\sum_{l=1}^{k}w_l} \right) + \mu_{\theta}\log(3/\theta)\ ,
	\end{equation*}
	where $\mu_{\theta}\le K_n$ is the largest integer such that $w_1+\cdots+w_{\mu_{\theta}}< \left\{(1-\theta)^{1/2}\delta\right\}^{-1}$ and $d\le K_n + 1$ is the largest integer such that $w_1+\cdots+w_{d-1}\le \delta^{-1}$.
	If $\delta^{-1}\le w_1$ then $H(\delta, E_{K_n}^w) = 0$ holds trivially.
\end{theorem}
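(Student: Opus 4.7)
The plan is to construct an explicit $\delta$-cover of $E_{K_n}^w$ by partitioning the coordinates according to the magnitude of the semi-axes and applying two different covering strategies to the resulting blocks. Write $a_k := (w_1 + \ldots + w_k)^{-1}$ for the length of the $k$-th semi-axis, and note that $a_k$ is non-increasing in $k$. Partition $\{1,\ldots,K_n\}$ into a \emph{long} block $L = \{1,\ldots,d-1\}$, on which $a_k \ge \delta$ by the definition of $d$, and a \emph{short} block $S = \{d,\ldots,K_n\}$, on which $a_k < \delta$. Split the error budget as $\sqrt{\theta}\,\delta$ for the long coordinates and $\sqrt{1-\theta}\,\delta$ for the short coordinates so that, by Pythagoras, any pair of covers at these two resolutions combines into a $\delta$-cover of the full ellipsoid.

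For the long block, any $\beta\in E_{K_n}^w$ satisfies $|\beta_k|\le a_k$, hence the projection of $E_{K_n}^w$ onto $L$ lies in the axis-aligned box $\prod_{k=1}^{d-1}[-a_k,a_k]$. I would build a product of one-dimensional $\bigl(\sqrt{\theta}\,\delta/\sqrt{d-1}\bigr)$-nets on the intervals $[-a_k,a_k]$; each such net has cardinality bounded by a constant multiple of $a_k/\delta = 1/(\delta \sum_{l\le k}w_l)$. Taking logs and summing yields the first term $\sum_{k=1}^{d-1}\log\bigl(1/(\delta\sum_{l=1}^{k}w_l)\bigr)$, with any absolute constant absorbed by the fact that the summation is cut off precisely where $a_k\ge\delta$.

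For the short block, I would use a volume-based ball cover. The projection of $E_{K_n}^w$ onto $S$ is itself an ellipsoid whose semi-axes are $a_d,a_{d+1},\ldots$, all bounded by $\delta$. Split $S$ further: on indices $\{d,\ldots,\mu_\theta\}$ the semi-axes lie in the range $\bigl(\sqrt{1-\theta}\,\delta,\,\delta\bigr]$, while on $\{\mu_\theta+1,\ldots,K_n\}$ every axis is at most $\sqrt{1-\theta}\,\delta$ and the total squared length contributed by these directions is absorbed by the short-block budget. Enclose the remaining $(\mu_\theta-d+1)$-dimensional sub-ellipsoid in a Euclidean ball and apply the standard volume-ratio bound to obtain a $\sqrt{1-\theta}\,\delta$-net of log-cardinality at most $\mu_\theta\log(3/\theta)$ (using $\mu_\theta$, rather than $\mu_\theta-d+1$, after applying a uniform bound on the ratio of each axis to the ambient ball radius).

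Combining the two nets via the triangle inequality produces a $\delta$-cover of $E_{K_n}^w$ whose log-cardinality is the sum of the two bounds, giving the stated inequality; the edge case $\delta^{-1}\le w_1$ is trivial because then $E_{K_n}^w\subset\{\beta:\|\beta\|_2\le\delta\}$. The main obstacle is calibrating the volume-comparison constant in the short-block cover to yield exactly $\mu_\theta\log(3/\theta)$: this requires treating carefully the ``transition'' axes in $(\sqrt{1-\theta}\,\delta,\delta]$, where a coordinate-wise cover would be wasteful but the volumetric argument is sharp. The free parameter $\theta$ is precisely what mediates this trade-off between the two covering mechanisms.
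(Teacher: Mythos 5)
This theorem is imported from \cite{dumer2006covering} and the paper does not supply a proof of its own, so there is no internal argument to compare your proposal against; what follows is a direct assessment.

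Your high-level shape --- split coordinates into ``large-axis'' and ``small-axis'' blocks, cover the first with something logarithmic and the second with something dimensional, let the tail absorb a $\sqrt{1-\theta}\,\delta$ error --- is the right instinct, but two steps fail as written. First, your long-block cover uses a product of one-dimensional nets at per-coordinate resolution $\sqrt{\theta}\,\delta/\sqrt{d-1}$, which gives each factor log-cardinality $\log(a_k/\delta)+\log\!\bigl(2\sqrt{d-1}/\sqrt{\theta}\bigr)$ rather than $\log(a_k/\delta)+O(1)$; summing over $k\le d-1$ leaves an excess of $(d-1)\log\sqrt{d-1}$, and since $d\to\infty$ as $\delta\to 0$ this cannot be absorbed into $\mu_\theta\log(3/\theta)$ --- $\mu_\theta$ matches $d-1$ in magnitude but the constant inside the logarithm blows up. A box cover equipartitioning the $\ell^2$ budget across coordinates is simply too lossy here. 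Second, your short-block budget is over-committed: you allocate $\sqrt{1-\theta}\,\delta$ to all of $S=\{d,\dots,K_n\}$, but the tail $\{\mu_\theta+1,\dots,K_n\}$ alone can already contribute $\ell^2$-mass up to $a_{\mu_\theta+1}\le\sqrt{1-\theta}\,\delta$ (since $\sum_{k>\mu_\theta}\beta_k^2\le a_{\mu_\theta+1}^2\sum_{k>\mu_\theta}\beta_k^2/a_k^2\le a_{\mu_\theta+1}^2$), so there is zero budget left for the net you build over the transition axes $\{d,\dots,\mu_\theta\}$; the Pythagoras accounting does not close.

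Both defects disappear if you use a two-way split and a single volume-ratio bound on the whole projection. Project onto the first $\mu_\theta$ coordinates --- the tail then contributes at most $\sqrt{1-\theta}\,\delta$ automatically --- and cover the projected ellipsoid at resolution $\sqrt{\theta}\,\delta$. The standard volume comparison gives a cover of cardinality at most $\prod_{k=1}^{\mu_\theta}\bigl(1+2a_k/(\sqrt{\theta}\,\delta)\bigr)$, and this product factors cleanly: for $k<d$ one has $a_k\ge\delta$, hence $1+2a_k/(\sqrt{\theta}\,\delta)\le (a_k/\delta)(1+2/\sqrt{\theta})\le (a_k/\delta)(3/\theta)$; for $d\le k\le\mu_\theta$ one has $a_k<\delta$, hence $1+2a_k/(\sqrt{\theta}\,\delta)\le 1+2/\sqrt{\theta}\le 3/\theta$. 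Summing logarithms yields precisely $\sum_{k=1}^{d-1}\log(a_k/\delta)+\mu_\theta\log(3/\theta)$, with $a_k=(w_1+\dots+w_k)^{-1}$, and in fact a slightly sharper $\mu_\theta\log(3/\sqrt{\theta})$ if you want it. The moral is that the ellipsoidal volume bound already interpolates between ``logarithmic per large axis'' and ``dimensional per small axis''; splitting it by hand into a box cover plus a ball cover, as you did, is both unnecessary and, at the resolutions you chose, lossy.
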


\begin{corollary}[Sobolev Ellipsoids]
	\label{lemma:UpperBoundSobolev}
	For Theorem~\ref{thm:EllipsoidUp}, let $w_k = k^{m} - (k-1)^m$. Then we have the following upper bound:
	\begin{equation*}
	H(\delta,\, E_{K_n}^w) \le U_E\delta^{-1/m},
	\end{equation*}
	for some constant $U_E$ which only depends on $m$ and $\theta$.
\end{corollary}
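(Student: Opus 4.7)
The plan is to specialize Theorem~\ref{thm:EllipsoidUp} using the telescoping identity
\begin{equation*}
\sum_{l=1}^{k} w_l = \sum_{l=1}^{k}\bigl[l^m - (l-1)^m\bigr] = k^m,
\end{equation*}
which turns every $\sum_{l=1}^k w_l$ appearing in the bound into the simple quantity $k^m$.

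First I would identify the cutoffs $d$ and $\mu_\theta$. Since $\sum_{l=1}^{d-1}w_l = (d-1)^m$, the integer $d$ is the largest with $(d-1)^m \le \delta^{-1}$, so $\delta^{-1/m} - 1 \le d-1 \le \delta^{-1/m} < d$. Similarly, $\mu_\theta$ is the largest integer with $\mu_\theta^m < [(1-\theta)\delta^2]^{-1/2}$, yielding $\mu_\theta \le (1-\theta)^{-1/(2m)}\delta^{-1/m}$. Both cutoffs are therefore $O(\delta^{-1/m})$, which already pins down the expected rate and handles the second summand $\mu_\theta\log(3/\theta)$ in Theorem~\ref{thm:EllipsoidUp}.

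Next I would bound the principal sum. With the telescoping identity,
\begin{equation*}
\sum_{k=1}^{d-1}\log\!\left(\frac{1}{\delta k^m}\right) = (d-1)\log\frac{1}{\delta} - m\log\bigl((d-1)!\bigr).
\end{equation*}
Applying Stirling's approximation $\log((d-1)!) = (d-1)\log(d-1) - (d-1) + O(\log(d-1))$, this becomes
\begin{equation*}
(d-1)\bigl[\log(1/\delta) - m\log(d-1)\bigr] + m(d-1) + O(\log(d-1)).
\end{equation*}
The bracket is nonnegative because $(d-1)^m \le \delta^{-1}$, and can be controlled from above using $d^m > \delta^{-1}$: indeed
\begin{equation*}
0 \le \log(1/\delta) - m\log(d-1) \le m\log\!\left(\frac{d}{d-1}\right) \le \frac{m}{d-1},
\end{equation*}
so $(d-1)\bigl[\log(1/\delta) - m\log(d-1)\bigr] \le m$. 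Therefore the whole sum is $m(d-1) + O(1) + O(\log(d-1))$, which is $O(\delta^{-1/m})$.

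Combining the two contributions yields $H(\delta, E_{K_n}^w) \le U_E \delta^{-1/m}$ with $U_E$ depending only on $m$ and $\theta$, as claimed. The main obstacle is the careful comparison between $\log(1/\delta)$ and $m\log(d-1)$: the gap is vanishingly small, so one must avoid naively multiplying $(d-1)$ by $\log(1/\delta)$ (which would give the wrong $\delta^{-1/m}\log(1/\delta)$ rate), and instead exploit the tight relation $\delta^{-1/m} - 1 \le d-1 \le \delta^{-1/m}$ to collapse that product to a constant.
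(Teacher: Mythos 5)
Your argument is correct and matches the paper's proof in all essentials: both exploit the telescoping identity $\sum_{l\le k}w_l = k^m$, the sandwich $(d-1)^m \le \delta^{-1} < d^m$, and Stirling's inequality on $\log((d-1)!)$ to control the principal sum, with $\mu_\theta\log(3/\theta)$ handled by the same direct bound. The only cosmetic difference is that the paper factors $\delta^{-1/m}$ out of the sum and then shows $m[\log d - d^{-1}\log((d-1)!)]$ is uniformly bounded, whereas you keep the factor $(d-1)$ and show the ``log-balance'' term $(d-1)[\log(1/\delta)-m\log(d-1)]$ is $O(1)$ directly---the same inequality read in a slightly different order.
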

\begin{proof}
	Firstly, we note that with this definition of $w_k$, we can let $K_n = \infty$. Thus if we can show that $H(\delta, E_{\infty}^w) \le U\delta^{-1/m}$ then the result follows since $E_{K_n}^w \subset E_{\infty}^w$ for all $K_n < \infty$.
	\newline 
	
	Now we have $w_1+\cdots+w_{\mu_{\theta}} = \mu_{\theta}^m$, hence
	\begin{align*}
	\mu_{\theta}^m < \frac{\delta^{-1}}{(1-\theta)^{1/2}}, \text{ and thus }   \mu_{\theta} \log(3/\theta) < \log(3/\theta) \left\{\frac{\delta^{-1}}{(1-\theta)^{1/2} }\right\}^{1/m} = U_1\delta^{-{1}/{m}}.
	\end{align*}
	Now for the second part we use the fact that $w_1+\cdots+w_{d-1} = (d-1)^m \le \delta^{-1} < d^{m}$ and we obtain
	\begin{align*}
	\sum_{j=1}^{d-1} \log\left( \frac{1}{\delta j^{m}} \right) &= (d-1)\log(\delta^{-1}) + \log\left[ \frac{1}{\{(d-1)!\}^m} \right]\le \delta^{-1/m}  \log(\delta^{-1}) -m\log\left\{ (d-1)! \right\} \\
	&\le \delta^{-1/m} \left[  \log(\delta^{-1}) -m \delta^{1/m}\log\left\{ (d-1)! \right\} \right] \le  \delta^{-1/m} \left[  \log(d^m) -m \delta^{1/m}\log\left\{ (d-1)! \right\} \right]\\
	&\le \delta^{-1/m}m\left[ \log(d) - d^{-1}\log\{(d-1)!\}  \right].
	\end{align*}
	Now by sterling's inequality we have for all $d\in \{1,2,\ldots\}$
	\begin{align*}
	\log(d+1) - \frac{\log(d!)}{d+1} &\le \log(d+1) - \frac{\log\left( 2^{1/2}\pi^{1/2}d^{d+1/2}e^{-d} \right)}{d+1}\\
	&= \log(d+1) +\frac{d}{d+1} - \frac{\log(2^{1/2}\pi^{1/2})}{d+1}- \frac{d+1/2}{d+1}\log d\\
	&\le \log(d+1) + 1 - \frac{d+1-1+1/2}{d+1}\log d \\
	&= 1+ \log\left( \frac{d+1}{d} \right) + (1/2)\frac{\log d}{d+1}\\
	&\le 1+ \log\left( 1+ \frac{1}{d} \right) + (1/2)\frac{\log d}{d+1}\le 1+\log 2 + 1.
	\end{align*}
	This implies that 
	\begin{align*}
	\sum_{j=1}^{d-1} \log\left( \frac{1}{\delta j^{m}} \right) \le \delta^{-1/m}m\left[ \log(d) - d^{-1}\log\{(d-1)!\}  \right] \le U_2\delta^{-1/m}.
	\end{align*}
\end{proof}

\begin{corollary}[Multivariate framework]
	\label{lemma:UpperBoundMultivariateHB}
	For Theorem~\ref{thm:EllipsoidUp}, let $w_{q_k} = k^m-(k-1)^m$ where for a fixed dimension $p$ we define $$q_k = \sum_{l=1}^{k}\binom{l+p-2}{l-1} = \binom{k+p-1}{p},$$ and all other $w_k= 0$. Then we have the following upper bound:
	\begin{equation*}
	H(\delta,\, E_{K_n}^w) \le U_E\delta^{-p/m},
	\end{equation*}
	for some constant $U_E$ which only depends on $m$ and $\theta$.
\end{corollary}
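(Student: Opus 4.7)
The plan is to apply Theorem~\ref{thm:EllipsoidUp} directly, following the same structure as the proof of Corollary~\ref{lemma:UpperBoundSobolev}, while carefully accounting for the fact that the multivariate weights place a zero at every index except $l = q_k$. The key structural observation is that the partial sum $S_l := w_1 + \cdots + w_l$ is piecewise constant: for every index $l$ with $q_k \le l < q_{k+1}$, the telescoping yields $S_l = k^m$, and this index range has cardinality $q_{k+1} - q_k = \binom{k+p-1}{p-1} \le C_p\, k^{p-1}$ for a constant $C_p$ depending only on $p$. As in the univariate proof, I would first pass to $K_n = \infty$ using $E_{K_n}^w \subset E_\infty^w$, so the bound need only be proved once.

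Next, I bound the $\mu_\theta$ term. Since $S_{\mu_\theta} < (\sqrt{1-\theta}\,\delta)^{-1}$ and the partial sums take values in $\{k^m : k \ge 0\}$, if $q_k \le \mu_\theta < q_{k+1}$ then $k^m < (\sqrt{1-\theta}\,\delta)^{-1}$, so $k < [(1-\theta)\delta^2]^{-1/(2m)}$. Consequently $\mu_\theta < q_{k+1} = \binom{k+p}{p} \le C'_{p,\theta}\, \delta^{-p/m}$, giving $\mu_\theta \log(3/\theta) \le U_1\, \delta^{-p/m}$ for a constant $U_1$ depending only on $p$, $m$, $\theta$.

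For the main sum, let $K$ satisfy $q_K \le d-1 < q_{K+1}$; since $S_{d-1} = K^m \le \delta^{-1}$ we have $K \le \delta^{-1/m}$. Grouping indices by constancy of $S_l$,
\begin{align*}
\sum_{l=1}^{d-1} \log\frac{1}{\delta S_l} \;\le\; \sum_{k=1}^{K} (q_{k+1} - q_k) \log\frac{1}{\delta\, k^m} \;\le\; C_p\, m \sum_{k=1}^{K} k^{p-1} \log\frac{K^m}{k^m},
\end{align*}
where I used $\delta^{-1} \le (K+1)^m$ to absorb $\log(1/\delta)$ into $m\log K$ up to an $O(1)$ constant per term. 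The resulting sum $\sum_{k=1}^K k^{p-1}\log(K/k)$ is bounded by $K^p \int_0^1 u^{p-1}\log(1/u)\,du = K^p/p^2$ via the substitution $u = k/K$ and an integral comparison (the integrand being monotone on $(0,1)$). Combining with the $\mu_\theta$ bound yields $H(\delta, E_{K_n}^w) \le U_E \delta^{-p/m}$ with $U_E$ depending only on $p$, $m$, $\theta$, which is the claimed result.

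The main technical obstacle is verifying that the naive worst case $\log(1/\delta) \cdot K^p = \delta^{-p/m}\log(1/\delta)$ (which would be off by a logarithmic factor) does not in fact occur: the offending $\log(1/\delta)$ factor is cancelled by the $-m\sum k^{p-1}\log k$ contribution, exactly as in the univariate argument where a Stirling estimate was used. Writing the cancellation cleanly via the integral comparison $\sum k^{p-1}\log(K/k) \lesssim K^p$ is what makes the bound tight at the order $\delta^{-p/m}$.
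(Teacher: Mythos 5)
Your proof is correct and takes the same overall structure as the paper's proof: split the bound from Theorem~\ref{thm:EllipsoidUp} into the $\mu_\theta$ term and the logarithmic sum, and exploit the piecewise-constant partial sums $S_l = k^m$ for $q_k \le l < q_{k+1}$ together with the closed form $q_k = \binom{k+p-1}{p}$. For the $\mu_\theta$ term your argument is essentially the paper's: from $q_k \le \mu_\theta < q_{k+1}$ and $S_{\mu_\theta} = k^m < (\sqrt{1-\theta}\,\delta)^{-1}$ you get $\mu_\theta < q_{k+1} = \binom{k+p}{p}$, which is of order $k^p$ and hence of order $\delta^{-p/m}$. Where you genuinely diverge from the paper is the main sum $\sum_{l=1}^{d-1}\log\{1/(\delta S_l)\}$. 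The paper rewrites it in terms of $f(k_2) = \sum_{l=1}^{k_2}(q_l - q_{l-1})\log q_{l-1}$, proves by induction that $\{f(k_2) - \log(q_{k_2-1})\}/(q_{k_2}-1) \ge \log\{(k_2-1)!\}/k_2$, and then re-uses the Stirling-type estimate $\log(d+1) - \log(d!)/(d+1) \le 2 + \log 2$ already established in the univariate case (Corollary~\ref{lemma:UpperBoundSobolev}). You instead group by the piecewise-constant levels, bound the block sizes by $q_{k+1}-q_k = \binom{k+p-1}{p-1} \le C_p\, k^{p-1}$, and estimate $\sum_{k=1}^K k^{p-1}\log\{(K+1)/k\}$ by an integral comparison with $\int_0^1 u^{p-1}\log(1/u)\,du = 1/p^2$, with $K \le \delta^{-1/m}$. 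Your route is more self-contained and transparent: it avoids the paper's undetailed induction step, and it makes explicit the cancellation of the nominal $\log(1/\delta)$ factor rather than hiding it in a factorial estimate.

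One small slip worth fixing: for $p \ge 2$ the integrand $u \mapsto u^{p-1}\log(1/u)$ is not monotone on $(0,1)$; it rises to a maximum at $u = e^{-1/(p-1)}$ and then falls. The integral comparison is still valid because the function is unimodal: for such functions $\sum_{k=1}^K h(k) \le \int_1^K h(x)\,dx + \max_x h(x)$, and here $\max_x x^{p-1}\log(K/x)$ is of order $K^{p-1}$, which is lower order than the $K^p/p^2$ contribution of the integral. So the conclusion $\sum_{k=1}^K k^{p-1}\log(K/k) \le C K^p$ is correct, just with a slightly different justification than the one you stated.
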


\begin{proof}
	Firstly, since $w_1=1$ the entropy is 0 for $\delta\ge 1$ and hence we will restrict ourselves to $\delta\in (0,1)$. We note that we must have $\mu_{\theta} = q_{k_1}-1 $ for some integer $k_1$. This is because all weights after $q_{k_1-1}$ are zero until $w_{q_k}$. Now we have by definition 
	\begin{align*}
	w_1+\cdots+w_{q_{k_1}-1}  = (k_1-1)^m \le \left\{\delta(1-\theta)^{1/2}\right\}^{-1},
	\end{align*}
	and we have 
	\begin{align*}
	\mu_{\theta} &= q_{k_1}-1 = \binom{k_1+p-1}{p}-1 < \frac{(k_1+p-1)^p}{p!}\\
	&\le \frac{\left[\left\{\delta (1-\theta)^{1/2} \right\}^{-1/m} +p \right]^p}{p!} = \frac{\delta^{-{p}/{m}} \left\{ \left( 1-\theta\right)^{-1/(2m)} + p\delta^{1/m} \right\}^p}{p!}\\
	&\le \delta^{-{p}/{m}}  \frac{\left\{ \left(1-\theta\right)^{-1/(2m)} + p \right\}^p}{p!},
	\end{align*}
	where the second line follows from the inequality 
	$$\binom{n}{k}\le n^k/k!.$$
	This implies that for $\delta\in (0,\, 1)$
	\begin{align*}
	\mu_{\theta}\log(3/\theta) \le U_1\delta^{-{p}/{m}}.
	\end{align*}
	
	Similarly, there is an integer $k_2$ such that $d - 1 = q_{k_2} - 1$. Which means that $(k_2-1)^m\le \delta^{-1}\le k_2^m$. For the other term we have 
	\begin{align*}
	&\sum_{k=1}^{d-1} \log\left( \frac{1}{\delta\sum_{l=1}^{k}w_l} \right) = (d-1)\left\{ \log(\delta^{-1}) - \frac{\sum_{k=1}^{d-1}\log(\sum_{l=1}^{k}w_l) }{d-1} \right\} \\
	&= (d-1)\left\{ \log(\delta^{-1}) - \frac{ (q_2-1)\log(1^m) +(q_3-q_2)\log(2^m)+\cdots+(q_{k_2}-1-q_{k_2-1})\log(q_{k_2-1}^m) }{d-1} \right\} \\
	&= (d-1)\left[ \log(\delta^{-1}) - \frac{ m \{f(k_2) - \log(q_{k_2-1})\} }{d-1} \right] ,
	\end{align*}
	where $f(k_2)= (q_2-1)\log(1) +(q_3-q_2)\log(2)+\cdots+(q_{k_2}-q_{k_2-1})\log(q_{k_2-1}) = \sum_{l=1}^{k_2}(q_{l} - q_{l-1})\log(q_{l-1})$. Hence we have
	\begin{align*}
	\sum_{j=1}^{d-1} \log\left( \frac{1}{\delta\sum_{l=1}^{k}a_l} \right) &\le (d-1)\left[ \log(k_2^m) - \frac{ m \{ f(k_2) - \log(q_{k_2-1})\} }{d-1} \right]\\
	&= m(d-1)\left\{ \log(k_2) - \frac{ f(k_2) - \log(q_{k_2-1}) }{q_{k_2}-1} \right\}.
	\end{align*}
	Now by induction we can show that $\frac{ f(k_2) - \log(q_{k_2-1}) }{q_{k_2} - 1} \ge \frac{\log\{(k_2-1)!\}}{k_2}$ which implies that 
	\begin{align*}
	m(d-1)\left\{ \log(k_2) - \frac{ f(k_2) - \log(q_{k_2-1}) }{q_{k_2}-1} \right\} &\le m(d-1)\left[ \log(k_2) - \frac{\log\{(k_2-1)!\}}{k_2} \right]\\
	&\le (d-1)m\left\{2+\log(2)\right\}.
	\end{align*}
	Finally, we note that 
	\begin{align*}
	d-1 &= q_{k_2}-1 = \binom{k_2+p-1}{p}-1 < \binom{k_2 + p-1}{p}\\
	&\le \frac{(k_2+p-1)^p}{p!}\le \frac{(\de^{-1/m}+p)^p}{p!} = \de^{-{p}/{m}} \frac{\left( 1 + p\de^{1/m} \right)^p}{p!}\le \de^{-{p}/m} \frac{\left( 1 + p\right)^p}{p!}.
	\end{align*}
	
\end{proof}

\section{Proof of Theorem~\ref{thm:upperRate}}
\label{app:proofDetails}

\begin{proof}
	
	By definition
	\begin{align*}
	\frac{1}{2}\|\wh{f}- {y}\|_n^2 + \lambda_n^2\Om(\wh{f}\mid Q_n) &\le  \frac{1}{2}\| f_{n}^* - {y}\|_n^2 + \lambda_n^2\Om(f_{n}^*\mid Q_n), 
	\end{align*}
	which leads to the following inequality
	\begin{align*}
	\frac{1}{2}\|\wh{f}-f^0\|_n^2 + \lambda_n^2\Om(\wh{f}\mid Q_n) &\le |\langle\e, \wh{f}-f_{n}^*\rangle_n|+ \frac{1}{2}\|f_n^*-f^0\|_n^2 + \lambda_n^2\Om(f_n^*\mid Q_n)\ ,
	\end{align*}
	where $\langle \e, f\rangle_n = {n}^{-1}\sum_{i=1}^{n}\e_if(x_i)$. Via the simple decomposition $\|\wh{f} - f_n^*\|_n^2 \le 2\|\wh{f} - f^0\|_n^2 + 2\|f^0-f_n^*\|_n^2$ we obtain 
	\begin{align*}
	\frac{1}{2}\|\wh{f}-f_n^*\|_n^2 + \lambda_n^2\Om(\wh{f}\mid Q_n) &\le \|\wh{f}-f^0\|_n^2 + \|f^0-f_n^*\|_n^2 + 2\lambda_n^2\Om(\wh{f}\mid Q_n)\\
	&= \|f^0-f_n^*\|_n^2 +2\left\{ \frac{1}{2}\|\wh{f}-f^0\|_n^2 +  \lambda_n^2\Om(\wh{f}\mid Q_n)\right\}\\
	&\le \|f^0-f_n^*\|_n^2 +2\left\{|\langle\e, \wh{f}-f_{n}^*\rangle_n| + \frac{1}{2}\|f_n^*-f^0\|_n^2 + \lambda_n^2\Om(f_n^*\mid Q_n) \right\}\\
	&= 2|\langle\e, \wh{f} - f_n^*\rangle_n| + 2\lambda_n^2\Om(f_n^*\mid Q_n) + \|f^0-f_n^*\|_n^2 \\
	&\le \max\left\{ 4|\langle\e, \wh{f} - f_n^*\rangle_n| + 4\lambda_n^2\Om(f_n^*\mid Q_n),\  2\|f^0-f_n^*\|_n^2\right\}.
	\end{align*} 
	Thus our basic inequality is given by 
	\begin{align*}
	\frac{1}{2}\|\wh{f}-f_n^*\|_n^2 + \lambda_n^2\Om(\wh{f}\mid Q_n) \le 2\max\left\{ 2|\langle\e, \wh{f} - f_n^*\rangle_n| + 2\lambda_n^2\Om(f_n^*\mid Q_n),\  \|f^0-f_n^*\|_n^2\right\}.
	\end{align*} 
	
	\noindent Hence from the basic inequality either $\frac{1}{2}\|\wh{f}-f_{n}^*\|_n^2 + \lambda_n^2\Om(\wh{f}\mid Q_n) \le 2\|f^0 - f_n^*\|_n^2\ $ which implies the result or 
	\begin{align*}
	\frac{1}{2}\|\wh{f}-f_{n}^*\|_n^2 + \lambda_n^2\Om(\wh{f}\mid Q_n) &\le 4\langle\e, \wh{f}-f_{n}^*\rangle_n+  4\lambda_n^2\Om(f_{n}^*\mid Q_n)\ .
	\label{eqn:mainCase1}
	\end{align*}
	
	\noindent Now note that $H(\de, \{ f\in \mathcal{F}_{n}:\Om(f\mid Q_n)\le 1\}, Q_n)\le A_1\de^{-\alpha}$ implies
	\begin{equation*}
	H\left[\de,\,\left\{  \frac{f-f_{n}^* }{\Om(f\mid Q_n)+\Om(f_{n}^*\mid Q_n)}: f\in \mathcal{F}_{n}\right\},\, Q_n \right]\le \wt{A}_1\de^{-\alpha}.
	\end{equation*}
	Thus we invoke Lemma 8$\cdot$4 of \cite{vandegeer2000empirical} and conclude that with probability at least $1-c\exp\{-(T/c)^2\}$ for a constant $c>0$ and  all $T\ge c$, we have
	\begin{equation*}
	|\langle\e, \wh{f} - f_{n}^*\rangle_n| \le Tn^{-1/2}\|\wh{f}-f_{n}^*\|_n^{1-{\alpha}/{2}}\left\{ \Om(\wh{f}\mid Q_n)+\Om(f_{n}^*\mid Q_n) \right\}^{{\alpha}/{2}}.
	\end{equation*}
	
	Define the set $\mathcal{T}$ as
	\begin{equation*}
	\mathcal{T} = \left\{\sup_{f\in \mathcal{F}_n}  \Big|\langle\e,\, {f}-f_{n}^*\rangle_n \Big| \le Tn^{-1/2}\|f - f_{n}^*\|_n^{1-{\alpha}/{2}}\left\{ \Om(f\mid Q_n)+\Om(f_{n}^*\mid Q_n) \right\}^{{\alpha}/{2}} \right\}\ ,
	\end{equation*}
	then on the set $\mathcal{T}$ we have
	\begin{align*}
	\frac{1}{2}\|\wh{f}-f_{n}^*\|_n^2 + \lambda_n^2\Om(\wh{f}\mid Q_n) \le 4Tn^{-1/2}\|\wh{f}-f_{n}^*\|_n^{1-{\alpha}/{2}} \left\{ \Om(\wh{f}\mid Q_n)+\Om(f_{n}^*\mid Q_n) \right\}^{{\alpha}/{2}}+ 4\lambda_n^2\Om(f_{n}^*\mid Q_n)\ .
	\end{align*}
	Which means we have either 
	\begin{align*}
	\frac{1}{2}\|\wh{f}-f_{n}^*\|_n^2 + \lambda_n^2\Om(\wh{f}\mid Q_n) \le  8\lambda_n^2\Om(f_{n}^*\mid Q_n) ,
	\end{align*}
	which is of the desired form or
	\begin{align}
	\frac{1}{2}\|\wh{f}-f_{n}^*\|_n^2 + \lambda_n^2\Om(\wh{f}\mid Q_n) \le 8Tn^{-1/2}\|\wh{f}-f_{n}^*\|_n^{1-\frac{\alpha}{2}} \left\{ \Om(\wh{f}\mid Q_n)+\Om(f_{n}^*\mid Q_n) \right\}^{\frac{\alpha}{2}}\ .
	\label{eqn:Case2}
	\end{align}
	We now consider $(\ref{eqn:Case2})$ only.
	
	\subsection{Case 1: $\Om(\wh{f}\mid Q_n)\ge \Om(f_{n}^*\mid Q_n)$}
	In this case we have
	\begin{equation*}
	\frac{1}{2}\|\wh{f}-f_{n}^*\|_n^2 + \lambda_n^2\Om(\wh{f}\mid Q_n) \le 8Tn^{-1/2}\|\wh{f}-f_{n}^*\|_n^{1-{\alpha}/{2}} \left\{ 2\Om(\wh{f}\mid Q_n) \right\}^{{\alpha}/{2}}.
	\label{eqn:Case1b}
	\end{equation*}
	which gives us the following equivalent inequalities
	\begin{align*}
	\lambda_n^2\Om(\wh{f}\mid Q_n) &\le 8Tn^{-1/2}\|\wh{f}-f_{n}^*\|_n^{1-{\alpha}/{2}} \left\{ 2\Om(\wh{f}\mid Q_n) \right\}^{{\alpha}/{2}}\\
	\left\{\Om(\wh{f}\mid Q_n)\right\}^{1-{\alpha}/{2}} &\le 2^{3+{\alpha}/{2}}Tn^{-{1}/{2}}\lambda_n^{-2} \|\wh{f} - f_n^*\|_n^{1-{\alpha}/{2}}\\
	\Om(\wh{f}\mid Q_n) &\le \left( 2^{3+{\alpha}/{2}}Tn^{-{1}/{2}}\lambda_n^{-2} \right)^{{2}/{(2-\alpha)}}\|\wh{f} - f_n^*\|_n.
	\end{align*}
	Plugging this into the right hand side of (\ref{eqn:Case1b}) and solving for $\|\wh{f} - f_{n}^*\|_n$ we obtain the following series of inequalities
	\begin{align*}
	\frac{1}{2}\|\wh{f}-f_{n}^*\|_n^2 &\le T 2^{3+{\alpha}/{2}} n^{-{1}/{2}} \|\wh{f} - f_n^*\|_n^{1-{\alpha}/{2}} \left( 2^{3+{\alpha}/{2}}Tn^{-{1}/{2}}\lambda_n^{-2} \right)^{\{{2}/{(2-\alpha)}\} \{{\alpha}/{2} \} }\|\wh{f} - f_n^*\|_n^{\alpha/2} \\ 
	\frac{1}{2}\|\wh{f}-f_{n}^*\|_n&\le T^{{2}/{(2-\alpha)} } 2^{ {(6+\alpha)}/{(2-\alpha)} }n^{-{1}/{(2-\alpha)}} \lambda_n^{-{2\alpha}/{(2-\alpha)}} \\ 
	\frac{1}{2}\|\wh{f} - f_n^*\|_n^2 &\le C_1n^{-{2}/{(2-\alpha)}} \lambda_n^{-{4\alpha}/{(2-\alpha)}} = C_1\lambda_n^2\Om_{}(f_n^*\mid Q_n) \ ,
	\end{align*}
	where $C_1 = T^{{4}/{(2-\alpha)}}2^{{(14+\alpha)}/{(2-\alpha)}}$ and recall the definition $\lambda_n^{-1} = n^{{1}/{(2+\alpha)}} \left\{ \Om_{}(f_n^*\mid Q_n) \right\}^{{(2-\alpha)}/{\{2(2+\alpha)\}}}.$

	\subsection{Case 2:  $\Om_{}(\wh{f}\mid Q_n)\le \Om_{}(f_{n}^*\mid Q_n)$}
	
	In this case we have
	\begin{equation*}
	\frac{1}{2}\|\wh{f}-f_{n}^*\|_n^2 + \lambda_n^2\Om_{}(\wh{f}\mid Q_n) \le 8Tn^{-1/2}\|\wh{f}-f_{n}^*\|_n^{1-{\alpha}/{2}} \left\{ 2\Om_{}(f_{n}^{*}\mid Q_n) \right\}^{{\alpha}/{2}}.
	\label{eqn:Case2b}
	\end{equation*}
	From which we directly get the following inequalities
	\begin{align*}
	\frac{1}{2}\|\wh{f}-f_{n}^*\|_n^2 &\le 8{T}n^{-{1}/{2}}\|\wh{f}-f_{n}^*\|_n^{1-{\alpha}/{2} } \left\{ 2\Om_{}(f_{n}^*\mid Q_n) \right\}^{{\alpha}/{2}} \\
	\frac{1}{2}\|\wh{f}-f_{n}^*\|_n^{1+{\alpha}/{2}} &\le 2^{3+{\alpha}/{2}}{T}n^{-{1}/{2}} \left\{ \Om_{}(f_{n}^*\mid Q_n) \right\}^{{\alpha}/{2}}\\ 
	\|\wh{f}-f_{n}^*\|_n &\le 2^{{(8+\alpha)}/{(2+\alpha)}} T^{{2}/{(2+\alpha)}} n^{-{1}/{(2+\alpha)}} \left\{ \Om_{}(f_{n}^*\mid Q_n) \right\}^{{\alpha}/{(2+\alpha)}}\\
	\frac{1}{2} \|\wh{f}-f_{n}^*\|_n^2 &\le C_2 n^{-{2}/{(2+\alpha)}} \left\{ \Om_{}(f_{n}^*\mid Q_n) \right\}^{{2\alpha}/{(2+\alpha)}} = C_2 \lambda_n^2\Om_{}(f_n^*\mid Q_n),
	\end{align*}
	where $C_2 =  T^{{4}/(2+\alpha)} 2^{{(14+\alpha)}/{(2+\alpha)}}$. 
	Thus we have shown that on the set $\mathcal{T}$ we have
	\begin{equation*}
	\frac{1}{2}\|\wh{f} - f_{n}^*\|_n^2 \le \max(8,C_1,C_2)\lambda_n^2\Om_{}(f_n^*\mid Q_n) = C_0\lambda_n^2\Om_{}(f_n^*\mid Q_n).
	\end{equation*}
	
	We have shown that with probability at least $1-c\exp\left\{-({T}/{c})^2\right\}$ we have the inequality 
	\begin{equation*}
	\frac{1}{2}\|\wh{f} - f_n^*\|_n^2 \le \max \left\{2\|f^0-f_n^*\|_n^2,\ C_0\lambda_n^2\Om_{}(f_n^*\mid Q_n)  \right\}
	\end{equation*}
	
	To complete the proof we note that 
	\begin{align*}
	\frac{1}{2}\|\wh{f} - f^0\|_n^2 &\le \|\wh{f} - f_ n^*\|_n^2 + \|f_n^*-f^0\|_n^2\\
	&\le   2\max \left\{2\|f^0-f_n^*\|_n^2,\ C_0\lambda_n^2\Om_{}(f_n^*\mid Q_n)  \right\} + \|f_n^*-f^0\|_n^2\\
	&\le \frac{5}{2}\max \left\{2\|f^0-f_n^*\|_n^2,\ C_0\lambda_n^2\Om_{}(f_n^*\mid Q_n)  \right\} .
	\end{align*}
\end{proof}

\end{document}